\numberwithin{equation}{section}
\DeclareMathOperator*{\degre}{deg}
\newcommand{\degp}[1]{\degre(#1)}
\title{Topological graph polynomials\\and quantum field theory}
\author{T. Krajewski$^{1,2}$, V. Rivasseau$^{1}$ and F. Vignes-Tourneret$^{3}$}
\begin{document}
\nobibliography*
\maketitle

\begin{center}
{\small\it 1) Laboratoire de Physique Th\'eorique, CNRS UMR 8627,\\
Universit\'e Paris XI,  F-91405 Orsay Cedex, France\\
\medskip
2) on leave, Centre de Physique Th\'eorique, CNRS UMR 6207\\
CNRS Luminy, Case 907, F-13288 Marseille Cedex 9\\
and Universit\'e de Provence\\
3 place Victor Hugo, F-13331 Marseille Cedex 3 France\\  
\medskip
3)Institut Camille Jordan, CNRS UMR 5208\\
Universit\'e Claude Bernard Lyon 1\\
43 boulevard du 11 novembre 1918\\
F-69622 Villeurbanne cedex France\\  

\medskip

E-mail: {\tt krajew@cpt.univ-mrs.fr, rivass@th.u-psud.fr,\\ vignes@math.univ-lyon1.fr}}
\end{center}

\bigskip

\begin{abstract}
We define a new topological polynomial extending the Bollob\'as-Riordan one, which obeys a four-term reduction relation
of the deletion/contraction type and has a natural behavior under partial duality.
This allows to write down a completely explicit combinatorial evaluation of the polynomials,
occurring in the parametric representation 
of the non-commutative Grosse-Wulkenhaar quantum field theory. An explicit solution 
of the parametric representation for commutative field theories based on the Mehler kernel is also provided.
\end{abstract}

\vfill
\noindent
CPT-P001-2010\\
LPT-ORSAY 10-02

\newpage

{ \footnotesize
  \tableofcontents}

\section*{Introduction}
\addcontentsline{toc}{section}{Introduction}
In \citep{Krajewski2008aa} the relation between the parametric
representation of Feynman graph amplitude
\citep{Nakanishi1971aa,Itzykson1980aa} and the universal topological polynomials of graph theory was explicited.
This was done for theories with ordinary propagators of the Laplace type, whose parametric 
representation is based on the heat kernel. These theories were defined either on 
ordinary flat commutative space or on noncommutative Moyal-Weyl flat noncommutative space. The parametric polynomials turned out to be evaluations of the multivariate version of the Tutte polynomial \citep[see][]{Sokal2005aa} in the commutative case and of the Bollob\'as-Riordan one in the noncommutative case \citep{Moffatt2008ab}. 

However heat-kernel based noncommutative theories such as the $\phi^{\star 4}_4$ model 
show a phenomenon called UV/IR mixing which usually prevents
them from being renormalizable. The first renormalizable noncommutative 
quantum field theory, the Grosse-Wulkenhaar model \citep{GrWu03-1,GrWu04-3,Rivasseau2005bh},
is based on a propagator made out of the Laplacian plus a harmonic potential,
hence the parametric representation of these models involve the Mehler kernel rather
than heat kernel. The physical interest of such theories also stems from the fact that
constant magnetic fields also induce such Mehler-type kernels. 

Since the Mehler kernel is quadratic in direct space, such theories have computable parametric representations but
which are more complicated than the ordinary ones. The corresponding topological polynomials were defined and first studied 
in \citep{gurauhypersyman}, then extended to covariant theories in   
\citep{Tanasa:2007fk}. However a global expression has been found only for the leading
part of these polynomials under rescaling and a full explicit solution 
was not found until now. This is what we provide here.

We have found that the corresponding universal polynomials, defined on ribbon graphs with flags, are not based on 
the usual contraction-deletion relations on ordinary graphs but on slightly generalized notions
which involve four canonical operations which act on them,
the usual deletion and contraction plus an anticontraction and a
superdeletion. These last two operations are analogous to contraction
and deletion, but create extra flags. Moreover, our new polynomial is covariant under Chmutov's partial duality \citep{Chmutov2007aa}, thus extending the invariance property of the multivariate  Bollob\'as-Riordan polynomial \citep{Vignestourneret2008aa}.\\

This paper is organized as follows. In section \ref{sec:ribbon-graphs}
the definitions of ribbon graphs (with flags) and of partial duality
are given.

Section \ref{sec:biject-betw-class} is a mathematical
prelude to the study of the polnomials defining the parametric
representation of the Grosse-Wulkenhaar model. There we define
bijections between several classes of sub(ribbon)graphs. 

In section \ref{sec:some-proofs-remarks} the new polynomial
is defined, together with its redution relation, relationship with
other known polynomials and properties under partial duality.

In section \ref{sec:feynm-ampl-grosse} the Grosse-Wulkenhaar model  and its parametric representation is recalled,
following closely the notations of \citep{gurauhypersyman}. 

In section \ref{sec:hyperb-polyn-as} we prove that the corresponding topological polynomials are particular evaluations
of the topological polynomial of section \ref{sec:some-proofs-remarks}.

In section \ref{sec:vari-limit-cases} various limits of the model are performed. The particular case of the 
commutative limit is computed and the corresponding commutative Mehler-based
Symanzik polynomials are written down.

\section{Ribbon graphs}
\label{sec:ribbon-graphs}

There are several equivalent definitions of ribbon graphs:
topological, combinatorial, in between. We will first give the
topological definition and some basic facts about ribbon graphs. Then
we will give a purely combinatorial definition which allows us to
slightly generalize ribbon graphs to ribbon graphs with flags.

\begin{rem}
  In the following, and unless explicitely stated, when we write
  \emph{graph}, the reader should read \emph{ribbon graph}. 
\end{rem}

\subsection{Basics}
\label{sec:basics}

A ribbon graph $G$ is a (not necessarily orientable) surface with boundary represented as the union of two sets of closed topological discs called vertices $V(G)$ and edges $E(G)$. These sets satisfy the following:
\begin{itemize}
  \item vertices and edges intersect by disjoint line segment,
  \item each such line segment lies on the boundary of precisely one vertex and one edge,
  \item every edge contains exactly two such line segments.
\end{itemize}
Figure \ref{RibbonEx1} shows an example of a ribbon graph. Note that
we allow the edges to twist (giving the possibility to the surfaces
associated to the ribbon graphs to be non-orientable). A priori an
edge may twist more than once but the polynomials we are going to consider only depend on the parity of the number of twists (this is indeed the relevant information to count the boundary components of a ribbon graph) so that we will only consider edges with at most one twist.
\begin{figure}[!htp]
  \centering
  \includegraphics[scale=.8]{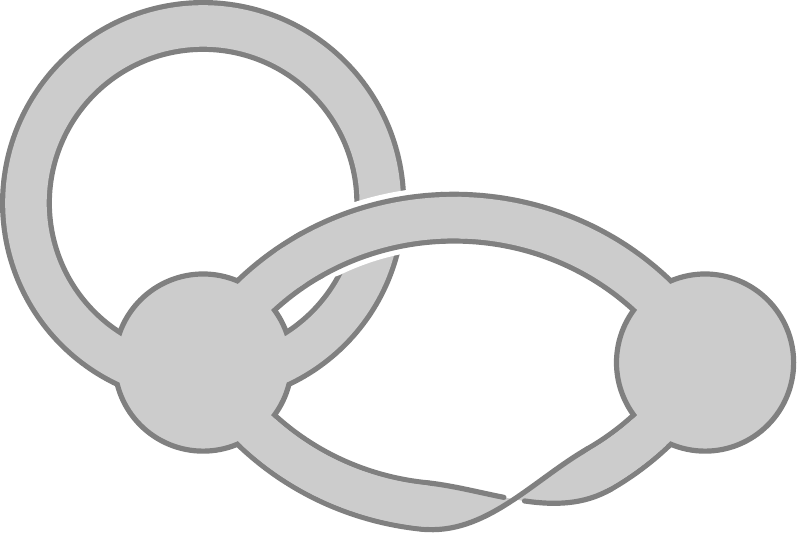}
  \caption{A ribbon graph}
  \label{RibbonEx1}
\end{figure}

\begin{defn}[Notations]\label{def:notations}
  Let $G$ be a ribbon graph. In the rest of this article, we will use
  the following notations:
\begin{itemize}
\item $v(G)=\card V(G)$ is the number of vertices of $G$,
\item $e(G)=\card E(G)$ is the number of edges of $G$,
\item $k(G)$ its number of components,
\item for all $E'\subset E(G)$, $F_{E'}$ is the spanning sub(ribbon)
  graph of $G$ the edge-set of which is $E'$ and
\item for all $E'\subset E(G)$, $E'^{c}\defi E(G)\setminus E'$.
\end{itemize}
\end{defn}

\paragraph{Loops}
\label{sec:loops}

Contrary to the graphs, the ribbon graphs may contain four different
kinds of loops. A loop may be \textbf{orientable} or not, a
\textbf{non-orientable} loop being a twisting edge. Let us consider
the general situations of figure \ref{fig:loopRibbon}. The boxes $A$
and $B$ represent any ribbon graph so that the picture \ref{OrLoop}
(resp.\@ \ref{NonOrLoop}) describes any ribbon graph $G$ with an
orientable (resp.\@ a non-orientable) loop $e$ at vertex $v$. A loop is
said \textbf{nontrivial} if there is a path in $G$ from $A$ to $B$
which is not the trivial path only made of $v$. If not the loop is called \textbf{trivial} \citep{Bollobas2002aa}.
\begin{figure}[!htp]
  \centering
  \subfloat[An orientable loop]{{\label{OrLoop}}\includegraphics[scale=.8]{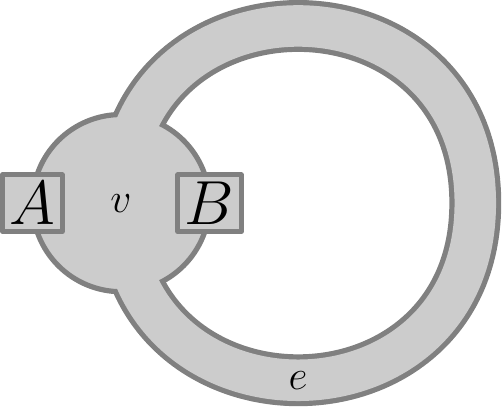}}\hspace{2cm}
  \subfloat[A non-orientable loop]{{\label{NonOrLoop}}\includegraphics[scale=.8]{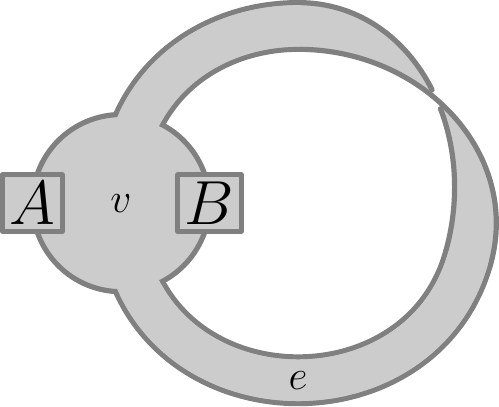}}
  \caption{Loops in ribbon graphs}
  \label{fig:loopRibbon}
\end{figure}

\subsection{Combinatorial maps}
\label{sec:combinatorial-maps}

Based on \cite{Tutte1984aa}, we slightly generalize the notion of combinatorial
map to combinatorial map with flags. We will use it as a
(purely combinatorial) definition for (possibly non-orientable) ribbon
graphs with flags.

\begin{defn}[Combinatorial map with flags]\label{def:CombMaps}
  Let $X$ be a finite set of even cardinality. Its members are called \emph{crosses}. A combinatorial map with flags on $X$ is a triple $(\sigma_{0},\theta,\sigma_{1})$ of permutations on $X$ which obey the following axioms:
  \begin{enumerate}
    \renewcommand{\labelenumi}{A.\theenumi}
  \item $\theta^{2}=\sigma_{1}^{2}=\id$ and $\theta\sigma_{1}=\sigma_{1}\theta$.\label{ax:edge1}
  \item $\theta$ is fixed-point free. Moreover if $x$ is any cross, $\theta x$ and $\sigma_{1}x$ are distinct.\label{ax:edge2}
  \item $\sigma_{0}\theta =\theta\sigma_{0}^{-1}$.\label{ax:vertex1}
  \item For each cross $x$, the orbits of $\sigma_{0}$ through $x$ and $\theta x$ are distinct.\label{ax:vertex2}
  \end{enumerate}
\end{defn}

Let us now explain why such a combinatorial map describes a ribbon
graph $G$ with flags. The involution $\theta$ being fixed-point free,
the set $X$ is partitioned into pairs of the form $\lb x,\theta x\rb$,
namely the orbits of $\theta$. The involution $\sigma_{1}$ may have
fixed points. Note that if $x$ is a fixed point of $\sigma_{1}$, so is
$\theta x$ because $\theta$ and $\sigma_{1}$ commute, see axiom \ref{ax:edge1}. The pairs $\lb x,\theta x\rb$ of fixed points of $\sigma_{1}$ form the set $F(G)$ of \textbf{flags} of $G$.

Let us denote by $F_{X}$ the set of fixed points of $\sigma_{1}$. Then
$X\setminus F_{X}\fide H_{X}$ has a cardinality which is a multiple of
$4$. $H_{X}$ is partitioned into the orbits of $\theta$. The set
$H(G)$ of pairs $\lb x,\theta x\rb,\,x\in H_{X}$ is the set of
\textbf{half-edges} of $G$. $H_{X}$ can also be partitioned into the
orbits of the group $E_{G}$ generated by $\theta$ and
$\sigma_{1}$. Each orbit is of the form $\lb x,\theta x, \sigma_{1}
x,\sigma_{1}\theta x\rb$. Thanks to axiom \ref{ax:edge2}, the members
of a given orbit are all distinct. Each orbit contains two distinct
half-edges and is therefore called an \textbf{edge}. We write $E(G)$
for the set of orbits of $E_{G}$ on $H_{X}$. It is the set of edges of
$G$.

The elements of the set $\HR(G)\defi F(G)\cup H(G)$ made of the orbits of $\theta$ on
$X$ are called \textbf{half-ribbons}.

Finally we describe the vertices of $G$. $\sigma_{0}$ being a permutation, $X$ can be partitioned into its cycles. Each cycle is of the form $C(\sigma_{0},x)\defi (x,\sigma_{0}x,\dotsc,\sigma_{0}^{m-1}x)$ where $m$ is the least integer such that $\sigma_{0}^{m}x=x$. Thanks to axiom \ref{ax:vertex2}, the cycles through $x$ and $\theta x$ are distinct. But they have the same length. Indeed $\sigma_{0}^{m}x=x\iff \theta x=\theta\sigma_{0}^{-m}x\iff \theta x=\sigma_{0}^{m}\theta x$ thanks to axiom \ref{ax:vertex1}. The cycle $C(\sigma_{0},\theta x)$ can be formed from $C(\sigma_{0},x)$:
\begin{align}
  C(\sigma_{0},\theta x)=&(\theta x,\sigma_{0}\theta x,\dotsc,\sigma_{0}^{m-1}\theta x)\\
  =&(\theta x,\theta\sigma_{0}^{-1}x,\dotsc,\theta\sigma_{0}^{-m+1}x)\\
  =&(\theta x,\theta\sigma_{0}^{m-1}x,\dotsc,\theta\sigma_{0}x).
\end{align}
Thus $C(\sigma_{0},\theta x)$ is formed from $C(\sigma_{0},x)$ by reversing the cyclic order of the elements and then premultiplying each by $\theta$. We express this relation by saying that $C(\sigma_{0},x)$ and $C(\sigma_{0},\theta x)$ are \emph{conjugate}. A pair of conjugate orbits of $\sigma_{0}$ is called a \textbf{vertex} of $G$.\\

We now examplify the previous definition with the ribbon graph $G$ of
figure \ref{fig:RibbonFlags}. The set of crosses is
$X=[1,12]\cap\bZ$. Using the cyclic representation, the three
permutations defining this graph are:
\begin{subequations}
  \begin{align}
    \sigma_{0}=&(1,3)(4,2)(6,9,11,8)(5,7,12,10),\\
    \theta=&(1,2)(3,4)(5,6)(7,8)(9,10)(11,12),\\
    \sigma_{1}=&(1,5)(2,6)(3,8)(4,7)(9)(10)(11)(12).
  \end{align}
\end{subequations}
\begin{figure}[!htp]
  \centering
  \includegraphics[scale=.8]{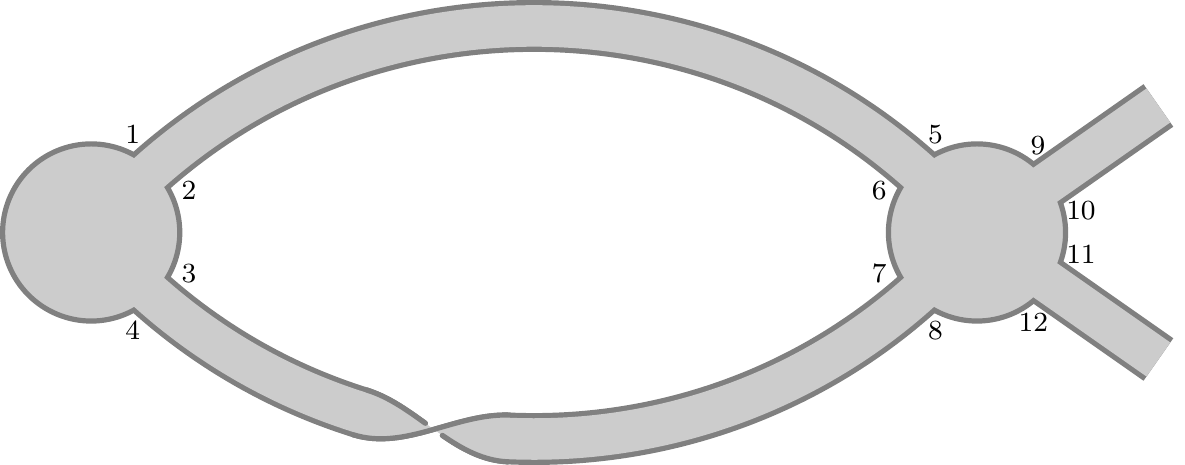}
  \caption{A ribbon graph $G$ with flags}
  \label{fig:RibbonFlags}
\end{figure}
As noticed above, the set $X$ is partitioned into pairs which are the
orbits of $\theta$. Those pairs which are fixed by $\sigma_{1}$ are
called flags:
\begin{align}
  F(G)=&\set{\set{9,10},\set{11,12}}.
\end{align}
The half-edges of $G$ are the orbits of $\theta$ which are not fixed
by $\sigma_{1}$:
\begin{align}
  H(G)=&\set{\set{1,2},\set{3,4},\set{5,6},\set{7,8}}
  \intertext{and the edges of $G$ are thus}
  E(G)=&\set{\set{1,2,5,6},\set{3,4,7,8}}.
\end{align}
Finally, $G$ has two vertices:
\begin{align}
  v_{1}=&\set{(1,3),(2,4)},\quad v_{2}=\set{(6,9,11,8),(5,7,12,10)}.
\end{align}

\begin{rem}
  A ribbon graph \emph{without} flag is represented by three
  permutations $\sigma_{0},\theta$ and $\sigma_{1}$ obeying
  definition \ref{def:CombMaps} with, in addition, $\sigma_{1}$
  fixed-point free.
\end{rem}

\begin{defn}[Subgraphs]\label{def:subgraphs}
  Let $G$ be a ribbon graph, possibly with flags. A subgraph of $G$
  consists in a graph, the edge-set of which is a subset of
  $E(G)$ and the flag-set of which is a subset of $F(G)$. A \emph{cutting}
  subgraph of $G$ is a graph the half-ribbon-set of which is a subset
  of $\HR(G)$. By convention, if the half-ribbon set of a cutting
  spanning subgraph contains the two halfs of an edge, the subgraph
  contains this edge. The set of spanning (cutting) subgraph of $G$ is
  $\Subg(G)$ ($\SubgF(G)$).

  Moreover if the edges and flags of $G$ are labelled, the (cutting)
  subgraphs of $G$ inherit the labels of $G$ with the following
  convention: the two half-edges of a given edge of $G$ share the same
  label in the cutting subgraphs of $G$.

\end{defn}
In contrast to a subgraph, a cutting subgraph may have flags coming
both from the flags of $G$ and from half-edges of $G$. Each edge of
$G$ is made of two half-edges. A subgraph contains, in particular, some
of the edges of $G$ whereas a cutting subgraph may contain a half-edge
of an edge without taking the other half, see figure
\ref{fig:subgraphs} for examples.
\begin{figure}[!htp]
  \centering
  \subfloat[A subgraph of the graph of figure \ref{RibbonEx1}]{{\label{fig:subgraphEx}}\includegraphics[scale=.6]{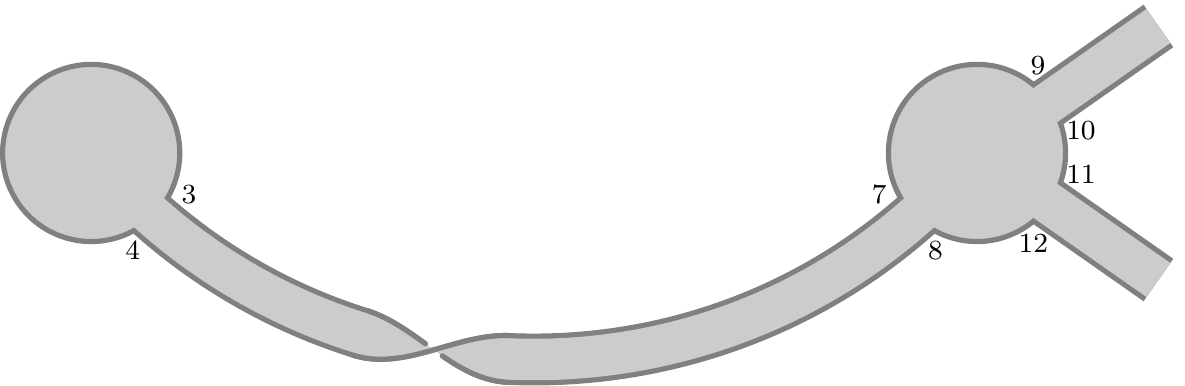}}\hspace{1cm}
  \subfloat[A cutting subgraph of the graph of figure \ref{RibbonEx1}]{{\label{fig:CutSubgraph}}\includegraphics[scale=.6]{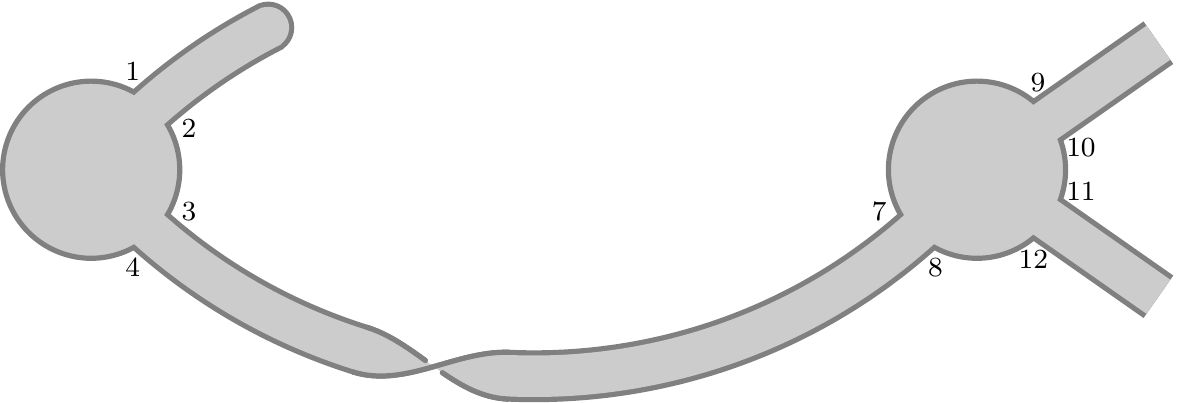}}\\
  \caption{Subgraphs}
  \label{fig:subgraphs}
\end{figure}

\subsection{Operations on edges}
\label{sec:operations-edges}

From a ribbon graph with flags, we can define two other ribbon graphs
with flags either by deleting an edge or by cutting it:
\begin{defn}[Operations on ribbon graphs with flags]\label{def:operationsEdges}
  Let $G$ be a ribbon graph with flags and $e\in E(G)$ any of its edges. We define the two following operations:
  \begin{itemize}
  \item the \textbf{deletion} of $e$, written $G-e$,
  \item the \textbf{cut} of $e$, written $G\cut e$, which consists in replacing $e$ by two flags attached at the former end-vertices (or end-vertex) of $e$, respecting the cyclic order at these (this) vertices (vertex).
  \end{itemize}
\end{defn}
In the combinatorial map representation of a ribbon graph $G$, an edge $e$
corresponds to a set of four crosses:
$e=\set{x_{1},x_{2},x_{3},x_{4}},\,\forall\, 1\les i\les 4,\,x_{i}\in
X(G)$. The graph $G-e$ has $X\setminus e$ as set of crosses and the
restriction of $\sigma_{0},\theta$ and $\sigma_{1}$ to $X\setminus e$
as defining permutations.

Let $\phi$ be any member of the group generated by $\sigma_{0},\theta$ and $\sigma_{1}$. For any subset $E'\subset X$, we let $\phi_{E'}$ be the following map:
\begin{align}
  \phi_{E'}\defi&%
  \begin{cases}
    \phi&\text{on }E',\\
    \id&\text{on }X\setminus E'\fide\bar{E'}.
  \end{cases}
\end{align}
The graph $G\cut e$ is defined on the same crosses as $G$ and given by
$\sigma_{0},\theta$ and $\sigma_{1X'}$ where $X'=X\setminus e$. For
example, considering the ribbon graph of figure \ref{fig:RibbonFlags},
and if $e=\set{1,2,5,6}$, $G\cut e$ is the ribbon graph, with flags, of figure \ref{fig:ExCut}.
\begin{figure}[!htp]
  \centering
  \includegraphics[scale=.8]{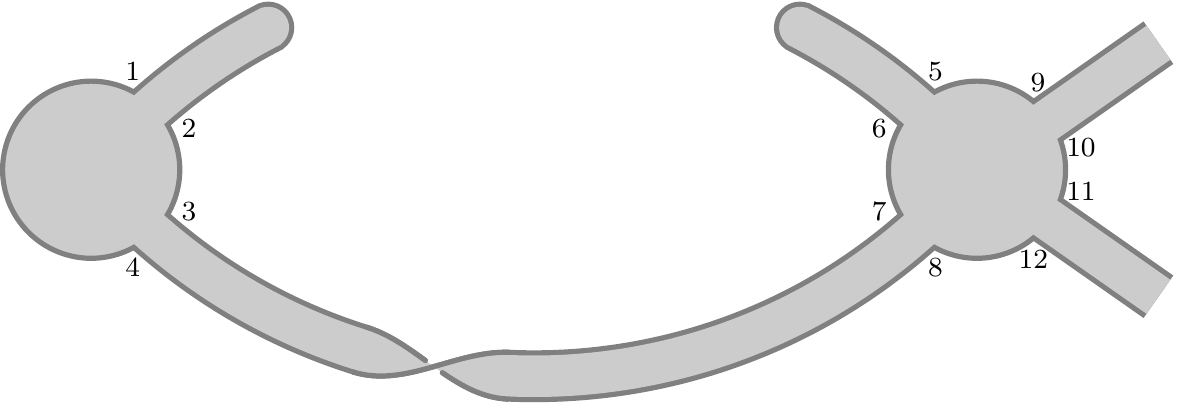}
  \caption{Cutting an edge}
  \label{fig:ExCut}
\end{figure}

\subsection{Natural duality}
\label{sec:natural-duality}

For ribbon graphs without flags, there is a well-known notion of duality, hereafter
called natural duality, also known as Euler-Poincaré duality. From a
given ribbon graph $G$, it essentially consists in forming another
ribbon graph $G^\star$, the vertices of which are the faces (or
boundary components) of $G$ and the faces of which are the vertices of
$G$. The edges of $G^\star$ are in bijection with those of $G$.

Every ribbon graph can be drawn on a surface of minimal genus such
that no two of its edges intersect. To build the dual $G^{\star}$ of
$G$, first draw $G$ on such a surface. Then place a vertex into each
face of $G$. Each such face is homeomorphic to a disk. Then draw an
edge between two vertices of $G^{\star}$ each time the corresponding
faces of $G$ are separated by an edge in $G$.

At the combinatorial level, if $G=(\sigma_{0},\theta,\sigma_{1})$,
then $G^{\star}=(\sigma_{0}\theta\sigma_{1},\sigma_{1},\theta)$, the cycles of $\sigma_{0}\theta\sigma_{1}$ representing the faces of $G$. If
$G$ has flags, we define its natural dual $G^{\star}$ by $(\sigma_{0}\theta_{H_{X}}\sigma_{1},\sigma_{1H_{X}}\theta_{F_{X}},\theta_{H_{X}}\sigma_{1F_{X}})$, see
figure \ref{fig:NatDualEx} for an example.
\begin{figure}[!htp]
  \centering
  \includegraphics[scale=.9]{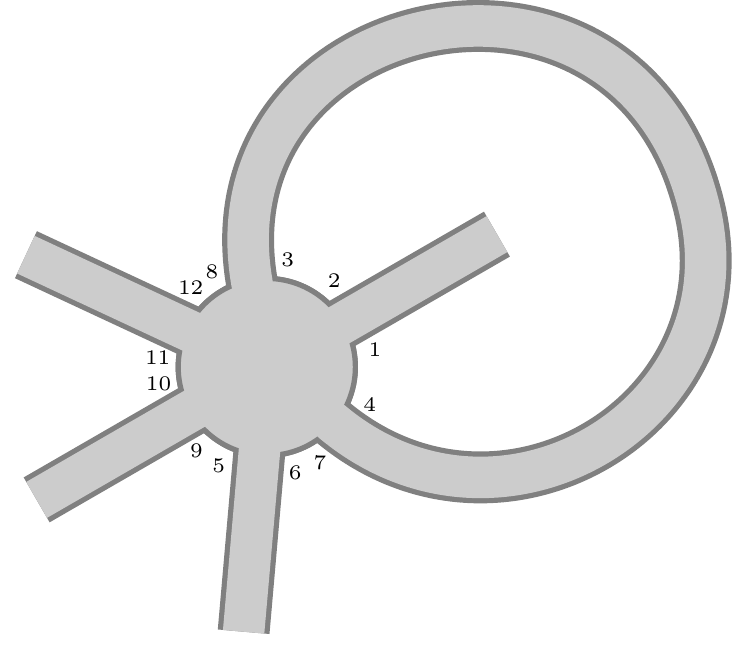}
  \caption{The natural dual of the graph of figure \ref{fig:ExCut}}
  \label{fig:NatDualEx}
\end{figure}

\subsection{Partial duality}
\label{sec:partial-duality}

S.~Chmutov introduced a new ``generalised duality'' for ribbon graphs
which generalises the usual notion of duality (see \citep{Chmutov2007aa}). In \citep{Moffatt2008aa}, I.~Moffatt renamed this new duality as ``partial duality''. We adopt this designation here. We now describe the construction of a partial dual graph and give a few properties of the partial duality.\\

Let $G$ be a ribbon graph and $E'\subset E(G)$. Let
$\check{F}_{E'}\defi G\cut E'^{c}$ be the
spanning subgraph with flags of $G$, the edge-set of which is $E'$ and
the flag-set of which is made of the cut edges in
$E'^{c}=E(G)\setminus E'$. We will
construct the dual $G^{E'}$ of $G$ with respect to the edge-set $E'$. The general idea is the
following. We consider the spanning subgraph with flags
$\check{F}_{E'}$. Then we
build its natural dual $\check{F}_{E'}^{\star}$. Finally we reglue the
edges previously cut in $E'^{c}$.\\

More precisely, at the level of the combinatorial maps, the
construction of the partial dual $G^{E'}$ of $G$ goes as follows:
\begin{center}
  \begin{tikzpicture}[>=stealth]
    \definecolor{fabgray}{gray}{.6};
    \foreach \u in {9} 
    {
    \node (un) at (0,0) [shape=rectangle]
    {$G=(\sigma_{0},\theta,\sigma_{1})$};
    \node (deux) at (\u,0) [shape=rectangle]
    {$\check{F}_{E'}=(\sigma_{0},\theta,\sigma_{1E'})$};
    \node (trois) at (\u,-.5*\u) [shape=rectangle] {$\check{F}^{\star}_{E'}=(\sigma_{0}\theta_{E'}
      \sigma_{1E'},\sigma_{1E'}\theta_{E'^{c}},\theta_{E'})$};
    \node (quatre) at (0,-.5*\u) [shape=rectangle] {$G^{E'}=(\sigma_{0}\theta_{E'}
      \sigma_{1E'},\sigma_{1E'}\theta_{E'^{c}},\sigma_{1E'^{c}}\theta_{E'})$};
    \draw [thick,->,fabgray] (un.east)--(deux.west) node [above,text width=3cm,text
    centered,midway] {\color{black} cut};
    \draw [thick,->,fabgray] (deux.south)--(trois.north) node [right,text width=3cm,text
    centered,midway] {\color{black} natural duality};
    \draw [thick,->,fabgray] (trois.west)--(quatre.east) node [above,text width=3cm,text
    centered,midway] {\color{black} glue};
    \draw [thick,->,fabgray] (un.south)--(quatre.north) node [left,text width=3cm,text
    centered,midway] {\color{black} partial duality};
  }
  \end{tikzpicture}
\end{center}
Figure \ref{PartDualEx} shows an example of the construction of a
partial dual. The direct ribbon graph is drawn on figure
\ref{fig:ribbonEx}. We choose $E'=\set{\set{3,4,7,8}}$ and the
subgraph $\check{F}_{E'}$ is depicted on figure
\ref{fig:ribbonExReprArrow}. Its natural dual $\check{F}^{\star}_{E'}$
  is on figure \ref{fig:BoundComp}. Finally the partial dual $G^{E'}$
  of $G$ is shown on figure \ref{fig:DualExComb}.
\begin{figure}[!htp]
  \centering
  \subfloat[A ribbon graph $G$]{{\label{fig:ribbonEx}}\includegraphics[scale=.6]{ribbons-1}}\hspace{1cm}
  \subfloat[The subgraph $\check{F}_{E'}$ with $E'=\set{\set{3,4,7,8}}$]{{\label{fig:ribbonExReprArrow}}\includegraphics[scale=.6]{ribbons-2}}\\
  \subfloat[The natural dual $\check{F}^{\star}_{E'}$ of $\check{F}_{E'}$]{{\label{fig:BoundComp}}\includegraphics[scale=.6]{ribbons-4}}\hspace{1cm}
  \subfloat[The partial dual $G^{E'}$ of $G$]{{\label{fig:DualExComb}}\includegraphics[scale=.6]{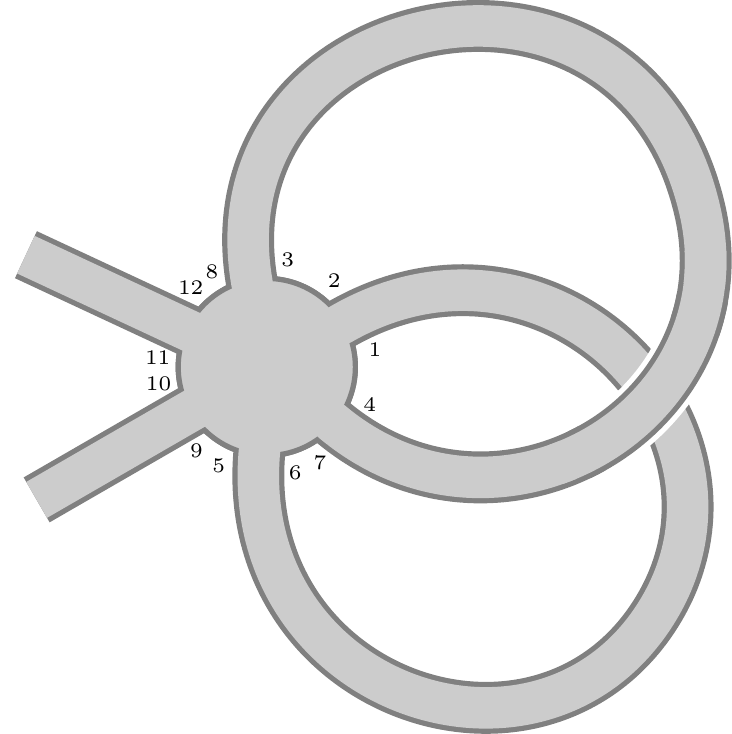}}
  \caption{Construction of a partial dual}
  \label{PartDualEx}
\end{figure}\\

S.~Chmutov proved among other things the following basic properties of the partial duality:
\begin{lemma}[\citep{Chmutov2007aa}]
  \label{lem:SimpleProp}
  For any ribbon graph $G$ and any subset of edges $E'\subset E(G)$, we have
  \begin{itemize}
  \item $(G^{E'})^{E'}=G$,
  \item $G^{E(G)}=G^{\star}$ and
  \item if $e\notin E'$, then $G^{E'\cup\{e\}}=(G^{E'})^{\{e\}}$.
  \end{itemize}
\end{lemma}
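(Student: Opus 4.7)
The plan is to verify each identity directly at the level of combinatorial triples, using the explicit formula
$$G^{E'} = \bigl(\sigma_{0}\theta_{E'}\sigma_{1E'},\ \sigma_{1E'}\theta_{E'^{c}},\ \sigma_{1E'^{c}}\theta_{E'}\bigr)$$
derived in the cut--dualise--glue diagram. The whole argument rests on four algebraic observations, all immediate from axioms A.\ref{ax:edge1}--A.\ref{ax:edge2}: first, $\theta_{E''}^{2}=\sigma_{1E''}^{2}=\id$ for every subset $E''\subset X$; second, $\theta_{E_{1}}$ and $\sigma_{1E_{2}}$ commute for arbitrary $E_{1},E_{2}$, because $\theta$ and $\sigma_{1}$ themselves commute and the localisations preserve this; third, for disjoint $E_{1},E_{2}$ one has $\theta_{E_{1}}\theta_{E_{2}}=\theta_{E_{1}\cup E_{2}}$ and similarly for $\sigma_{1}$; and fourth, every edge is stable under the $\langle\theta,\sigma_{1}\rangle$--action, so the localisations $(\cdot)_{E'}$ behave coherently with compositions that permute the crosses of each edge internally.

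For the involutivity $(G^{E'})^{E'}=G$, I write $G^{E'}=(\bar\sigma_{0},\bar\theta,\bar\sigma_{1})$ and first read off the restrictions. On $E'$ the roles of $\theta$ and $\sigma_{1}$ get exchanged, so that $\bar\theta|_{E'}=\sigma_{1}$ and $\bar\sigma_{1}|_{E'}=\theta$; on $E'^{c}$ they are unchanged. This gives $\bar\theta_{E'}=\sigma_{1E'}$, $\bar\sigma_{1E'}=\theta_{E'}$, $\bar\theta_{E'^{c}}=\theta_{E'^{c}}$, $\bar\sigma_{1E'^{c}}=\sigma_{1E'^{c}}$. Substituting into the formula a second time, the first entry collapses by the involution relations to
$$\sigma_{0}\theta_{E'}\sigma_{1E'}\cdot\sigma_{1E'}\cdot\theta_{E'}=\sigma_{0},$$
while the second and third entries reassemble by the disjoint-union rule into $\theta_{E'}\theta_{E'^{c}}=\theta$ and $\sigma_{1E'^{c}}\sigma_{1E'}=\sigma_{1}$ respectively.

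For the identity $G^{E(G)}=G^{\star}$, I specialise to $E'=E(G)$; as a subset of crosses this is $H_{X}$, so $E'^{c}=\emptyset$ and therefore $\theta_{E'^{c}}=\sigma_{1E'^{c}}=\id$. The triple reduces to $(\sigma_{0}\theta_{H_{X}}\sigma_{1H_{X}},\sigma_{1H_{X}},\theta_{H_{X}})$, which matches the natural-duality formula once one uses that $\sigma_{1}$ fixes $F_{X}$ pointwise (so $\sigma_{1F_{X}}=\id$ and $\sigma_{1}=\sigma_{1H_{X}}$) and that the $\theta$-action on $F_{X}$ is kept untouched by the partial dual construction. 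For the third property, I compute $(G^{E'})^{\{e\}}$ entry-by-entry. Since $e\notin E'$, on $\{e\}$ the permutations $\bar\theta$ and $\bar\sigma_{1}$ still agree with $\theta$ and $\sigma_{1}$, giving $\bar\theta_{\{e\}}=\theta_{\{e\}}$ and $\bar\sigma_{1\{e\}}=\sigma_{1\{e\}}$; the additivity of the subscripts then turns each factor into the corresponding factor of $G^{E'\cup\{e\}}$, using disjoint-support commutation to reorder $\theta_{E'}\sigma_{1E'}\theta_{\{e\}}\sigma_{1\{e\}}=\theta_{E'\cup\{e\}}\sigma_{1,E'\cup\{e\}}$.

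The main obstacle is purely bookkeeping: in step (i) one must correctly identify $\bar\theta$ and $\bar\sigma_{1}$ on $E'$ versus $E'^{c}$ before re-applying the formula, and in step (iii) one must verify that every rearrangement performed is justified by the commutation of $\theta$--factors and $\sigma_{1}$--factors whose supports are disjoint. Everything else is a mechanical consequence of axioms A.\ref{ax:edge1}--A.\ref{ax:edge2}, and flags do not interfere because $\sigma_{1}$ is identity on $F_{X}$ and the restriction conventions leave the flag part inert.
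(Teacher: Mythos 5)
Your proof is correct and follows essentially the same route as the paper, which likewise verifies these identities by direct algebraic computation on the combinatorial triple $G^{E'}=(\sigma_{0}\theta_{E'}\sigma_{1E'},\sigma_{1E'}\theta_{E'^{c}},\sigma_{1E'^{c}}\theta_{E'})$, reading off the restrictions of the dual's permutations to $E'$ and $E'^{c}$ and simplifying with the involution and disjoint-support commutation relations. The paper presents the first and third items exactly this way (the second being immediate from the definitions), so there is nothing substantively different to compare.
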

His proof relies on graphical and commonsensical arguments. Here we
would like to point out that the combinatorial map point of view allows
very direct algebraic proofs.

For example, an interesting exercise consists in checking that the partial duality is an involution, namely that $(G^{E'})^{E'}=G$:\\ $(G^{E'})^{E'}=\lbt\sigma_{0}\theta_{E'}\sigma_{1E'}(\sigma_{1E'}\theta_{E'^{c}})_{E'}(\sigma_{1E'^{c}}\theta_{E'})_{E'},(\sigma_{1E'^{c}}\theta_{E'})_{E'}(\sigma_{1E'}\theta_{E'^{c}})_{E'^{c}},(\sigma_{1E'^{c}}\theta_{E'})_{E'^{c}}(\sigma_{1E'}\theta_{E'^{c}})_{E'}\rbt=(\sigma_{0},\theta,\sigma_{1})$.\\

We can also prove that for any subset $E'\subset E(G)$ and any $e\in E'^{c}$, $(G^{E'})^{\{e\}}=G^{E'\cup\{e\}}$.

\begin{proof}
  We define $E''\defi E'\cup\{e\}$.
  \begin{align}
    G^{E''}=&(\sigma_{0}\theta_{E''}\sigma_{1E''},\sigma_{1E''}\theta_{E''^{c}},\sigma_{1E''^{c}}\theta_{E''}) \\
    \sigma_{0}\big((G^{E'})^{\{e\}}\big)=&\sigma_{0}\theta_{E'}\sigma_{1E'}(\sigma_{1E'}\theta_{E'^{c}})_{e}(\sigma_{1E'^{c}}\theta_{E'})_{e}=\sigma_{0}\theta_{E'}\sigma_{1E'}\theta_{e}\sigma_{1e}=\sigma_{0}\theta_{E''}\sigma_{1E''} \\
    \theta\big((G^{E'})^{\{e\}}\big)=&(\sigma_{1E'^{c}}\theta_{E'})_{e}(\sigma_{1E'}\theta_{E'^{c}})_{e^{c}}=\sigma_{1e}\sigma_{1E'}\theta_{E'^{c}\setminus\{e\}}=\sigma_{1E''}\theta_{E''^{c}}\\
    \sigma_{1}\big((G^{E'})^{\{e\}}\big)=&(\sigma_{1E'^{c}}\theta_{E'})_{e^{c}}(\sigma_{1E'}\theta_{E'^{c}})_{e}=\sigma_{1E'^{c}\setminus\{e\}}\theta_{E'}\theta_{e}=\sigma_{1E''^{c}}\theta_{E''}
  \end{align}
\end{proof}

The partial duality allows an interesting and fruitful definition of the contraction of an edge:
\begin{defnb}[Contraction of an edge \citep{Chmutov2007aa}]\label{def:Contraction}
  Let $G$ be a ribbon graph and $e\in E(G)$ any of its edges. We define the contraction of $e$ by:
  \begin{align}
    G/e\defi&G^{\{e\}}-e.\label{eq:ContractionDef}
  \end{align}
\end{defnb}
From the definition of the partial duality, one easily checks that,
for an edge incident with two different vertices, the definition
\ref{def:Contraction} coincides with the usual intuitive contraction
of an edge. The contraction of a loop depends on its orientability,
see figures \ref{fig:OrLoopContraction} and
\ref{fig:NonOrLoopContraction}.

Different definitions of the contraction of a loop have been used in
the litterature. One can define $G/e\defi G-e$. In
\citep{Huggett2007aa}, S.~Huggett and I.~Moffatt give a definition
which leads to surfaces which are not ribbon graphs anymore. The
definition \ref{def:Contraction} maintains the duality between
contraction and deletion.
\begin{figure}[!htp]
  \centering
  \begin{tikzpicture}
    \foreach \u in {9} 
    {
    \node (un) at (0,0) [shape=rectangle,label={[text width=5cm,text centered]below:A ribbon graph $G$ with an orientable loop $e$}]
    {\includegraphics[scale=.8]{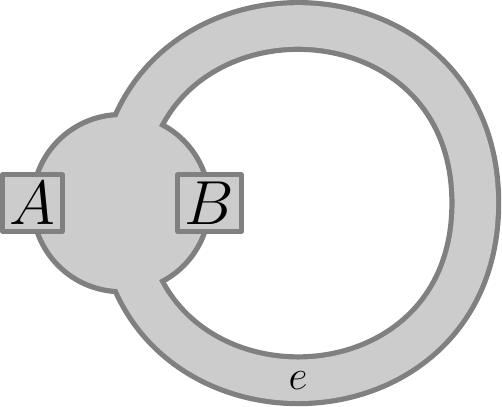}};
    \node (deux) at (\u,0) [shape=rectangle,label=-70:$G^{\{e\}}$]
    {\includegraphics[scale=.8]{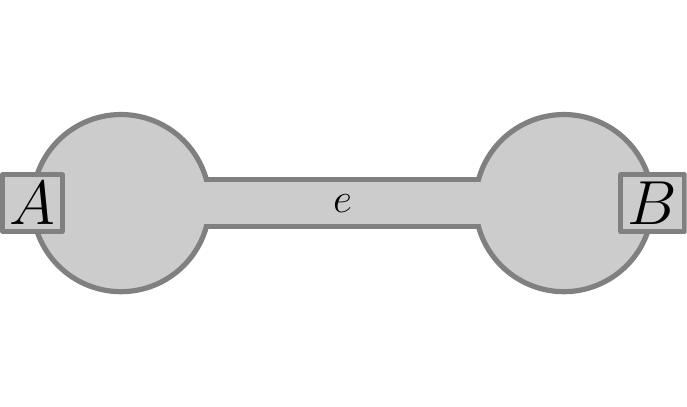}};
    \node (trois) at (\u/2,-2*\u/3) [shape=rectangle,label=below:${G/e=G^{\{e\}}-e}$]%
    {\includegraphics[scale=.8]{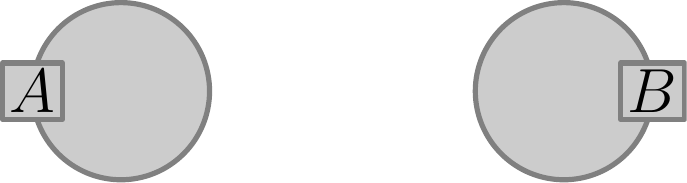}};
    
    \draw [thick,->] (un.east)--(deux.west) 
    ;
    \draw [thick,->] (deux) to [out=-90,in=90] (trois);
}
  \end{tikzpicture}
  \caption{Contraction of an orientable loop}
  \label{fig:OrLoopContraction}
\end{figure}
\begin{figure}[!htp]
  \centering
  \begin{tikzpicture}
    \foreach \u in {9} 
    {
    \node (un) at (0,0) [shape=rectangle,label={[text width=5cm,text centered]below:A ribbon graph $G$ with a non-orientable loop $e$}]
    {\includegraphics[scale=.8]{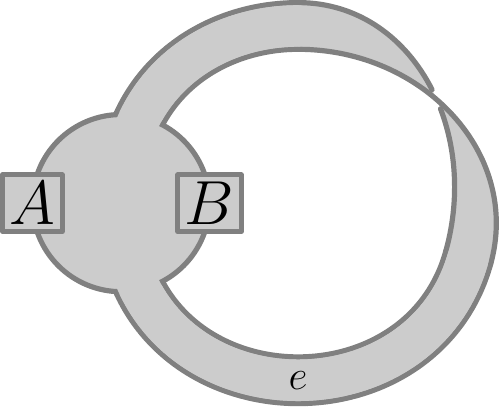}};
    \node (deux) at (\u,0) [shape=rectangle,label=-70:$G^{\{e\}}$]
    {\includegraphics[scale=.8]{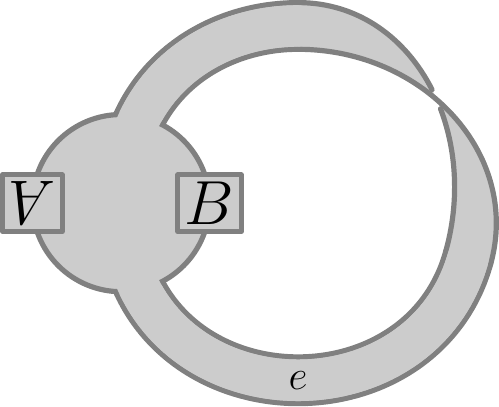}};
    \node (trois) at (\u/2,-2*\u/3) [shape=rectangle,label=below:${G/e=G^{\{e\}}-e}$]%
    {\includegraphics[scale=.8]{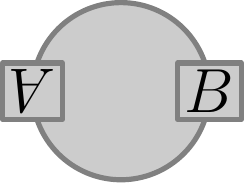}};
    
    \draw [thick,->] (un.east)--(deux.west) 
    ;
    \draw [thick,->] (deux) to [out=-90,in=90] (trois);
}
  \end{tikzpicture}
  \caption{Contraction of a non-orientable loop}
  \label{fig:NonOrLoopContraction}
\end{figure}

\section{Bijections between classes of subgraphs}
\label{sec:biject-betw-class}

This section consists in a mathematical preliminary to the study of
the $\HU$ polynomial, introduced in section
\ref{sec:feynm-ampl-grosse}. This ribbon graph invariant is a key
ingredient of the parametric representation of the Grosse-Wulkenhaar
model amplitudes.\\

Let $G$ be a ribbon graph. For any $E'\subset E(G)$, there
is a natural bijection between $E(G)$ and $E(G^{E'})$. This leads to a bijection between the spanning
subgraphs of $G$ and those of $G^{E'}$. In particular, it is true for
$E'=\set e$ with $e\in E(G)$. Representing a bijective map
by the following symbol $\bij$, we have:
\begin{align}
  s:\Subg(G)\bij&\ \Subg(G^{\set e}),\ |\Subg(G)|=|\Subg(G^{\set e})|=2^{|E(G)|}.\label{eq:Bijsubg}
\end{align}
The map $s$ extends trivially on ribbon graphs with flags. In the
following, we will be interested in maps betweens different classes of
subgraphs. We are going to generalize $s$ to odd and even
(cutting) (colored) subgraphs. A special case of these bijections will
be used in section \ref{sec:crit-model-omeg1} to prove the factorization of the
polynomial $\HU(G;\bt,\mathbf{1})$.

\subsection{Subgraphs of fixed parity}
\label{sec:parity-fixed-subgr}

\begin{defn}[Odd and even graphs]
  A (ribbon) graph (with flags) is said of fixed parity if all the
  degrees of its vertices have the same parity. It is odd (resp.\@
  even) if all its vertices are of
  odd (resp.\@ even) degrees. Given a ribbon graph $G$, with or without flags, we denote by
  $\Odd(G)$ (resp.\@ $\Ev(G)$) the set of odd (resp.\@ even) spanning
  subgraphs of $G$.
\end{defn}
We would like to know if the bijection $s$ of equation
(\ref{eq:Bijsubg}) preserves the subclasses of odd (even)
subgraphs. It is easy to see that it is not the case, as the following
example shows.\\

Let us consider the ribbon graph $G$ made of two vertices and two
edges between those two vertices. $G$ is sometimes called a (planar)
$2$-banana, see figure \ref{fig:2banana}.
\begin{figure}[!htp]
  \centering
  \subfloat[$G=$ a
  $2$-banana]{{\label{fig:2banana}}\includegraphics[scale=.8]{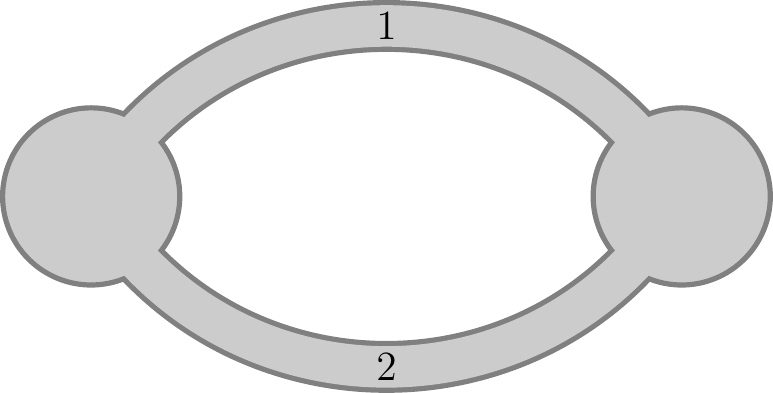}}\hspace{2cm}
  \subfloat[$G^{\set 1}=$ a non-planar
  $8$]{{\label{fig:NP8}}\includegraphics[scale=.8]{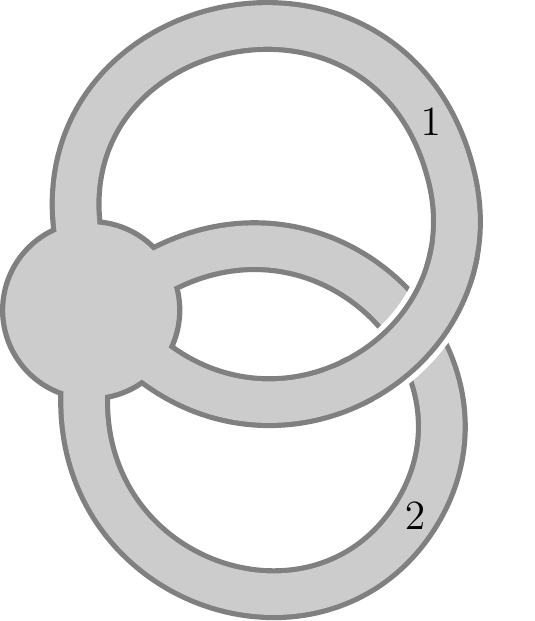}}
  \caption{Partial duals}
  \label{fig:ExBijSubg}
\end{figure}
We have $\Odd(G)=\set{\set{1},\set{2}}$,
$\Ev(G)=\set{\emptyset,\set{1,2}}$ whereas $\Odd(G^{\set
  1})=\emptyset$ and $\Ev(G^{\set 1})=\set{\emptyset,\set 1,\set
  2,\set{1,2}}$. This means that there exist graphs and edges such
that $s$ does not preserve the classes of odd and even subgraphs. Note
however that there may be graphs $G$ and/or subsets $E'\subset E(G)$
such that the natural bijection between subgraphs of $G$ and $G^{E'}$
let the classes $\Odd$ and $\Ev$ invariant. This is trivially the case
for self-dual graphs $G$ and $E'=E(G)$. Classifying the graphs and
subsets of edges such that $s$ let some classes of subgraphs invariant
clearly deserves further study. Nevertheless, here, we will restrict
ourselves to bijections valid for any $G$ and any $e\in E(G)$.

\subsection{Colored subgraphs}
\label{sec:colored-subgraphs}

Going back to the example of figure \ref{fig:ExBijSubg}, we have
$|\Ev(G)|=2$ and $|\Ev(G^{\set 1})|=4$ but
$2^{v(G)}|\Ev(G)|=2^{v(G^{\set 1})}|\Ev(G^{\set 1})|=2^{3}$. For any
graph $g$, $2^{v(g)}$ is the number of colorings of $V(g)$ with two
colors. This means that there exists a bijection between the colored even
subgraphs of $G$ and $G^{\set 1}$. This is actually true for any
ribbon graph with flags and any edge.
\begin{rem}
  This is clearly not the case for the odd subgraphs, as shows the
  example of the $2$-banana. Note also that, in general, there is no bijection
  between the colored subgraphs of a graph $G$ and of its partial duals
  $G^{\set e}$, the number of vertices of $G$ and $G^{\set e}$ being
  usually different.
\end{rem}
\begin{defn}[Colored graphs]\label{def:ColGraphs}
  A colored (ribbon) graph $G$ is a (ribbon) graph and a subset $C(G)$
  of $V(G)$. The set of colored odd (resp.\@ even) subgraphs of $G$ is
  denoted by $\COdd(G)$ (resp.\@ $\CEv(G)$).
\end{defn}
\begin{lemma}\label{lem:CEven}
  Let $G$ be a ribbon graph with flags. For any edge $e\in E(G)$,
  there is a bijection between $\CEv(G)$ and $\CEv(G^{\set e})$.
\end{lemma}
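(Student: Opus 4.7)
The plan is to construct the bijection $\Phi\colon\CEv(G)\to\CEv(G^{\{e\}})$ explicitly, proceeding by a case analysis on the topological nature of the edge~$e$ and using the natural bijection~\eqref{eq:Bijsubg} of edges as a backbone. Partial duality at $e$ either merges two vertices into one (if $e$ is a non-loop edge), splits a vertex into two (if $e$ is an orientable loop), or preserves the vertex count (if $e$ is a non-orientable loop); in the first two situations, $\Phi$ must absorb or produce an extra colored bit, which it does via the presence or absence of $e$ in the subgraph. This is permissible precisely because, on the side where the vertex count is smaller, $e$ is a loop and so its inclusion does not affect degree parities.

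In the main case, $e$ is a non-loop edge with distinct endpoints $v_1,v_2$ in $G$; these merge into a single vertex $w$ of $G^{\{e\}}$, where $e$ now sits as a loop. I would set $\Phi(F,C)=(F',C')$ by requiring
\begin{gather*}
F'\cap\bigl(E(G^{\{e\}})\setminus\{e\}\bigr) = s\bigl(F\cap(E(G)\setminus\{e\})\bigr),\quad [e\in F']=C(v_1),\\
C'(w)=C(v_2),\qquad C'=C\ \text{on vertices other than}\ v_1,v_2.
\end{gather*}
Evenness of $F'$ at $w$ follows from $\deg_{F'}(w)\equiv\deg_F(v_1)+\deg_F(v_2)\pmod 2$ combined with the evenness of $F$; evenness at every other vertex is unchanged since partial duality at $e$ affects only the immediate neighborhood of $e$ at the combinatorial-map level. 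The inverse reads $C(v_1)$ off $[e\in F']$ and then recovers $[e\in F]$ from the parity constraint at $v_1$ forced by the evenness of $F$. When $e$ is an orientable loop of $G$ (so a non-loop edge of $G^{\{e\}}$), the involutivity $(G^{\{e\}})^{\{e\}}=G$ from Lemma~\ref{lem:SimpleProp} reduces to the previous case applied to $G^{\{e\}}$; when $e$ is a non-orientable loop, the vertex count is preserved and $G-e$, $G^{\{e\}}-e$ have isomorphic underlying multigraphs, so $\Phi$ is simply $s$ on edges combined with the identity on colorings.

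The main obstacle is the non-loop case, where the formula must be designed so that exactly one bit of color information is transferred into the edge bit $[e\in F']$; once this is arranged, the verification that $\Phi$ is well-defined and invertible is routine. As a sanity check (and an independent route to the lemma), one can compute $|\CEv(G)|=2^{v(G)}\,|\Ev(G)|=2^{e(G)+k(G)}$ from the standard identification of $\Ev(G)$ with an affine translate of the $\mathbb{F}_2$-cycle space of dimension $e(G)-v(G)+k(G)$, and then invoke the invariance of both $e(G)$ and the component count $k(G)$ under partial duality at a single edge to conclude the equality of cardinalities.
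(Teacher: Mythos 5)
Your proof is correct, but it takes a genuinely different route from the paper's. The paper never constructs a bijection: it partitions $\Ev(G)$ into blocks $\Ev_{B'}(G)$ indexed by $B'\subset E(G)\setminus\set{e}$ and checks by a local count at the endvertex of the loop that $|\CEv_{B'}(G)|=|\CEv_{B'}(G^{\set e})|$ in every block (a factor of $2$ migrating between the number of admissible subgraphs and the number of vertex colorings), reducing the non-loop case to the loop case via $(G^{\set e})^{\set e}=G$ --- the opposite reduction from yours. Your explicit map in the non-loop case, trading the color bit of $v_1$ for the membership bit $[e\in F']$ and recovering $[e\in F]$ from the parity constraint at $v_1$, is sound: evenness at the merged vertex $w$ is automatic because $e$ is a loop there, and in the inverse direction evenness at $v_2$ follows from evenness at $w$ once $[e\in F]$ is fixed by the constraint at $v_1$, so each side has exactly one genuinely free bit and the exchange is bijective. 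This is precisely the ``convention for the coloring of the vertices $v_{1}$ and $v_{2}$'' that the paper says one would need to choose (in the discussion following Lemma \ref{lem:ColCutSubg}) to upgrade its existence statement to an explicit map, so your version supplies something the paper defers. Two minor remarks: in the non-orientable-loop case the relevant observation is that all of $G$ and $G^{\set e}$ (not merely $G-e$ and $G^{\set e}-e$) have the same vertex--edge incidences and that a loop never changes degree parity, so the identity map preserves evenness; and your cycle-space cross-check $|\CEv(G)|=2^{e(G)+k(G)}$ is the shortest argument of all, but it rests on $k(G^{\set e})=k(G)$, a true property of partial duality that the paper neither states nor proves, and, when flags are counted in the degrees, on the fact that the per-component flag parity (hence the solvability of the affine system defining $\Ev$) is also preserved.
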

\begin{proof}
  \begin{align}
    \Ev(G)=&\set{B\subset E(G)\tqs F_{B}\text{ is even}}\\
    =&\set{B'\subset
      E(G)\setminus\set e\tqs F_{B'}\text{ is even}}\cup
    \set{B'\subset E(G)\setminus\set e\tqs F_{B'\cup\set e}\text{ is
        even}}\\
    =&\bigcup_{B'\subset E(G)\setminus\set e}\set{B\in\set{B',B'\cup\set
      e}\tqs F_{B}\text{ is even}}\\
  \fide&\bigcup_{B'\subset E(G)\setminus\set e}\Ev_{B'}(G).
 \end{align}
 For any $B',B''\subset E(G)\setminus\set e,
 \Ev_{B'}(G)\cap\Ev_{B''}(G)=\emptyset$. Moreover $\Ev_{B'}(G)$ may have a cardinality of $0,1$ or $2$. We now prove
 that $\forall B'\subset E(G)\setminus\set e,
 |\CEv_{B'}(G)|=|\CEv_{B'}(G^{\set e})|$, which would prove lemma
 \ref{lem:CEven}.

We distinguish between
 two cases: either $e$ is a loop (in $G$) or it is not.
 \begin{enumerate}
 \item \underline{$e$ is a loop}: let $v$ be the endvertex of $e$. It
   may be represented as in figure \ref{fig:BijEvLoop1}.
   \begin{figure}[!htp]
     \centering
     \subfloat[The vertex $v$ in
     $F_{B}$]{{\label{fig:BijEvLoop1}}\includegraphics[scale=.8]{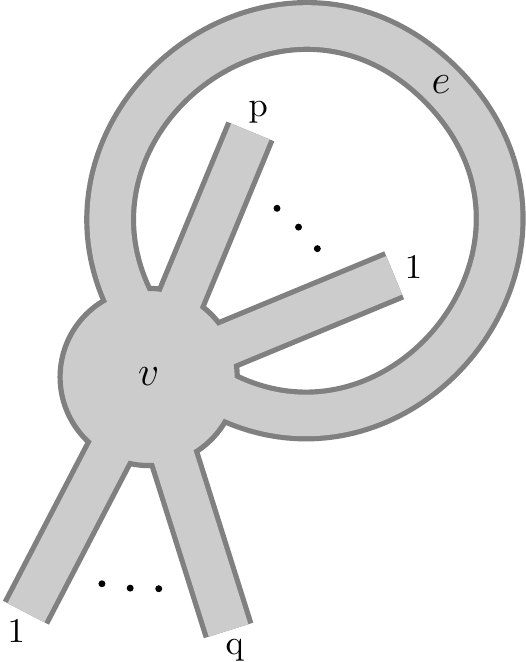}}\hspace{2cm}
     \subfloat[The corresponding situation
     in $F_{B\cup\set e}^{\set e}$]{{\label{fig:BijEvLoop2}}\includegraphics[scale=.8]{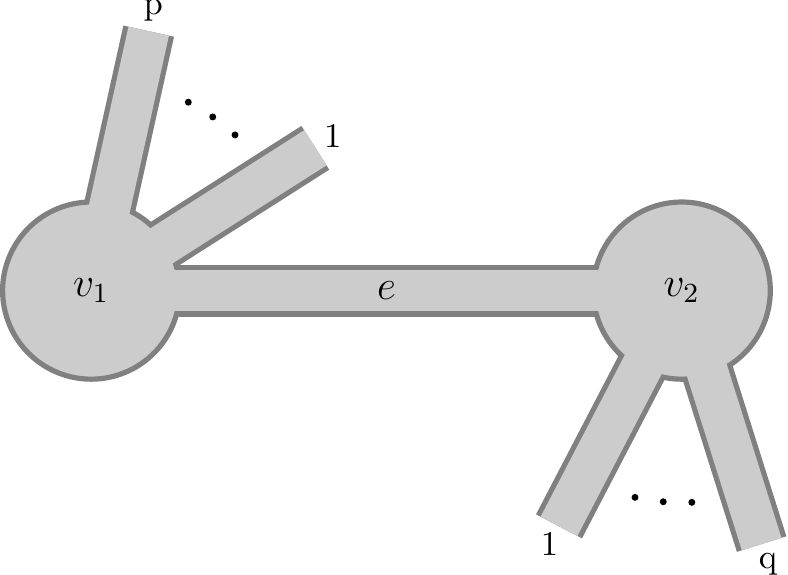}}
     \caption{Bijection in case of a loop}
     \label{fig:BijFig}
   \end{figure}
   \begin{itemize}
   \item $p$ and $q$ have the same parity: $v$ is even in $F_{B}$ with
     or without $e$, then $|\Ev_{B'}(G)|=2$ and
     $|\CEv_{B'}(G)|=2\times 2$. If $p$ and $q$ are odd, $F_{B}$ is even
     in $G^{\set e}$ iff $e\in B$, see figure \ref{fig:BijEvLoop2}. On the contrary, if $p$ and $q$ are
     even, $F_{B}$ is even in $G^{\set e}$ iff $e\notin B$. Then
     $|\CEv_{B'}(G^{\set e})|=1\times 2^{2}$.
   \item $p$ and $q$ have different parities:
     $|\CEv_{B'}(G)|=|\CEv_{B'}(G^{\set e})|=0$.
   \end{itemize}
 \item \underline{$e$ is not a loop}: using $G=\lbt G^{\set
     e}\rbt^{\set e}$, this is the same situation as in the preceding
   case with $G$ replaced by $G^{\set e}$.
 \end{enumerate}
\end{proof}

\subsection{Cutting subgraphs}
\label{sec:cutting-subgraphs}

Both from a mathematical and physical point of view, it is quite
natural to consider not only spanning subgraphs but also spanning
cutting subgraphs of a ribbon graph $G$. For any $e\in E(G)$,
$|\SubgF(G)|=|\SubgF(G^{\set e})|=2^{\HR(G)}=2^{F(G)+2E(G)}$. Thus
there exists a (natural) bijection between those two sets. What about
the odd (resp.\@ even) cutting subgraphs?
\begin{defn}
  Let $G$ be a ribbon graph with flags. We denote by $\OddF(G)$
  (resp.\@ $\EvF(G)$) the set of odd (resp.\@ even) spanning cutting subgraphs of $G$.
\end{defn}
It is easy to check that there is no bijection between $\OddF(G)$
(resp.\@ $\EvF(G)$) and $\OddF(G^{\set e})$ (resp.\@ $\EvF(G^{\set
  e})$). For example, let us consider once more the graphs of figure
\ref{fig:ExBijSubg}. We have $|\OddF(G)|=|\EvF(G)|=4$ whereas
$|\OddF(G^{\set 1})|=|\EvF(G^{\set 1})|=8$.

\subsection{Colored cutting subgraphs}
\label{sec:color-cutt-subgr}

\begin{defn}\label{def:ColCuttGraphs}
  Let $G$ be a ribbon graph with flags. The set of colored cutting
  spanning subgraphs of $G$ is $\CSubgF(G)$. The set of odd (resp.\@ even) colored cutting
  spanning subgraphs of $G$ is denoted by $\COddF(G)$ (resp.\@ $\CEvF(G)$).
\end{defn}
As in the case of colored subgraphs, there is generally no
bijection between $\CSubgF(G)$ and $\CSubgF(G^{\set e})$, because
$v(G)\neq v(G^{\set e})$ usually. Nevertheless we have
\begin{lemma}\label{lem:ColCutSubg}
  Let $G$ be a ribbon graph with flags. For any $e\in E(G)$, there is
  a bijection $\chi_{G}^{\set e}$ between $\COddF(G)$ (resp.\@ $\CEvF(G)$) and $\COddF(G^{\set e})$
  (resp.\@ $\CEvF(G^{\set e})$).
\end{lemma}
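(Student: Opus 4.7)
The plan is to mimic the proof of Lemma~\ref{lem:CEven}, now at the level of half-ribbons instead of edges. Since partial duality at $e$ only modifies $\theta$ and $\sigma_{1}$ on the four crosses of $e$, the set of half-ribbons of $G$ located outside $e$ is canonically identified with its counterpart in $G^{\{e\}}$; call this set $Y$. Every cutting subgraph of $G$ (resp.\@ $G^{\{e\}}$) is uniquely parametrized by a pair $(B',A)$, where $B' \subset Y$ and $A$ is a subset of the two half-ribbons of $e$ in the relevant graph. For fixed $B'$, set
\[
\EvF_{B'}(G) \defi \{\, A : \text{the cutting subgraph encoded by } B' \cup A \text{ is even in } G\,\},
\]
and likewise $\OddF_{B'}(G)$, $\EvF_{B'}(G^{\{e\}})$ and $\OddF_{B'}(G^{\{e\}})$. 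Since the coloring of the vertices is chosen independently of $(B',A)$, the lemma reduces to the weighted identities
\begin{align*}
2^{v(G)}\,|\EvF_{B'}(G)| &= 2^{v(G^{\{e\}})}\,|\EvF_{B'}(G^{\{e\}})|, \\
2^{v(G)}\,|\OddF_{B'}(G)| &= 2^{v(G^{\{e\}})}\,|\OddF_{B'}(G^{\{e\}})|,
\end{align*}
holding for every $B'$.

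By Lemma~\ref{lem:SimpleProp}, $(G^{\{e\}})^{\{e\}} = G$, so it is enough to treat the case in which $e$ is a loop in $G$; the remaining case is obtained by swapping the roles of $G$ and $G^{\{e\}}$. When $e$ is a loop of $G$ based at a vertex $v$, let $p$ and $q$ denote the number of half-ribbons of $B'$ incident to $v$ on either side of $e$, as in Figure~\ref{fig:BijEvLoop1}. The four subsets $A \subset \HR(e)$ contribute $0$, $1$, $1$ or $2$ to the local degree at $v$ in $G$. In $G^{\{e\}}$ the endvertex structure, and hence the vertex count, depends on the orientability of $e$: if $e$ is orientable then $v$ splits and $v(G^{\{e\}}) = v(G)+1$, while if $e$ is non-orientable then $v(G^{\{e\}}) = v(G)$. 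A direct enumeration of the subcases determined by the parities of $p$, $q$ and the orientability of $e$ yields the values of $|\EvF_{B'}|$ and $|\OddF_{B'}|$ on both sides, and checks the weighted identities. Concatenating the resulting local bijections over $B'$ finally delivers the global bijection $\chi_G^{\{e\}}$.

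The main obstacle is the case analysis. Unlike in Lemma~\ref{lem:CEven}, each edge in a cutting subgraph has four local states instead of two, so the counts $|\EvF_{B'}|$ and $|\OddF_{B'}|$ can take any value in $\{0,1,2,3,4\}$, and the odd counterpart is no longer trivially zero. This extra flexibility is exactly what makes the bijection work on odd cutting subgraphs, in contrast to plain odd subgraphs, where the analogue of Lemma~\ref{lem:CEven} for $\COdd$ fails. One has to check that the compensation factor $2^{v(G^{\{e\}}) - v(G)} \in \{\tfrac12, 1, 2\}$, triggered whenever $e$ is an orientable loop in $G$ or in $G^{\{e\}}$, precisely absorbs the redistribution of the local degree produced by the vertex-splitting effect of partial duality.
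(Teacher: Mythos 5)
Your proposal follows the paper's proof essentially step for step: the same decomposition of cutting subgraphs over the half-ribbons outside $e$, the same reduction of the colored statement to the weighted counting identity $2^{v(G)}\,|\OddF_{B'}(G)|=2^{v(G^{\set e})}\,|\OddF_{B'}(G^{\set e})|$, the same use of $(G^{\set e})^{\set e}=G$ to reduce to the case of a loop, and the same parity analysis on $p$ and $q$; the approach is sound and the enumeration, once actually carried out, does close the argument. Two caveats. First, your assertion that $|\OddF_{B'}|$ and $|\EvF_{B'}|$ "can take any value in $\{0,1,2,3,4\}$" is false and suggests the enumeration was not really performed: with $B'$ fixed, the degree of the endvertex of the loop is $p+q+|A|$ with $|A|$ running over $0,1,1,2$, so exactly two of the four choices of $A$ have the required parity and $|\OddF_{B'}(G)|\in\{0,2\}$; on the dual side the split vertices impose one independent parity condition on each half-edge, so $|\OddF_{B'}(G^{\set e})|\in\{0,1\}$, the zero cases coincide (the other vertices are untouched by duality at a loop), and $2^{v(G)}\cdot 2=2^{v(G)+1}\cdot 1$ as needed, while a non-orientable loop leaves the vertex, its incident half-ribbons and hence all degrees unchanged, making that case an identity. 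Second, after the reduction to loops the factor $2^{v(G^{\set e})-v(G)}$ is only ever $1$ or $2$, never $\tfrac12$.
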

\begin{proof}
  Let us denote by $\stackrel{\rightarrow}{e}$ and
  $\stackrel{\leftarrow}{e}$ the two half-edges of $e$. Let us define
  $\langle
  e\rangle\defi\set{\stackrel{\rightarrow}{e},\stackrel{\leftarrow}{e}}$. Recall
  that, by convention (see definition \ref{def:subgraphs}), when both
  halves of an edge $e$ appear in a subset $H\subset\HR(G)$, it means
  that $e\in E(F_{H})$.
 \begin{align}
    \OddF(G)=&\set{H\subset\HR(G)\tqs F_{H}\text{ is odd}}\\
    =&\bigcup_{H'\subset\HR(G)\setminus\langle
      e\rangle}\set{A\subset\langle e\rangle\tqs F_{H\cup A}\text{ is
        odd}}\\
    \fide&\bigcup_{H'\subset\HR(G)\setminus\langle
      e\rangle}\OddF_{H'}(G).
  \end{align}
We define $\EvF_{H'}(G)$ the same way. We prove that $\COddF_{H'}(G)$
and $\COddF_{H'}(G^{\set e})$ have the same cardinality for any
$H'\subset\HR(G)\setminus\langle e\rangle$. We let the case of even
subgraphs to the reader.

Using once more $G=\lbt G^{\set
e}\rbt^{\set e}$, it is enough to prove it in the case $e$ is a loop
in $G$. The situation is thus again the one of figure
\ref{fig:BijFig}, where now figure \ref{fig:BijEvLoop1} represents
the endvertex of $e$ in $F_{H'\cup A\cup\set e}$.
\begin{itemize}
\item If $p$ and $q$ have the same parity, in order $F_{H'\cup A}$ to
  be odd, $A$ contains one of the two
  half-edges of $e$: $e$ is a flag in $F_{H'\cup A}$. Thus
  $|\COddF_{H'}(G)|=2\times 2$. If $p$ and $q$ are odd, $F_{H'\cup A}$ is odd
     in $G^{\set e}$ iff $A=\emptyset$, see figure \ref{fig:BijEvLoop2}. On the contrary, if $p$ and $q$ are
     even, $F_{H'\cup A}$ is odd in $G^{\set e}$ iff $A=\set{\stackrel{\rightarrow}{e},\stackrel{\leftarrow}{e}}$. Then
     $|\COddF_{H'}(G^{\set e})|=1\times 2^{2}$.
   \item If $p$ and $q$ have different parities: $F_{H'\cup A}$ is odd
     iff $A=\emptyset$ or
     $\set{\stackrel{\rightarrow}{e},\stackrel{\leftarrow}{e}}$ which
     implies $|\COddF_{H'}(G)|=2\times 2$. Let us say that $p$ is odd
     and $q$ even. There is only one possibility for $A$ such that
     $F_{H'\cup A}$ is odd. Namely $A$ should only contain the
     half-edge of $e$ which is hooked to the vertex incident with the
     other $q$ half-ribbons. Thus $|\COddF_{H'}(G^{\set e})|=1\times 2^{2}$.
\end{itemize}
\end{proof}
We have proven the existence of a bijection between $\COddF(G)$
(resp.\@ $\CEvF(G)$) and $\COddF(G^{\set e})$ (resp.\@ $\CEvF(G^{\set
  e})$). To exhibit such a bijection, one would need to choose a
convention for the coloring of the vertices $v_{1}$ and $v_{2}$, see
figure \ref{fig:BijFig}, depending on the color of $v$ and on the fact
that $e$ belongs or not to $A$, as an edge or as a flag.

\paragraph{Properties of $\mathbf{\chi_G^{\set e}}$}
\label{sec:prop-chi_gs-e}
Here we precise the bijection $\chi_{G}^{\set e}$ of
lemma \ref{lem:ColCutSubg}. This will be useful in section \ref{sec:vari-limit-cases}.
  \begin{defn}[Partitions by flags]\label{def:partitions}
    For any ribbon graph $G$ with flags, the set $\OddF(G)$ (resp.\@ $\EvF(G)$) can be
    partitioned into subsets of cutting subgraphs labelled by their
    flag-set. For all $F'\subset F(G)\cup E(G)$, we
    write $\OddF(G)\flag F'$ (resp.\ $\EvF(G)\flag F'$) the set of
    all odd (resp.\@ even) spanning cutting
    subgraphs of $G$ with flag-set $F'$.    
  \end{defn}
  For any $F',F''\subset F(G)\cup E(G)$, we obviously have
  $(\OddF(G)\flag F')\cap(\OddF(G)\flag F'')=\emptyset$. Moreover,
  \begin{align}
    \OddF(G)=&\bigcup_{F'\subset F(G)\cup E(G)}\OddF(G)\flag F'.
 \end{align}
  These definitions of partitions and subsets of
  $\OddF(G)$ can be applied, mutatis mutandis, to $\COddF(G)$,
  $\EvF(G)$ and $\CEvF(G)$.
  \\
  Let $F'\subset\HR(G)$ and $F'_{e}$ be the set $F'\cup\set e$ if
  $e\notin F'$ and $F'\setminus\set e$ if $e\in F'$. Then, just by
  looking at the proof of lemma \ref{lem:ColCutSubg}, one sees that
  $\chi_{G}^{\set e}$ is a one-to-one map between $\COddF(G)\flag F'$
  (resp.\@ $\CEvF(G)\flag F'$) and $\COddF(G)\flag F'_{e}$ (resp.\@
  $\CEvF(G)\flag F'$).

\section{A new topological graph polynomial}
\label{sec:some-proofs-remarks}

Graph polynomials are graph invariants which encode part of the
information contained in the graph structure. These polynomials allow
an algebraic study of graphs, which is usually easier than a direct
approach.

Recently, B.~Bollob\'as and O.~Riordan \citep{Bollobas2002aa} defined
such a polynomial invariant for ribbon graphs. Here we introduce a
generalization of their polynomial, defined for ribbon graphs
with flags or external legs. It turns out that a certain evaluation of
this new topological graph invariant $\cQ$ enters the parametric
representation of the Feynman amplitudes of the Grosse-Wulkenhaar
model.\\

In the following, we will denote by bold letters, sets of
variables attached to edges or vertices of a graph. For example, given a (ribbon) graph $G$,
$\bx\defi\set{x_{e}}_{e\in E(G)}$. Moreover, for any $A\subset E(G)$,
we use the following short notation: $x^{A}\defi\prod_{e\in A}x_{e}$.

\begin{defn}[The $\cQ$ polynomial]\label{def:QDef}
  Let $G$ be a ribbon graph with flags. We define the following polynomial:
  \begin{align}
    \cQ_{G}(\bx,\by,\bz,\bw,\br)\defi&\sum_{A\subset E(G)}\sum_{B\subset E(G^{A})}x^{A^{c}\cap B^{c}}y^{A\cap B^{c}}z^{A\cap B}w^{A^{c}\cap B}r^{V(F_{B})},\label{eq:QDef}
\end{align}
 where we implicitely use the canonical bijection between $E(G)$ and $E(G^{A})$, and $r^{V(F_{B})}\defi\prod_{v\in V(F_{B})}r_{\degp v}$.
\end{defn}

\subsection{Basic properties}
\label{sec:defin-basic-prop}

\begin{prop}
${\cal Q}_{G}$ is multiplicative over disjoint unions and obeys the scaling relations
\begin{equation}
{\cal Q}_{G}(\lambda\bx,\lambda\by,\lambda \mu^{-2}\bz,\lambda\mu^{-2}\bw,\mu\cdot\br)=\lambda^{|E(G)|}\mu^{|F(G)|}{\cal Q}_G(\bx,\by,\bz,\bw,\br)
\end{equation}
where  $|E(G)|$ is the number of edges of $G$, $|F(G)|$ its number of
flags and $\mu\cdot\br$ is the sequence
$(\mu^{n}r_{n})_{n\in {\Bbb N}}$.
\end{prop}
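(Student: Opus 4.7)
The statement has two parts, which I would handle separately.

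For multiplicativity, suppose $G=G_{1}\sqcup G_{2}$. Partial duality commutes with disjoint union, as can be read directly off the combinatorial-map formulas of section \ref{sec:partial-duality}: any $A\subset E(G)$ splits canonically as $A_{1}\sqcup A_{2}$ with $G^{A}=G_{1}^{A_{1}}\sqcup G_{2}^{A_{2}}$, and likewise $B\subset E(G^{A})$ splits as $B_{1}\sqcup B_{2}$. All four intersections indexing the monomial, the vertex set $V(F_{B})=V(F_{B_{1}})\sqcup V(F_{B_{2}})$, and the degree of each vertex are compatible with this product structure, so every summand of (\ref{eq:QDef}) factorizes and the double sum becomes $\cQ_{G_{1}}\cdot\cQ_{G_{2}}$.

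For the scaling relation, I would fix a term indexed by $(A,B)$ and track how each variable contributes. The four subsets $A^{c}\cap B^{c}$, $A\cap B^{c}$, $A\cap B$, $A^{c}\cap B$ partition $E(G)$, and the last two of them cover $B$. Consequently the monomial in $\bx,\by,\bz,\bw$ acquires a factor
\begin{equation*}
\lambda^{|A^{c}\cap B^{c}|+|A\cap B^{c}|+|A\cap B|+|A^{c}\cap B|}\,\mu^{-2(|A\cap B|+|A^{c}\cap B|)}=\lambda^{|E(G)|}\,\mu^{-2|B|}.
\end{equation*}
Under $r_{n}\to\mu^{n}r_{n}$, the factor $r^{V(F_{B})}$ is multiplied by $\mu^{S}$ with $S=\sum_{v\in V(F_{B})}\degp v$. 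The key arithmetic input is the handshake identity for ribbon graphs with flags: every half-edge is incident to exactly one vertex and every flag to exactly one vertex, so
\begin{equation*}
S=2|E(F_{B})|+|F(F_{B})|=2|B|+|F(G)|,
\end{equation*}
where I use that partial duality leaves the flag-set invariant ($|F(G^{A})|=|F(G)|$) and that the spanning subgraph $F_{B}$ carries all flags of $G^{A}$. The $\mu$-exponents combine to $-2|B|+2|B|+|F(G)|=|F(G)|$, independently of $(A,B)$, so the prefactor $\lambda^{|E(G)|}\mu^{|F(G)|}$ factors out of the double sum, yielding the claimed identity.

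The only step requiring a brief check is the handshake count for $S$ and the implicit convention that $F_{B}$ inherits all flags of its ambient graph; once this is in place, everything else is a direct rewrite, so I do not expect any genuine obstacle.
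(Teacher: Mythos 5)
Your proof is correct and is exactly the direct verification the paper has in mind when it declares the proof of this proposition ``obvious'' (no proof is actually written out there): multiplicativity from the compatibility of partial duality with disjoint unions, and the scaling from the partition of $E(G)$ by the four intersections together with the handshake identity $\sum_{v\in V(F_{B})}\degp v=2|B|+|F(G)|$, using that partial duality and passing to the spanning subgraph $F_{B}$ both preserve the flag set. No issues.
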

The proof of this proposition is obvious.\\

In contrast with the Tutte or the \BRp, $\cQ$ satisfies a four-term
reduction relation. This relation generalizes the usual
contraction/deletion relation and reflects the two natural
operations (see definition \ref{def:operationsEdges}) one can make on
a ribbon graph with flags and on any of its partial dual.
\begin{lemma}[Reduction relation]\label{lem:RedRelD}
  Let $G$ be a ribbon graph with flags and $e$ any of its edges. Then,
  \begin{equation}
    \begin{split}
      \cQ_{G}(\bx,\by,\bz,\bw,\br)=&x_{e}\cQ_{G-e}(\bx^{
        e},\by^{ e},\bz^{ e},\bw^{
        e},\br)+y_{e}\cQ_{G^{ e}-e}(\bx^{ e},\by^{
        e},\bz^{ e},\bw^{ e},\br)\\
      &+z_{e}\cQ_{G^{ e}\vee
        e}(\bx^{ e},\by^{ e},\bz^{ e},\bw^{
        e},\br)+w_{e}\cQ_{G \vee e}(\bx^{ e},\by^{ e},\bz^{
        e},\bw^{ e},\br),
    \end{split}\label{eq:RedRelD}
  \end{equation}
  where, for any $a\in\set{\bx,\by,\bz,\bw}$ and any $e\in E(G)$,
  $a^{e}\defi \set{a_{e'}}_{e'\in E(G)\setminus\set e}$.
\end{lemma}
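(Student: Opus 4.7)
The plan is to split the double sum defining $\cQ_G$ in \eqref{eq:QDef} according to the status of the distinguished edge $e$, namely whether $e$ belongs to $A$ and whether $e$ belongs to $B$. This produces four sub-sums, and by the exponent conventions in the definition they naturally carry the four monomials $x_e$ (when $e \in A^c \cap B^c$), $y_e$ (when $e \in A \cap B^c$), $z_e$ (when $e \in A \cap B$), and $w_e$ (when $e \in A^c \cap B$). After factoring these monomials out, the task in each case is to identify the remaining sub-sum with $\cQ$ evaluated on one of the four modified graphs appearing on the right-hand side of \eqref{eq:RedRelD}.

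The main tools for this identification are the canonical edge-bijection $E(G) \cong E(G^{E'})$ for every $E' \subset E(G)$ (which intertwines the spanning-subgraph construction $F_B$), the identity $(G^{E'})^{\{e\}} = G^{E' \cup \{e\}}$ for $e \notin E'$ from Lemma \ref{lem:SimpleProp}, and the fact that partial duality commutes with deletion of an edge lying outside the dualizing subset, i.e.\@ $(G-e)^A = G^A - e$ when $e \notin A$. In the first case ($e \notin A$, $e \notin B$) I set $A' := A \subset E(G-e)$ and view $B$ as a subset of $E((G-e)^{A'}) = E(G^A) \setminus \{e\}$; since $e$ plays no role in $F_B$ the vertex degrees are unchanged, and the sub-sum collapses to $x_e \, \cQ_{G-e}(\bx^e, \by^e, \bz^e, \bw^e, \br)$. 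In the second case ($e \in A$, $e \notin B$), setting $A' := A \setminus \{e\}$ and applying Lemma \ref{lem:SimpleProp} gives $G^A = (G^{\{e\}})^{A'}$, so that $B \subset E(G^A) \setminus \{e\}$ is re-indexed as a subset of $E((G^{\{e\}}-e)^{A'})$ and the sub-sum becomes $y_e \, \cQ_{G^{\{e\}}-e}$ in the amputated variables.

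The remaining two cases are handled in the same spirit, but since $e \in B$ the edge cannot simply be deleted: instead it must be replaced by a pair of flags, which is precisely what the operations $G^{\{e\}} \vee e$ (anticontraction) and $G \vee e$ (superdeletion) are designed to do. Taking $A' := A \setminus \{e\}$ in the third case and $A' := A$ in the fourth, Lemma \ref{lem:SimpleProp} again identifies the relevant partial duals, and the re-indexed inner sums over $B' \subset E(G^{\{e\}} \vee e)$, respectively $B' \subset E(G \vee e)$, yield the expected $z_e \, \cQ_{G^{\{e\}} \vee e}$ and $w_e \, \cQ_{G \vee e}$ terms.

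The main obstacle is the bookkeeping of the vertex factor $r^{V(F_B)}$ in the last two cases, where $e \in B$. In the original graph, the endpoint(s) of $e$ contribute to $V(F_B)$ with a degree that includes the contribution of $e$ itself, whereas in the modified graph this contribution must come instead from two newly created flags. Verifying that the $\vee e$ operations are set up so that the resulting vertex-degree sequences agree—and in particular handling loops (orientable and non-orientable) as well as non-loop edges—is the only delicate point of the argument. Once this matching is established, the four sub-sums reassemble into the right-hand side of \eqref{eq:RedRelD}.
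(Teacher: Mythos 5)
Your proposal is correct and follows essentially the same route as the paper: splitting the double sum over $(A,B)$ according to whether $e\in A$ and $e\in B$, identifying each sub-sum via the canonical edge bijections and $(G^{E'})^{\{e\}}=G^{E'\cup\{e\}}$, and observing that when $e\in B$ one must cut rather than delete so that the degree sequence (hence the $r$-factor) is preserved. The delicate point you flag about the $r^{V(F_B)}$ bookkeeping is exactly the one the paper's proof focuses on.
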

\begin{proof}
  Let $e$ be any edge of $G$. Refering to the definition (\ref{eq:QDef}) of $\cQ$, we distinguish between four cases, whether $e$ belongs to $A$ or not, to $B$ or not:
  \begin{equation}
    \begin{split}
      \cQ_{G}(\bx,\by,\bz,\bw,\br)=&x_{e}P_{1}(\bx^{
        e},\by^{ e},\bz^{ e},\bw^{
        e},\br)+y_{e}P_{2}(\bx^{ e},\by^{
        e},\bz^{ e},\bw^{ e},\br)\\
      &+z_{e}P_{3}(\bx^{ e},\by^{ e},\bz^{ e},\bw^{
        e},\br)+w_{e}P_{4}(\bx^{ e},\by^{ e},\bz^{
        e},\bw^{ e},\br).
    \end{split}
  \end{equation}
  The polynomial $P_{1}$ corresponds to the case $e\notin A$ and $e\notin B$. There is a canonical bijection $\varphi_{-}$ (resp.\@ $\varphi_{-}^{\star}$) between $E(G)\setminus\set e$ and $E(G-e)$ (resp.\@ between $E(G^{A})$ and $E((G-e)^{\varphi_{-}(A)})$). For any $A\subset E(G)$ and any $B\subset E(G^{A})$, we have
  \begin{subequations}
    \begin{align}
      A^{c}=&(\varphi_{-}(A))^{c}\cup\set e,&B^{c}=&(\varphi^\star_{-}(B))^{c}\cup\set e\\
      x^{A^{c}\cap B^{c}}=&x_{e}x^{(\varphi_{-}(A))^{c}\cap (\varphi^{\star}_{-}(B))^{c}}&y^{A\cap B^{c}}=&y^{\varphi_{-}(A)\cap (\varphi^{\star}_{-}(B))^{c}}\\
      z^{A\cap B}=&z^{\varphi_{-}(A)\cap\varphi^{\star}_{-}(B)}&w^{A^{c}\cap
        B}=&w^{(\varphi_{-}(A))^{c}\cap\varphi^{\star}_{-}(B)}.
    \end{align}
  \end{subequations}
Let us now check that $r^{V(F_{B})}=r^{V(F_{\varphi^{\star}_{-}(B)})}$. The left hand side of this equation encodes the degree sequence of $F_{B}\subset G^{A}$. But as $B$ does not contain $e$, $F_{B}$ can be considered as a subgraph of $G^{A}-e=(G-e)^{\varphi_{-}(A)}$ and $F_{B}$ is then isomorphic to $F_{\varphi_{-}^{\star}(B)}$. Their degree sequences are thus equal to each other. We have
\begin{align}
  x_{e}P_{1}(\bx^{e},\by^{ e},\bz^{ e},\bw^{e},\br)=&\sum_{A\subset E(G)\setminus\set e}\sum_{B\subset E(G^{A})\setminus\set e}x^{A^{c}\cap B^{c}}y^{A\cap B^{c}}z^{A\cap B}w^{A^{c}\cap B}r^{V(F_{B})}\\
  =&x_{e}\sum_{A\subset E(G)\setminus\set e}\sum_{B\subset E(G^{A})\setminus\set e}x^{(\varphi_{-}(A))^{c}\cap (\varphi^{\star}_{-}(B))^{c}}y^{\varphi_{-}(A)\cap (\varphi^{\star}_{-}(B))^{c}}\nonumber\\
  &\hspace{3.5cm}z^{\varphi_{-}(A)\cap\varphi^{\star}_{-}(B)}w^{(\varphi_{-}(A))^{c}\cap\varphi^{\star}_{-}(B)}r^{V(F_{\varphi^{\star}_{-}(B)})}\\
  =&x_{e}\sum_{A\subset E(G-e)}\sum_{B\subset E((G-e)^{A})}x^{A^{c}\cap B^{c}}y^{A\cap B^{c}}z^{A\cap B}w^{A^{c}\cap B}r^{V(F_{B})}\\
  =&x_{e}\cQ_{G-e}(\bx^{e},\by^{ e},\bz^{ e},\bw^{e},\br).
\end{align}

As we have seen, the difficulty only resides in the proof of the conservation of the $r$-part. Thus, for the three other cases, we only focus on that. The polynomial $P_{2}$ corresponds to the case $e\in A$ and $e\notin B$. Let $\varphi_{+}$ denote the canonical bijection between $\set{A\subset E(G)\tqs e\in A}$ and $E(G/e)$. As $B$ does not contain $e$, $F_{B}$ can also be considered as a subgraph of $G^{A}-e=(G^{e}-e)^{A\setminus\set e}=(G/e)^{\varphi_{+}(A)}$. This proves that $P_{2}=\cQ_{G^{e}-e}$.

The polynomial $P_{4}$ corresponds to the case $e\notin A$ and $e\in B$. As $e\notin A$, $G^{A}-e=(G-e)^{A}$ and the vertex sets $V(G^{A})$ and $V(G^{A}-e)$ are the same. But as $B$ contains $e$, erasing this edge would produce a different degree sequence for $F_{\varphi_{-}^{\star}(B)}$. So, we have to keep track of the contribution of $e$ to the degree sequence of $F_{B}$ by cutting it instead of deleting it: $P_{4}=\cQ_{G\vee e}$.

Finally the polynomial $P_{3}$ corresponds to the case $e\in A$ and $e\in B$. Such sets $A$ are in one-to-one correspondence with the subsets of $E(G^{e}-e)$. The vertex sets $V(G^{A})$ and $V((G^{e}-e)^{A\setminus\set e})$ are the same but once more, as $e\in B$, we can't delete $e$ but cut it instead: $P_{3}=\cQ_{G^{e}\vee e}$.
\end{proof}

Lemma \ref{lem:RedRelD} allows to give an alternative definition of the $\cQ$ polynomial:
\begin{defn}\label{def:RecursDefQ}
  Let $G$ be a ribbon graph with flags and $e$ any of its edges,
  \begin{equation}
    \begin{split}
      \cQ_{G}(\bx,\by,\bz,\bw,\br)=&x_{e}\cQ_{G-e}(\bx^{
        e},\by^{ e},\bz^{ e},\bw^{
        e},\br)+y_{e}\cQ_{G^{ e}-e}(\bx^{ e},\by^{
        e},\bz^{ e},\bw^{ e},\br)\\
      &+z_{e}\cQ_{G^{ e}\vee
        e}(\bx^{ e},\by^{ e},\bz^{ e},\bw^{
        e},\br)+w_{e}\cQ_{G \vee e}(\bx^{ e},\by^{ e},\bz^{
        e},\bw^{ e},\br).
    \end{split}\label{eq:RecDefQ}
  \end{equation}
  Otherwise $G$ consists of isolated vertices with flags and
  \begin{align}
    \cQ_{G}(\bx,\by,\bz,\bw,\br)=&\prod_{v\in V(G)}r_{\degp v}.\label{eq:FormTermQ}
  \end{align}
\end{defn}
It is remarkable that equations (\ref{eq:RecDefQ}) and
(\ref{eq:FormTermQ}) lead to a well defined polynomial in the sense
that it is independent of the order in which the edges are chosen. The
proof of the existence of such a polynomial consists essentially in the proof of lemma
\ref{lem:RedRelD}. The polynomial which results of this recursive
process is the $\cQ$ polynomial of definition \ref{def:QDef}. The
uniqueness of the result is obvious since if $e\in E(G)$ then
$\cQ_{G}$ is uniquely determined by $\cQ_{G-e}$, $\cQ_{G^{e}-e}$,
$\cQ_{G\vee e}$ and $\cQ_{G^{e}\vee e}$ \citep{Bollobas1998aa}.

\subsection{Relationship with other polynomials}
\label{sec:relat-with-other}

\begin{itemize}
\item \underline{The \BRp}: if we set $z=w=0$ and $x=1$ for all edges and $r_{n}=r$ (independent of $n$), we recover the multivariate Bollob\'as-Riordan polynomial at $q=1$, in its multivariate formulation (see \citep{Moffatt2008ab})
\begin{equation}
{\cal Q}_{\Gamma}(1,y,0,0,r)=\sum_{A\subset E(G)}\Big(\prod_{e\in A} y_{e}\Big)\,r^{v(G^{A})}
\end{equation}
where $v(G^{A})$ is the number of vertices of $G^{A}$ ie the number of
connected components of the boundary of $F_{A}$. Note that the
evaluation $y=w=0$, $x=1$ and $r_{n}=r$ gives the same result.
\item \underline{The dimer model}: if we set $y=z=0$, $x=1$ and $r_{n}=0$ except $r_{1}=1$, then we recover, for a graph without flags,  the partition function of the dimer model on this graph
\begin{equation}
{\cal Q}_{\Gamma}(1,0,0,w,r)=\sum_{C\subset E(G)\atop\mbox{\tiny dimer configuration}}\Big(\prod_{e\in C} w_{e}\Big),
\end{equation}
with $w_{e}=\mathrm{e}^{\beta \varepsilon_{e}}$ the Boltzmann weight. Here, each vertex contains a monomer that can form a dimer with an adjacent monomer, if the edge $e$ supports a dimer then its energy is $-\varepsilon_{e}$. A dimer configuration (also known as a perfect matching in graph theory) is obtained when each monomer belongs to exactly one dimer. In the recent years, the dimer model has proven to be of great mathematical interest (see \cite{Kenyon2009aa} for a recent review).
\item \underline{The Ising model}: for $y=z=0$, $x_{e}=\cosh(\beta J_{e})$, $w_{e}=\sinh(\beta J_{e})$, $r_{2n}=2$ and $r_{2n+1}=0$, we recover the partition function of the Ising model. Recall that the latter is obtained by assigning spins $\sigma_{v}\in\left\{-1,+1\right\}$ to each vertex with an interaction along the edges encoded by the Hamiltonian
\begin{equation} 
H(\sigma)=-\sum_{e=(v,v')\in E}J_{e}\sigma_{v}\sigma_{v'},
\end{equation}
with $J_{e}$ an edge dependent coupling constant. The partition
function is the sum over all spins configurations of the Boltzmann weight
\begin{equation}
Z_{\mbox{\tiny Ising}}=\sum_{\sigma}\mathrm{e}^{-\beta H}.
\end{equation}
Using the identity
\begin{equation}
\mathrm{e}^{\beta J_{e}\sigma_{v}\sigma_{v'}}=\cosh(\beta J_{e})+\sigma_{v}\sigma_{v'}\sinh(\beta J_{e})
\end{equation}
for each edge, we can perform the high temperature expansion of  the partition function 
\begin{equation}
Z_{\mbox{\tiny Ising}}=\sum_{\sigma}\left\{\sum_{C\subset E}\big(\prod_{e\notin C}\cosh(\beta J_{e})\big)
\big(\prod_{e\in C}\sinh(\beta J_{e})\sigma_{v}\sigma_{v'}\big)\right\}
\end{equation}
Then, the sum over all spins vanishes unless each vertex is matched by a even number of edges in $C$, so that
\begin{equation}
Z_{\mbox{\tiny Ising}}={\cal Q}_{G}(x,0,0,w,r)
\end{equation}
with the specified value of $x$, $w$ and $r$. Note that the extra power of 2 arising from the sum over spins corresponds to $r_{2n}=2$. 
 \end{itemize}

\subsection{Partial duality of $\cQ$}
\label{sec:partial-duality-cq}

One of the most interesting properties of the $\cQ$ polynomial is that
it transforms nicely under partial duality.
\begin{thm}[Partial duality]\label{thm:PartialDualityQ}
  Let $G$ be a ribbon graph with flags and $e\in E(G)$ be any edge of $G$. We have
  \begin{align}
    \cQ_{G^{\set{e}}}(\bx,\by,\bz,\bw,\br)=\cQ_{G}(\bx_{E\setminus\set{e}}\by_{\set e},\bx_{\set e}\by_{E\setminus\set{e}},\bz_{E\setminus\set{e}}\bw_{\set e},\bz_{\set e}\bw_{E\setminus\set{e}},\br).\label{eq:PartDualQ}
  \end{align}
\end{thm}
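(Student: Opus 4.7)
The plan is to unfold the definition of $\cQ_{G^{\set e}}$ and change summation variable $A \mapsto \tilde A := A\triangle\set e$, and then check that at the edge $e$ this reindexing exchanges the roles of the four exponent sets $(\cdot\cap\cdot)$ exactly as predicted by the substitution on the right-hand side of \eqref{eq:PartDualQ}.

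\medskip\noindent
\textbf{Step 1: the involution identity $(G^{\set e})^{A} = G^{A\triangle\set e}$.}
First I would establish, from Lemma \ref{lem:SimpleProp}, the general fact that for any two subsets $A,B\subset E(G)$ one has $(G^{A})^{B} = G^{A\triangle B}$. The case $|B|=0$ is trivial and the induction step follows by picking $f\in B$ and distinguishing $f\in A$ or $f\notin A$; in each case one removes $f$ from $B$ using the third item of Lemma \ref{lem:SimpleProp} (applied either to $G^{A}$ or to $G^{A\triangle\set f}$, after using the involutivity of $\cdot^{\set f}$). Specialising to $B=\set e$ gives the identity $(G^{\set e})^{A}=G^{A\triangle\set e}$, and the canonical bijection between $E(G)$ and $E(G^{E'})$ lets us identify $B\subset E\bigl((G^{\set e})^{A}\bigr)$ with a subset of $E(G)$.

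\medskip\noindent
\textbf{Step 2: reindexing the outer sum.} Writing the definition \eqref{eq:QDef} for $G^{\set e}$ and substituting $\tilde A := A\triangle\set e$ for $A$, the pair $(A,B)$ now runs over all $(\tilde A,B)$ with $\tilde A\subset E(G)$ and $B\subset E(G^{\tilde A})$ by Step~1, exactly the index set that appears in $\cQ_{G}$. On each edge $e'\neq e$ one has $e'\in A\Leftrightarrow e'\in \tilde A$, so the contribution of every such edge $e'$ to the monomial is identical to the one it would produce inside $\cQ_{G}(\bx,\by,\bz,\bw,\br)$. On the single edge $e$, the condition $e\in A$ becomes $e\notin\tilde A$, so the four cases get permuted: $A^{c}\cap B^{c}\leftrightarrow \tilde A\cap B^{c}$ and $A^{c}\cap B\leftrightarrow \tilde A\cap B$. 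In other words, at edge $e$ the weights $x_{e},y_{e}$ are swapped and the weights $z_{e},w_{e}$ are swapped, while all other weights are unchanged.

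\medskip\noindent
\textbf{Step 3: the vertex factor.} The factor $r^{V(F_{B})}$ depends only on the multiset of vertex degrees of the spanning subgraph $F_{B}$ inside the ambient ribbon graph that houses it. After the reindexing, that ambient graph is $(G^{\set e})^{A}=G^{\tilde A}$, so $F_{B}$ is intrinsically the same object and contributes the same $r^{V(F_{B})}$ to both sides. Combining the three steps, the transformed sum is exactly $\cQ_{G}$ evaluated at $(\bx_{E\setminus\set e}\by_{\set e},\bx_{\set e}\by_{E\setminus\set e},\bz_{E\setminus\set e}\bw_{\set e},\bz_{\set e}\bw_{E\setminus\set e},\br)$, proving \eqref{eq:PartDualQ}. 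The only non-formal point is Step~1; everything else is a bookkeeping exercise in symmetric differences at the single edge $e$, and I expect it to be the main potential source of indexing slips rather than a real obstacle.
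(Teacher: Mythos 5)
Your proposal is correct and follows essentially the same route as the paper: the paper's proof is precisely the reindexing $(A,B)\mapsto(A\Delta\set e,B)$, the observation that $\big(G^{\set e}\big)^{A\Delta\set e}=G^{A}$ makes the factor $\br^{V(F_B)}$ invariant, and a case check at the edge $e$ showing that $x_e\leftrightarrow y_e$ and $z_e\leftrightarrow w_e$ are swapped. The only cosmetic difference is that you spell out the identity $(G^{A})^{B}=G^{A\triangle B}$ by induction from Lemma \ref{lem:SimpleProp}, whereas the paper takes the special case $\big(G^{\set e}\big)^{A\Delta\set e}=G^{A}$ as already established.
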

\begin{proof}
  Each monomial of $\cQ$ is labelled by two sets of edges $A\subset E(G)$ and $B\subset E(G^{A})$:
  \begin{align}
    \cQ_{G}(\bx,\by,\bz,\bw,\br)\fide&\sum_{A,B}M_{(A,B)}(G;\bx,\by,\bz,\bw,\br)\\
    \fide&\sum_{A,B}\cM_{(A,B)}(G;\bx,\by)\cN_{(A,B)}(G;\bz,\bw)\br^{V(F_{B})}.
  \end{align}
  For any $e\in
  E(G)$, let $\phi_{e}$ be the following map:
  \begin{align}
    \phi_{e}: \bigcup_{A\subset E(G)}A\times E(G^{A})&\to\bigcup_{A'\subset E(G^{\set{e}})}A'\times E\big((G^{\set{e}})^{A'}\big)\label{eq:BijMonom}\\
    (A,B)&\mapsto (A\Delta\set{e},B).\nonumber
  \end{align}
  $\phi_{e}$ is clearly a bijection for any edge $e$. Note that $\big(G^{\set{e}}\big)^{A\Delta\set{e}}=G^{A}$ which
  implies (with a slight abuse of notation) that, for any $F_{B}\subset G^{A}$, $\br^{V(F_{B})}=\br^{V(F_{\phi_{e}(B)})}$ where $F_{\phi_{e}(B)}\subset (G^{\set e})^{\phi_{e}(A)}$. Let $\bx^{\star\set e}$ be $\bx_{E\setminus\set{e}}\by_{\set{e}}$, $\by^{\star\set e}$ be
  $\bx_{\set{e}}\by_{E\setminus\set{e}}$, $\bz^{\star\set e}$ be $\bz_{E\setminus\set{e}}\bw_{\set{e}}$ and $\bw^{\star\set e}$ be $\bz_{\set{e}}\bw_{E\setminus\set{e}}$. To prove the
  theorem, we prove that
  $M_{\phi_{e}((A,B))}(G^{\set{e}};\bx,\by,\bz,\bw,\br)=M_{(A,B)}(G;\bx^{\star\set e},\by^{\star\set e},\bz^{\star\set e},\bw^{\star\set e},\br)$.
  \begin{align}
    A'\defi A\Delta\set{e}=&
    \begin{cases}
      A\cup\set{e}&\text{if $e\notin A$,}\\
      A\setminus\set{e}&\text{if $e\in A$}.
    \end{cases}\\
    A'^{c}=A^{c}\Delta\set{e}=&
    \begin{cases}
      A^{c}\setminus\set{e}&\text{if $e\notin A$,}\\
      A^{c}\cup\set{e}&\text{if $e\in A$}.
    \end{cases}
    \intertext{If $e\in B$, $B^{c}\cap A'=B^{c}\cap A$ and $B^{c}\cap
      A'^{c}=B^{c}\cap A^{c}$. Thus, in this case, $\cM_{\phi_{e}((A,B))}(G^{\set e};\bx,\by)$ is obviously equal to $\cM_{(A,B)}(G;\bx^{\star\set e},\by^{\star\set e})$. So let us focus on the terms involving $\bz$ and $\bw$.}
    B\cap A'=&
    \begin{cases}
      (B\cap A)\cup\set{e}&\text{if $e\in B\cap A^{c}$,}\\
      (B\cap A)\setminus\set{e}&\text{if $e\in B\cap A$.}
    \end{cases}\\
    B\cap A'^{c}=&
    \begin{cases}
      (B\cap A^{c})\setminus\set{e}&\text{if $e\in B\cap A^{c}$,}\\
      (B\cap A^{c})\cup\set{e}&\text{if $e\in B\cap A$.}
    \end{cases}
    \intertext{Then we have}
    \cN_{(A',B)}(G^{\set{e}};\bz,\bw,\br)=&z^{B\cap A'}w^{B\cap A'^{c}}\nonumber\\
    =&
    \begin{cases}
      z^{(B\cap A)\cup\set{e}}w^{(B\cap A^{c})\setminus\set{e}}&\text{if $e\in B\cap A^{c}$,}\\
      z^{(B\cap A)\setminus\set{e}}w^{(B\cap A^{c})\cup\set{e}}&\text{if $e\in B\cap A$}
    \end{cases}\\
    =&\cN_{(A,B)}(G;\bz^{\star\set e},\bw^{\star\set e},\br).\nonumber
\end{align}
If $e\notin B$, we use exactly the same argument with $\cN$ replaced by $\cM$, $z$ by $y$ and $w$ by $x$.
\end{proof}
\begin{cor}\label{cor:PartialDuality}
  For any ribbon graph $G$ with flags and any subset $E'\subset E(G)$,
  we have
  \begin{align}
    \cQ_{G^{E'}}(\bx,\by,\bz,\bw,\br)=&\cQ_{G}(\bx_{E\setminus
      E'}\by_{E'},\bx_{E'}\by_{E\setminus E'},\bz_{E\setminus
      E'}\bw_{E'},\bz_{E'}\bw_{E\setminus E'},\br)\label{eq:PartialDualityQCor}
  \end{align}
\end{cor}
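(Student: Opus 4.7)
The plan is to proceed by induction on $|E'|$, using Theorem \ref{thm:PartialDualityQ} as the one-edge base step and the chaining property $G^{E'\cup\{e\}}=(G^{E'})^{\{e\}}$ from Lemma \ref{lem:SimpleProp}. The case $|E'|=0$ is trivial since $G^{\emptyset}=G$ and the substitution on the right-hand side becomes the identity (the factors $\by_{\emptyset}$ etc.\ are empty). The case $|E'|=1$ is exactly Theorem \ref{thm:PartialDualityQ}.

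For the induction step, assume the identity has been established for some $E'\subset E(G)$, take $e\in E(G)\setminus E'$ and set $E''\defi E'\cup\{e\}$. Lemma \ref{lem:SimpleProp} gives $G^{E''}=(G^{E'})^{\{e\}}$, and Theorem \ref{thm:PartialDualityQ} applied to the ribbon graph $G^{E'}$ and its edge $e$ yields
\begin{align*}
\cQ_{G^{E''}}(\bx,\by,\bz,\bw,\br)=\cQ_{G^{E'}}(\bx',\by',\bz',\bw',\br),
\end{align*}
where $\bx'\defi\bx_{E\setminus\{e\}}\by_{\{e\}}$, $\by'\defi\bx_{\{e\}}\by_{E\setminus\{e\}}$, $\bz'\defi\bz_{E\setminus\{e\}}\bw_{\{e\}}$ and $\bw'\defi\bz_{\{e\}}\bw_{E\setminus\{e\}}$. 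The induction hypothesis applied to $G$ with the edge weights $(\bx',\by',\bz',\bw')$ then rewrites this right-hand side as $\cQ_G$ evaluated at the substituted variables
\begin{align*}
\bigl(\bx'_{E\setminus E'}\by'_{E'},\;\bx'_{E'}\by'_{E\setminus E'},\;\bz'_{E\setminus E'}\bw'_{E'},\;\bz'_{E'}\bw'_{E\setminus E'},\;\br\bigr).
\end{align*}

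It remains to verify that this double substitution coincides with the single simultaneous $E''$-substitution appearing in \eqref{eq:PartialDualityQCor}. Since $E'$ and $\{e\}$ are disjoint, the two successive swaps $\bx\leftrightarrow\by$, $\bz\leftrightarrow\bw$ act on disjoint edge sets, so their composition is the joint swap on $E''=E'\cup\{e\}$. Concretely, one checks edge by edge: for $e'\in E\setminus E''$ neither swap applies, for $e'\in E'$ only the $E'$-swap applies (the $\{e\}$-swap leaves $e'$ untouched), and for $e'=e$ only the $\{e\}$-swap applies (the $E'$-swap does not touch $e$). This is the routine but essential bookkeeping step; the only possible pitfall would be an edge being swapped twice, which is excluded precisely by $e\notin E'$. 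Once this matching of substitutions is confirmed, the induction closes and \eqref{eq:PartialDualityQCor} follows.
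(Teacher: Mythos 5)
Your proof is correct and follows essentially the same route as the paper, which likewise reduces the statement to the single-edge case via the identity $G^{E'}=\big(G^{E'\setminus\set{e}}\big)^{\set{e}}$ and a repeated application of Theorem \ref{thm:PartialDualityQ}. The only difference is that you spell out the bookkeeping showing that the two successive variable swaps compose to the joint swap on $E'\cup\{e\}$, which the paper leaves implicit.
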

\begin{proof}
  It relies on:
  \begin{enumerate}
  \item for any $e\in E'$, $G^{E'}=\big(G^{E'\setminus\set{e}}\big)^{\set{e}}$,
  \item a repeated use of theorem \ref{thm:PartialDualityQ}.
  \end{enumerate}
\end{proof}
\section{Feynman amplitudes of the Grosse-Wulkenhaar model}
\label{sec:feynm-ampl-grosse}

\subsection{The action functional}

The Grosse-Wulkenhaar model is defined by the action functional
\begin{equation}
S[\phi]=S_{0}[\phi]+S_{\text{int}}[\phi],
\end{equation}
where $\phi$ is a real valued function on Euclidean space ${\Bbb R}^{D}$. The free part of the action is
\begin{equation}
S_{0}[\phi]=\frac{1}{2}\int d^{D}x\,\phi(x)\big(-\Delta+\widetilde{\Omega}^{2}x^{2}\big)\phi(x),
\end{equation}
where $\Delta$ is the Laplacian on Euclidean space ${\Bbb R}^{D}$ and $\widetilde{\Omega}=\frac{2\Omega}{\theta}$ (with $\Omega,\theta>0$) the frequency of the corresponding harmonic oscillator. In a system of units such that $\hbar=c=1$, the only remaining dimension is length and $\Omega$ is dimensionless. 

Its kernel ${\cal K}_{\widetilde{\Omega}}(x,y)$ defined by
\begin{equation}
\int d^{D}z\,\delta^{D}(x-z) \big(-\Delta_{z}+\widetilde{\Omega}^{2}z^{2}\big)\,{\cal K}_{\widetilde{\Omega}}(z,y)=\delta^{D}(x-y),
\end{equation}
with $\delta^{D}$ the Dirac distribution on ${\Bbb R}^{D}$, is the Mehler kernel 
\begin{equation}
{\cal K}_{\widetilde{\Omega}}(x,y)=\left(\frac{\widetilde{\Omega}}{2\pi}\right)^{D/2}\int_{0}^{\infty}\frac{d\alpha}{\big[\sinh2\widetilde{\Omega}\alpha\big]^{D/2}}
\exp-\frac{\widetilde{\Omega}}{4}\left\{(x-y)^{2}\coth\widetilde{\Omega}\alpha+(x+y)^{2}\tanh\widetilde{\Omega}\alpha \right\}.\end{equation}
To avoid ultraviolet divergences, we introduce a cut-off as a lower bound on the integral over $\alpha$,
\begin{multline}
{\cal K}_{\widetilde{\Omega}}(x,y)\rightarrow\cr\left(\frac{\widetilde{\Omega}}{2\pi}\right)^{D/2}\int_{1/\Lambda^{2}}^{\infty}\frac{d\alpha}
{\big[\sinh2\widetilde{\Omega}\alpha\big]^{D/2}}
\exp-\frac{\widetilde{\Omega}}{4}\left\{(x-y)^{2}\coth\widetilde{\Omega}\alpha+(x+y)^{2}\tanh\widetilde{\Omega}\alpha \right\}.\label{Mehler}
\end{multline}
Since this paper is not concerned with the limit $\Lambda\rightarrow \infty$, we will always self-understand that the integration over $\alpha$ ranges over $\left[\frac{1}{\Lambda^{2}},\infty\right]$. Later on, it will also prove convenient to introduce $t=\tanh(\widetilde{\Omega}\alpha)$ as well as the short and long variables
\begin{equation}
u=\frac{1}{\sqrt{2}}(x-y)\quad\mathrm{and}\quad v=\frac{1}{\sqrt{2}}(x+y),
\end{equation}
so that the propagator reads
\begin{equation}
{\cal K}_{\widetilde{\Omega}}(x,y)=\left(\frac{\widetilde{\Omega}}{2\pi}\right)^{D/2}\int_{1/\Lambda^{2}}^{\infty}d\alpha\left[\frac{(1-t^{2})}{2t}\right]^{D/2}
\exp-\frac{1}{2}\left\{\frac{\widetilde{\Omega}}{t}\,u^{2}+\widetilde{\Omega} t\,v^{2} \right\}.\label{Mehler2}
\end{equation}

The interaction term is derived form the Moyal product 
\begin{equation}
f\star g\,(x)=\frac{1}{\pi^{D}|\det\Theta|}\int\, d^{D}y\,d^{D}z\,f(x+y)f(x+z)\mathrm{e}^{-2\imath y\Theta^{-1}z},
\end{equation}
with $\Theta$ a real, non degenerate,  antisymmetric $D\times D$ matrix, with $D$ even.  In the sequel, we assume\footnote{Otherwise the amplitude cannot be written as \eqref{amplitude} and the hyperbolic polynomial are not defined.} that $\Theta=\theta J$, with $\theta>0$ and $J$ the antisymmetric $D\times D$ block diagonal matrix made of $2\times 2$ blocks $\begin{pmatrix}0&1\cr -1&0\end{pmatrix}$.  We define the interaction term  as
\begin{equation}
S_{\text{int}}[\phi]=
\sum_{n\geq1}\frac{g_{n}}{n}
\int d^{D}x\, \phi^{{\star}n}(x),\label{intstar}
\end{equation}
where $g_{n}\in {\Bbb R}$ are coupling constants. In the sequel, it will be necessary to express explicitely the Moyal interaction as a functional of the fields 
\begin{equation}
S_{\text{int}}[\phi]=
\sum_{n\geq1}\frac{g_{n}}{n}
\int d^{D}x_{1}\cdots d^{D}x_{n}{\cal V}_{n}(x_{1},\dots,x_{n})\phi(x_{1})\cdots\phi(x_{n}).
\end{equation}
${\cal V}_{n}(x_{1},\dots,x_{n})$ is a distibution on $({\Bbb R}^{D})^{n}$, invariant under cyclic permutations,
\begin{equation}
{\cal V}_{n}(x_{1},x_{2},\dots,x_{n})={\cal V}_{n}(x_{n},x_{1},\dots,x_{n-1}).\label{cyclic}
\end{equation}
In the commutative limit $\theta\rightarrow0$, it reduces to a product of Dirac distributions 
\begin{equation}
\lim_{\theta\rightarrow 0}{\cal V}_{n}(x_{1},\dots,x_{n})=\prod_{j\atop
j\neq i}\delta(x_{i}-x_{j})\label{vertexcomm},
\end{equation}
which is invariant under all permutations of $\left\{1,2,\dots,n\right\}$.

Turning back to the noncommutative case  $\theta\neq 0$,  in lower degree we have ${\cal V}_{1}(x_{1})=1$ and ${\cal V}_{2}(x_{1},x_{2})=\delta^{D}(x_{1}-x_{2})$. The first interesting interaction is ${\cal V}_{3}$
\begin{equation}
{\cal V}_{3}(x_{1},x_{2},x_{3})=
\frac{1}{{(\pi\theta)^{D}}}
\exp-\frac{2\imath}{\theta}\Big\{x_{1}\cdot J x_{2}+x_{2}\cdot J x_{3}+x_{3}\cdot J x_{1}\Big\}.
\end{equation}
The last expression of ${\cal V}_{3}$ is very convenient since we can associate to it a triangle with vertices $x_{1}$, $x_{2}$ and $x_{3}$ drawn in cyclic order around its boundary, oriented counterclockwise. In the sequel, it will be convenient to express higher order vertices using triangles glued together in a tree-like manner.

\begin{prop}
\label{planetree}
Let $T$ be a plane tree (i.e. a connected acyclic graph embedded in the plane) with all its inner vertices of degree $3$ and its edges labelled using the index set $I$ and  let $i_{1},\dots,i_{n}$ be the cyclically ordered labels of some of the edges attached  to the leaves (terminal vertices), in counterclockwise order around the tree. Then,
\begin{equation}
{\cal V}_{n}(x_{i_{1}},\dots,x_{i_{n}})=\int\prod_{i\in I-\left\{i_{1},\dots,i_{n}\right\}}d^{D}x_{i}\hskip-0.2cm\prod_{v\atop
\mathrm{ vertices\, of\, }T}\hskip-0.2cm
\frac{\exp-\frac{2\imath}{\theta}\Big\{x_{i_{v}}\cdot J x_{j_{v}}+x_{j_{v}}\cdot J x_{k_{v}}+x_{k_{v}}\cdot J x_{i_{v}}\Big\}}{(\pi\theta)^{D}},
\label{vertex}
\end{equation}
with $i_{v},j_{v},k_{v}$ the labels of the cyclically ordered edges incident to $v$.
\end{prop}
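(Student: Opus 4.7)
The argument proceeds by induction on $n$ and ultimately reduces to the associativity of the Moyal $\star$-product together with the explicit form of the three-point kernel ${\cal V}_3$ stated just before the proposition.

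For the base case $n=3$, the only plane tree with three leaves and trivalent inner vertices has a single inner vertex, and (\ref{vertex}) reduces literally to the formula for ${\cal V}_3$. For $n\geq 4$, one observes that any such $T$ contains a \emph{cherry} $v_0$: an inner vertex incident to two leaf-edges, which by planarity must carry consecutive cyclic labels $i_k,i_{k+1}$, plus one internal edge labelled $j$ leading to the rest of $T$. Contracting $v_0$ yields a plane tree $T'$ with $n-1$ leaves in cyclic order $i_1,\dots,i_{k-1},j,i_{k+2},\dots,i_n$ and all remaining inner vertices still trivalent. Isolating in the RHS of (\ref{vertex}) the triangle factor associated to $v_0$ and applying the induction hypothesis to $T'$ rewrites that RHS as
\begin{equation*}
\int d^Dx_j\,\frac{\exp\bigl[-\tfrac{2\imath}{\theta}\{x_{i_k}\!\cdot\! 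Jx_{i_{k+1}}+x_{i_{k+1}}\!\cdot\! Jx_j+x_j\!\cdot\! Jx_{i_k}\}\bigr]}{(\pi\theta)^D}\,{\cal V}_{n-1}(x_{i_1},\dots,x_j,\dots,x_{i_n}).
\end{equation*}
It then suffices to show that this integral equals ${\cal V}_n(x_{i_1},\dots,x_{i_n})$ as a distribution. I would do so by pairing both sides with an arbitrary product $\phi(x_{i_1})\cdots\phi(x_{i_n})$ and integrating: by the defining property of ${\cal V}_n$, the target side produces $\int d^Dx\,\phi^{\star n}(x)$; on the candidate side, performing the $x_{i_k},x_{i_{k+1}}$ integrations first recognises the triangle kernel at $v_0$ as the integral kernel of a single Moyal product $(\phi\star\phi)(x_j)$, after which the induction hypothesis yields $\int d^Dx\,\phi\star\cdots\star(\phi\star\phi)\star\cdots\star\phi$; associativity of $\star$ and cyclic invariance of $\int d^Dx$ then match the two expressions.

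The one genuinely analytical step, and the main obstacle, is therefore the \emph{gluing identity}
\begin{equation*}
(f\star g)(z) = \frac{1}{(\pi\theta)^D}\int d^Dx\,d^Dy\,f(x)g(y)\exp\bigl[-\tfrac{2\imath}{\theta}\{x\!\cdot\! Jy+y\!\cdot\! Jz+z\!\cdot\! Jx\}\bigr],
\end{equation*}
which identifies the three-point triangle kernel with the kernel of a single star product. This is a Gaussian/Fresnel computation: the shift $x\to x+z$, $y\to y+z$ brings the exponent to the standard Moyal form displayed earlier in the paper, using crucially that $\theta J$ is non-degenerate. Once this identity is in hand, the induction closes, and independence of (\ref{vertex}) from the particular tree $T$ realising a given cyclic ordering of leaves is automatic, since both sides compute the same distribution ${\cal V}_n$.
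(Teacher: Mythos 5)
Your overall strategy is sound and close in spirit to the paper's: the paper also argues by induction (on the number of inner vertices rather than on $n$), splits $T$ by cutting an internal edge into two subtrees, and closes the induction with a gluing lemma $\int d^{D}y\,{\cal V}_{n'+1}(\dots,y)\,{\cal V}_{n-n'+1}(y,\dots)={\cal V}_{n}(\dots)$ proved exactly as you propose, by smearing with test functions and invoking $\int d^{D}x\,f\star g(x)=\int d^{D}x\,f(x)g(x)$ together with associativity of $\star$. Your cherry-peeling is the special case of the paper's edge-cut in which one of the two subtrees is a single triangle, and your ``gluing identity'' is the $n'=2$ instance of the paper's lemma; so the analytic core is the same. (One bookkeeping point to verify: with the paper's counterclockwise orientation convention for the triangles and $\Theta=\theta J$, the kernel $\exp-\frac{2\imath}{\theta}\{x\cdot Jy+y\cdot Jz+z\cdot Jx\}$ reproduces the star product only for one of the two orderings of $f$ and $g$; as written your identity appears to give $(g\star f)(z)$ rather than $(f\star g)(z)$, so the orientation of the cherry must be matched to the order of the factors.)

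The genuine gap is that your induction, with base case $n=3$ and reduction via a cherry whose \emph{two} leaf edges both carry labels $i_{k},i_{k+1}$, does not cover the statement as formulated. The proposition allows $i_{1},\dots,i_{n}$ to label only \emph{some} of the leaf edges; the remaining leaf variables are integrated over in \eqref{vertex}. This is not a marginal case: it is exactly how ${\cal V}_{1}(x_{1})=1$ and ${\cal V}_{2}(x_{1},x_{2})=\delta^{D}(x_{1}-x_{2})$ arise from a single triangle, and it is used later in the paper (isolated internal corners acting as Lagrange multipliers). Your base case omits $n=1,2$, and your inductive step can fail to find a usable cherry: for a caterpillar tree in which each of the two end cherries has one unselected leaf, no inner vertex carries two consecutive selected leaf edges, so the reduction does not apply. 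To repair this you need the identity $\frac{1}{(\pi\theta)^{D}}\int d^{D}y\,\exp-\frac{2\imath}{\theta}\{y\cdot Jz\}=\delta^{D}(z)$ (equation \eqref{delta} in the paper) to integrate out each unselected leaf variable first, which collapses the corresponding triangle by identifying its two remaining corners, and then to run your induction on the resulting tree whose leaf edges are all selected. With that preliminary step, and the $n=1,2$ base cases added, your argument closes.
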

\begin{figure}[!htp]
  \centering
  \includegraphics[scale=.6]{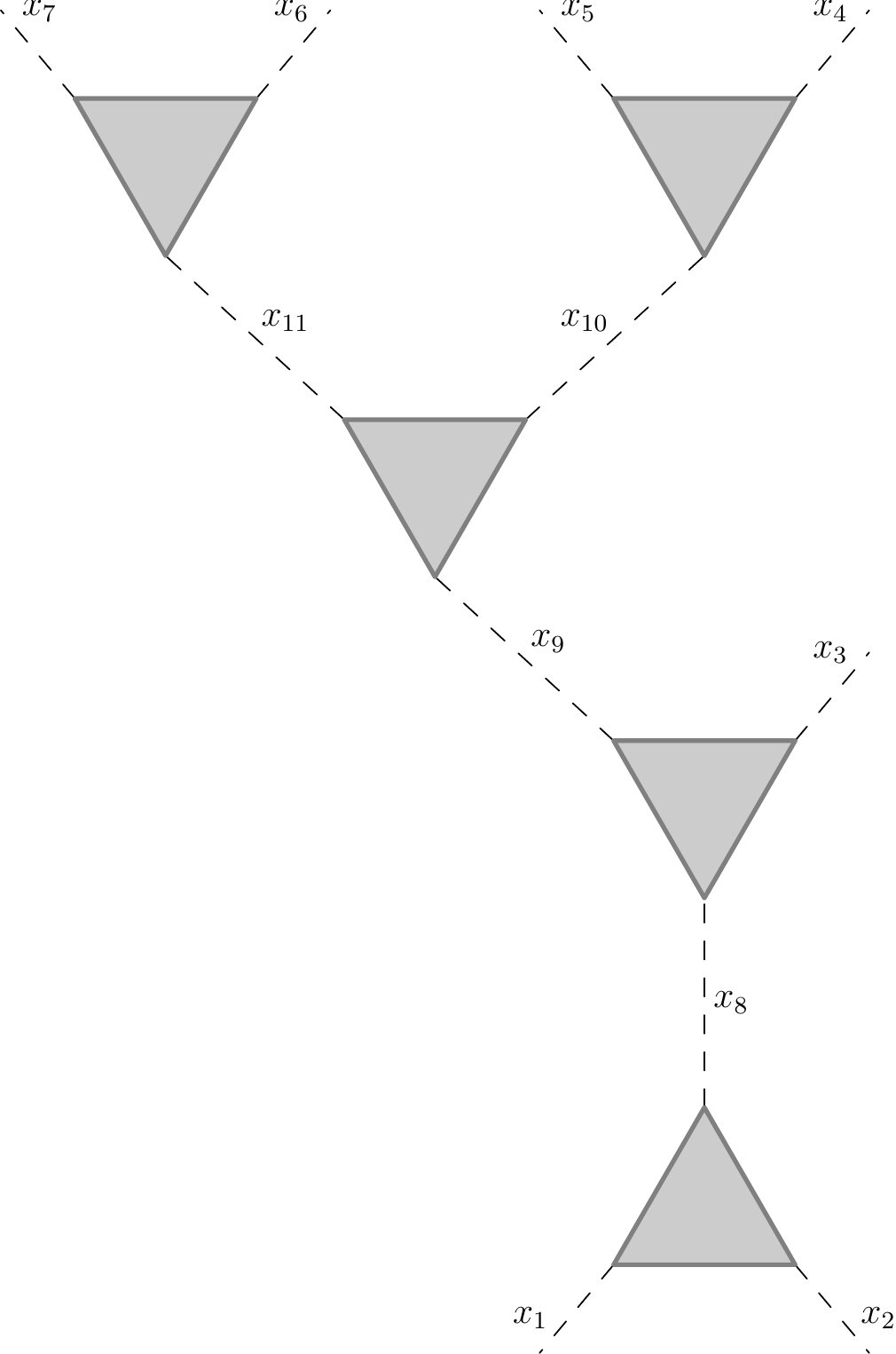}
  \caption{A heptagonal Moyal vertex}
  \label{fig:TreeVertex}
\end{figure}
\begin{proof}
  Let us prove this result by induction on the number of inner
  vertices of $T$. If $T$ has a single inner vertex, then the equality
  for $n=3$ is trivial whereas for $n=1,2$ it results from the
  identity
  \begin{equation}
    \frac{1}{(\pi\theta)^{D}}\int d^{D}y\exp-{\textstyle \frac{2\imath}{\theta}}\left\{y\cdot J z\right\}=\delta^{D}(z).\label{delta}
  \end{equation}
  Next, we suppose the result valid for all trees of order less than
  $m$ and consider a tree $T$ of order $m+1$. Cut an inner edge in $T$
  with label $i_{0}$, which splits $T$ into $T'$ and $T''$.  Without
  loss of generality, let us assume that $i_{1},\dots,i_{n'},i_{0}$
  are the labels of the leaves of $T'$. Then, we separate the vertices
  of $T$ into vertices of $T'$ and $T''$ and use the induction
  assumption for $T'$ and $T''$,
  \begin{align}
    \int\prod_{i\in
      I-\left\{i_{1},\dots,i_{n}\right\}}d^{D}x_{i}\hskip-0.2cm\prod_{v\atop
      \mathrm{ vertices\, of\, }T}\hskip-0.2cm
    \frac{\exp-\frac{2\imath}{\theta}\Big\{x_{i_{v}}\cdot J
      x_{j_{v}}+x_{j_{v}}\cdot J x_{k_{v}}+x_{k_{v}}\cdot J
      x_{i_{v}}\Big\}}{(\pi\theta)^{D}} =\cr \int
    d^{D}x_{i_{0}}\,{\cal
      V}_{n'+1}(x_{i_{1}},\dots,x_{i_{n'}},x_{i_{0}})\,{\cal
      V}_{n-n'+1}(x_{i_{0}},x_{i_{n'+1}},\dots,x_{i_{n}}).
  \end{align}

  To conclude, we need the following lemma.
\begin{lemma}
  The vertices of Moyal interaction obey
  \begin{equation}
    \int d^{D}y\,{\cal V}_{n'+1}(x_{i_{1}},\dots,x_{i_{n'}},y)\,{\cal V}_{n-n'+1}(y,x_{i_{n'+1}},\dots,x_{i_{n}})=
    {\cal V}_{n}(x_{i_{1}},\dots,x_{i_{n}})
  \end{equation}
  for any integer $1\leq n'\leq n-1$.
\end{lemma}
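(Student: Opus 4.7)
My plan is to derive this gluing identity from the defining property of $\mathcal{V}_n$ as the integral kernel of the Moyal interaction, combined with the two structural features of the $\star$-product: associativity, and the trace identity $\int(f\star g)(x)\,d^Dx = \int f(x)g(x)\,d^Dx$. No explicit Gaussian calculation is needed.

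The basic input is that $\mathcal{V}_n$ is characterized by
\begin{equation*}
\int d^Dx_1\cdots d^Dx_n\,\mathcal{V}_n(x_1,\dots,x_n)\,\phi_1(x_1)\cdots\phi_n(x_n) = \int d^Dx\,(\phi_1\star\cdots\star\phi_n)(x)
\end{equation*}
for arbitrary test functions $\phi_1,\dots,\phi_n$ (this is just \eqref{intstar} rewritten as a multilinear functional, using that the right-hand side is cyclically invariant in the $\phi_i$). Set $\psi = \phi_1\star\cdots\star\phi_{n'}$ and $\chi = \phi_{n'+1}\star\cdots\star\phi_n$. By associativity, $\phi_1\star\cdots\star\phi_n = \psi\star\chi$, and the trace identity collapses the right-hand side above to $\int\psi(y)\chi(y)\,d^Dy$. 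Applying the same characterization once more to $\psi$ and to $\chi$ separately---testing each against an auxiliary field and using the cyclic invariance \eqref{cyclic} of $\mathcal{V}$ to move the free argument into the desired slot---one obtains
\begin{align*}
\psi(y) &= \int \mathcal{V}_{n'+1}(x_1,\dots,x_{n'},y)\,\phi_1(x_1)\cdots\phi_{n'}(x_{n'})\,d^Dx_1\cdots d^Dx_{n'},\\
\chi(y) &= \int \mathcal{V}_{n-n'+1}(y,x_{n'+1},\dots,x_n)\,\phi_{n'+1}(x_{n'+1})\cdots\phi_n(x_n)\,d^Dx_{n'+1}\cdots d^Dx_n.
\end{align*}
Substituting these into $\int\psi(y)\chi(y)\,d^Dy$, invoking Fubini, and comparing the resulting multilinear form in the $\phi_i$ with $\int\mathcal{V}_n\,\phi_1\cdots\phi_n$, the arbitrariness of the test functions yields the claimed identity.

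The only delicate point is the precise cyclic placement of the integrated variable $y$: it must occupy the last slot of $\mathcal{V}_{n'+1}$ and the first slot of $\mathcal{V}_{n-n'+1}$, which is exactly what reassembles the subtrees $T'$ and $T''$ into a plane tree with a coherent cyclic orientation at the newly reconstituted inner edge. This boundary positioning is obtained by repeatedly invoking \eqref{cyclic} to transport $y$ from whichever slot it naturally emerges in after the trace computation. Once this bookkeeping is done, everything else is routine and the lemma closes the induction establishing Proposition \ref{planetree}.
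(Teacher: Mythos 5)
Your proof is correct and follows essentially the same route as the paper's: both arguments rest on the smeared characterization $\int \mathcal{V}_{n}\,\phi_{1}\cdots\phi_{n}=\int \phi_{1}\star\cdots\star\phi_{n}$, the trace identity $\int f\star g=\int fg$, and associativity of the $\star$-product. The only difference is bookkeeping: the paper Fourier-decomposes $\delta^{D}(y-y')$ and treats the plane waves $e_{\pm k}$ as extra test-function insertions before resumming over $k$, whereas you identify the partially smeared vertex directly with the $\star$-product of the inserted functions evaluated at the free slot --- a slightly cleaner handling of the intermediate variable $y$, and your remark that \eqref{cyclic} is needed to transport $y$ into the first slot of $\mathcal{V}_{n-n'+1}$ is precisely the point requiring care.
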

\begin{proof}[of the lemma]
  \begin{gather}
    \int d^{D}y\,{\cal V}_{n'+1}(x_{i_{1}},\dots,x_{i_{n'}},y)\,{\cal
      V}_{n-n'+1}(y,x_{i_{n'+1}},\dots,x_{i_{n}})=\cr \int
    d^{D}y\,d^{D}y\,'\,\delta^{D}(y-y')\,{\cal
      V}_{n'+1}(x_{i_{1}},\dots,x_{i_{n'}},y)\,{\cal
      V}_{n-n'+1}(y',x_{i_{n'+1}},\dots,x_{i_{n}})=\cr \int
    \frac{d^{D}k}{(2\pi)^{D}}d^{D}y\,d^{D}y\,'\,\,{\cal
      V}_{n'+1}(x_{i_{1}},\dots,x_{i_{n'}},y)e_{k}(y)\;{\cal
      V}_{n-n'+1}(y',x_{i_{n'+1}},\dots,x_{i_{n}})e_{-k}(y')
  \end{gather}
  with $e_{k}(x)=\exp\imath kx$. Smearing out with functions
  $f_{i_{1}},\dots,f_{i_{n}}$, we thus have
  \begin{gather}
    \int d^{D}y\,d^{D}x_{i_{1}}\cdots d^{D}x_{i_{n}}\, {\cal
      V}_{n'+1}(x_{i_{1}},\dots,x_{i_{n'}},y)\,{\cal
      V}_{n-n'+1}(y,x_{i_{n'+1}},\dots,x_{i_{n}})
    f_{i_{1}}(x_{i_{1}})\dots f_{i_{n}}(x_{i_{n}}) =\cr \int
    \frac{d^{D}k}{(2\pi)^{D}}\, \int d^{D}x f_{i_{1}}\star\cdots \star
    f_{i_{n'}}\star e_{k}(x) \, \int d^{D}x' f_{i_{n'+1}}\star\cdots
    \star f_{i_{n}}\star e_{-k}(x')=\cr \int
    \frac{d^{D}k}{(2\pi)^{D}}\, \int d^{D}x f_{i_{1}}\star\cdots \star
    f_{i_{n'}}(x) e_{k}(x) \, \int d^{D}x' f_{i_{n'+1}}\star\cdots
    \star f_{i_{n}}(x') e_{-k}(x')=\cr \int d^{D}x
    \,f_{i_{1}}\star\cdots \star
    f_{i_{n'}}(x)\;f_{i_{n'+1}}\star\cdots \star f_{i_{n}}(x)=\cr \int
    d^{D}x \,f_{i_{1}}\star\cdots \star f_{i_{n'}} \star
    f_{i_{n'+1}}\star\cdots \star f_{i_{n}}(x)\cr \int
    d^{D}x_{i_{1}}\cdots d^{D}x_{i_{n}}\,{\cal
      V}_{n}(x_{i_{1}},\dots,x_{i_{n}}) \,f_{i_{1}}(x_{i_{1}}) \cdots
    f_{i_{n}}(x_{i_{n}})
  \end{gather}
  where we have repeatedly used
  \begin{equation}
    \int d^{D}x \,f\star g(x)=\int d^{D}x f(x)g(x).
  \end{equation}
\end{proof}
The lemma ends the proof of \eqref{vertex}.
\end{proof}

In what follows, we always assume that such a tree has been chosen for
every vertex, all choices leading to the same distribution ${\cal
  V}_{n}$. Moreover, since ${\cal V}_{3}$ is conveniently represented
as a triangle, we represent the contribution of each vertex of $T$ as
a triangle whose vertices are called corners, see figure
\ref{fig:TreeVertex} for an example.

\subsection{Parametric representation and the hyperbolic polynomials}

Formal perturbative quantum field theory can be compactly formulated within the background field method. In this approach, the  main object is the background field effective action  defined by the expansion over Feynman graphs (we normalize the path integral in such a way that it takes the value $1$ when all the coupling constants vanish)
\begin{align}
-\log\int[{\cal D}\chi]\,\exp-\left\{S_{0}[\chi]+ S_{\text{int}}[\phi+\chi]\right\}
=\hskip3cm{}\cr 
-\sum_{G\,\mathrm{connected}\,\mathrm{ribbon}\,\mathrm{graph}\atop\mathrm{with}\,f(G)\,\mathrm{flags}}
{\textstyle \frac{(-g)^{v(G)}}{S_{G}f(G)!}}
\,\int \prod_{1\leq i\leq f(G)}d^{D}x_{i}\quad{\cal A}_{G}(x_{1},\dots,x_{f(G)})\prod_{1\leq i\leq f(G)}\phi(x_{i}).\label{background}
\end{align}
Since the interaction vertices are invariant under cyclic permutations (see \eqref{cyclic}), the sum runs over all ribbon graphs. The graph also have $f(G)$ flags, which are half-lines that carry the labels of the field insertions $\phi(x_{1})\cdots\phi(x_{f(G)})$. $S_{G}$ is the symmetry factor of the graph (cardinality of the automorphism group of the graph, leaving the flags fixed), $(-g)^{V(G)}=\prod_{v\in V(G)}(-g_{d_{v}})$, with $d_{v}$ the degree of $v$ and ${\cal A}_{G}$ is the amplitude, to be defined below. 

In the sequel, it will prove convenient to  allow edge dependent oscillator frequencies $\Omega_{e}$, so that we recover the amplitude appearing in \eqref{background} by setting $\Omega_{e}=\Omega$.

\begin{defn}
Let $G$ be a ribbon graph with flags and let us attach a variable  $x_{i}\in{\Bbb R}^{D}$ to each flag of $G$ and $\Omega_{e}>0$ to each edge.  The (generalized) amplitude of a ribbon graph with flags is the distribution defined as
\begin{equation} 
{\cal A}_{G}\big[\Omega,x\big]=\int \prod_{i\notin F(G)} d^{D}y_{i}
\prod_{e\in E(G)}{\cal K}_{\frac{2\Omega_{e}}{\theta}}(y_{i_{e,+}},y_{i_{e,-}})\prod_{v\in V(G)}{\cal V}_{d_{v}}(y_{i_{v,1}},\dots,y_{i_{v.d_{v}}}),\label{defamplitude}
\end{equation}
where we integrate over variables $y_{i}\in{\Bbb R}^{D}$ associated to each half-edge of $G$, with the convention that for a flag we set $y_{i}=x_{f}$ without integrating over $x_{i}$. $y_{i_{e,+}},y_{i_{e,-}}$ the variables attached to the ends of $e$ (the order does not matter since the Mehler kernel is symmetric) and $y_{i_{v,1}},\dots,y_{i_{v.d_{v}}}$ the variables attached in cyclic order around vertex $v$.
\end{defn}

In the commutative case $\theta=0$, the vertex \eqref{vertexcomm} enforces the identification of all the corners (internal and external) attached to the same vertex and is invariant under all permutations of the half-edges incoming to a vertex. Therefore, the amplitude is assigned to ordinary (i.e. non ribbon) graphs, with flags replaced by external vertices.

\begin{defn}\label{defamplitudecommutative}
Let $\underline{G}=(V,V_{\mbox{\tiny ext}},E)$ be a graph  with $V_{\mbox{\tiny ext}}\subset V$ the external vertices to which variables $x_{v}\in{\Bbb R}^{D}$ are assigned.  Let us attach a variable $y_{v}\in{\Bbb R}^{D}$ to each vertex of $\underline{G}$, with the convention that  $y_{v}=x_{v}$ for an external vertex. The (generalized) commutative amplitude of a graph with external vertices is defined as
\begin{align} 
{\cal A}_{\underline{G}}^{\mathrm{ commutative}}\big[\Omega_{e},x_{v}\big]=\int \prod_{v\in V-V_{\mbox{\tiny ext}}} d^{D}y_{v}
\prod_{e\in E}{\cal K}_{\Omega_{e}}(y_{v_{e,+}},y_{v_{e,-}}),
\label{DefGenAmplitude}
\end{align}
with $\Omega_{e}$ the edge dependent frequency and $v_{e,+}$ and $v_{e,-}$ the vertices $e$ is attached.
\end{defn}
The commutative amplitude is recovered as a limiting case.
\begin{prop}\label{commlimrop}
Let $G$ be a ribbon graph with flags and let $V_{\mbox{\tiny ext}}(G)$ be the subset of vertices of $G$ carrying flags. Then, for the graph with external vertices $\underline{G}=(V(G),V_{\mbox{\tiny ext}}(G),E(G))$
\begin{align}
{\cal A}_{\underline{G}}^{\mathrm{ commutative}}\,\big[\Omega_{e}, x_{v}\big]\,\prod_{v\in V^{\mbox{\tiny ext}}(G)}\bigg\{\prod_{f\in F_{v}}\delta(x_{v}-x_{f})\bigg\}
=\lim_{\theta\rightarrow 0}{\cal A}_{G}\big[{\textstyle\frac{\theta}{2}}\Omega_{e},x_{f}\big],
\end{align} 
with $\underline{G}=(E(G),V(G), V_{\mbox{\tiny ext}}(G))$ and $F_{v}(G)$ the set of flags attached to $v$ in $G$ and $(x_{v})_{\in V_{\mbox{\tiny ext}}(G)}$ and $(x_{f})_{f \in F(G)}$ independent variables.
\end{prop}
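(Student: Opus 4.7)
The plan is to track every $\theta$-dependent factor in $\mathcal{A}_G[\tfrac{\theta}{2}\Omega_e,x_f]$, show that after the rescaling only the Moyal vertex distributions $\mathcal{V}_{d_v}$ retain any $\theta$-dependence, and then apply the pointwise limit \eqref{vertexcomm} vertex by vertex.

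First I would observe that under the substitution $\Omega_e\to\tfrac{\theta}{2}\Omega_e$ one has $\widetilde{\Omega}_e=\tfrac{2}{\theta}\cdot\tfrac{\theta}{2}\Omega_e=\Omega_e$, so the Mehler kernels $\mathcal{K}_{\widetilde{\Omega}_e}$ appearing in \eqref{defamplitude} become $\mathcal{K}_{\Omega_e}$: these are $\theta$-independent and coincide exactly with the kernels of Definition \ref{defamplitudecommutative}. All the $\theta$-dependence of $\mathcal{A}_G[\tfrac{\theta}{2}\Omega_e,x_f]$ is therefore concentrated in the vertex factors $\mathcal{V}_{d_v}$.

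Next I would invoke \eqref{vertexcomm} at every vertex $v$, so that in the distributional sense $\mathcal{V}_{d_v}$ tends to a product of Dirac deltas identifying all $d_v$ corner variables (flag and internal half-edge variables) at $v$. I would then collapse these deltas vertex by vertex. At an internal vertex $v$ (no flags) the independent deltas absorb $d_v-1$ of the half-edge integrations and leave a single integration $\int d^{D}y_v$ together with the identification of the $d_v$ half-edge endpoints at $v$, which is precisely the prescription of Definition \ref{defamplitudecommutative}. At an external vertex $v$ the same mechanism collapses the $d_v-|F_v|$ non-flag half-edge integrations to a single variable; renaming that variable $x_v$ uses one delta, and the remaining $|F_v|$ deltas produce exactly $\prod_{f\in F_v}\delta(x_v-x_f)$, which is the factor displayed on the left-hand side. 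Collecting the contributions of all vertices gives $\mathcal{A}^{\mathrm{commutative}}_{\underline{G}}[\Omega_e,x_v]\cdot\prod_{v\in V_{\mbox{\tiny ext}}(G)}\prod_{f\in F_v}\delta(x_v-x_f)$, as required.

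The main obstacle is the justification of the exchange of $\lim_{\theta\to 0}$ with the half-edge integrations: $\mathcal{V}_n$ is an oscillatory distribution rather than an honest function, and the Mehler kernels provide only Gaussian (not uniform) bounds in the integration variables. I would handle this by testing both sides against Schwartz test functions in the $x_f$, inserting the tree representation \eqref{vertex} for each $\mathcal{V}_{d_v}$, and reducing to repeated applications of the elementary Fourier identity \eqref{delta}; the Gaussian decay from each $\mathcal{K}_{\Omega_e}$ then permits dominated convergence on the remaining integrations and identifies the limit with the commutative amplitude.
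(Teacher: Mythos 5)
Your proposal is correct and follows essentially the same route as the paper's own (very terse) proof: observe that the substitution $\Omega_e\to\frac{\theta}{2}\Omega_e$ turns the Mehler kernels into the $\theta$-independent kernels $\mathcal{K}_{\Omega_e}$ of the commutative amplitude, and then let the vertex limit \eqref{vertexcomm} produce the Dirac identifications at each vertex. You simply spell out the vertex-by-vertex collapse of the deltas and the dominated-convergence justification that the paper leaves implicit in its ``follows immediately from \eqref{vertexcomm}.''
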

\begin{proof}
  Note that in the commutative case we use oscillators of frequency
  $\Omega_{e}$ instead of $\frac{2\Omega_{e}}{\theta}$. Then,
  proposition \ref{commlimrop} follows immediately from
  \eqref{vertexcomm}.
\end{proof}

Even in the general noncommutative case, the integral over all the corners is Gau\ss ian, thanks to peculiar form of the Mehler kernel \eqref{Mehler} and of the Moyal vertex \eqref{vertex}. Therefore, the amplitude can be expressed in parametric form as follows, as was first shown in \cite{gurauhypersyman} .

\begin{theorem/definition}
The generalized amplitude \eqref{amplitude} of the Grosse-Wulkenhaar
model for a ribbon graph $G$ (which does not contain an isolated vertex with an even number of flags) with $e(G)$ edges, $v(G)$ vertices and $f(G)$ flags carrying variables $x_{i}\in{\Bbb R}^{D}$  is
\begin{equation}
{\cal A}_{G}(x)=\int {\textstyle \prod_{e}\!d\alpha_{e}}\, \left[\frac{2^{f(G)}\prod_{e}\Omega_{e}(1-t_{e}^{2})}
{(2\pi\theta)^{e(G)+f(G)-v(G)}\mathrm{HU}_{G}(\Omega,t)}
\right]^{D/2}
\exp-\frac{1}{\theta}\left\{\frac{\mathrm{HV}_{G}(\Omega,t,x)}{\mathrm{HU}_{G}(\Omega,t)}\right\},\label{amplitude}
\end{equation}
where the first hyperbolic polynomial $\mathrm{HU}_{G}(\Omega,t)$ is a polynomial in the edge variables $\Omega_{e}$ and $t_{e}=\tanh\frac{2\Omega_{e}\alpha_{e}}{\theta}$ and the second  hyperbolic polynomial $\mathrm{HV}_{G}(\Omega,t,x)$ is a linear combination of the products $x_{i}\cdot x_{j}$ and $x_{i}\cdot Jx_{j}$, whose coefficients are polynomials in $\Omega_{e}$ and $t_{e}$. 
\label{defHUHV}
\end{theorem/definition}
\begin{proof}
  The key idea is to write the amplitude \eqref{DefGenAmplitude} as a
  Gau\ss ian integral. To begin with, let us first derive a more
  systematic expression of ${\cal A}_{G}$. First, we represent each
  vertex using a plane tree made of triangles, as in proposition
  \ref{planetree}. The corners of the triangles attached to the flags
  of $G$ are the external corners while the other corners over which
  we integrate are called internal corners.  The internal corners come
  in three types: related by an edge, common to two triangles or
  isolated. In this last case, the variable attached to the internal
  corner acts as a Lagrange multiplier, as in \eqref{delta}. Since all the
  triangles are oriented counterclockwise, we define an antisymmetric
  adjacency matrix $\zeta$ between the corners (internal and external)
  by
  \begin{equation}
    \left\{
      \begin{array}{rcrl}
        \zeta_{ij}&=&1&\mbox{if there is a triangle edge oriented from $i$ to $j$},\\
        \zeta_{ij}&=&-1&\mbox{if there is a triangle edge oriented from $j$ to $i$},\\
        \zeta_{ij}&=&0&\mbox{if there is no triangle edge between $i$ and $j$}.\\
      \end{array}
    \right.
  \end{equation}
  Let us denote by $C^{\mbox{\tiny int}}_{v}$ (resp. $C^{\mbox{\tiny
      ext}}_{v}$ ) the set of internal (resp. external) corners
  attached to the vertex $v$ and define the matrix $\alpha$
  (resp. $\beta$, $\gamma$) by restricting $\zeta$ to the lines and
  columns in $C^{\mbox{\tiny int}}_{v}$ (resp. lines in
  $C^{\mbox{\tiny int}}_{v}$ and columns in $C^{\mbox{\tiny
      ext}}_{v}$, lines and columns in $C^{\mbox{\tiny
      ext}}_{v}$). Using \eqref{vertex}, the contribution of the
  vertex $v$ to ${\cal A}_{G}$ can be written as
  \begin{equation}
    \frac{1}{(\pi\theta)^{D|T_{v}|}}
    \exp-\frac{\imath}{\theta}\Big\{
    \sum_{i,j\in C^{\mbox{\tiny int}}_{v}}\alpha_{ij}\,x_{i}\cdot Jx_{j}+
    2\sum_{i\in C^{\mbox{\tiny int}}_{v},\,j\in C^{\mbox{\tiny ext}}_{v}}\beta_{ij}\,y_{i}\cdot Jx_{j}+
    \sum_{i,j\in C^{\mbox{\tiny ext}}_{v}}\gamma_{ij}\,x_{i}\cdot Jx_{j}\Big\},
  \end{equation} 
  with $|T_{v}|$ the number of triangles used in the chosen tree-like
  representation of $v$.

  In order to define the short and long variables for all edges, we
  choose an arbitrary orientation on the edges of $G$ and introduce
  the incidence matrix $\epsilon$ between the edges and the internal
  corners
  \begin{equation}
    \left\{
      \begin{array}{rcrl}
        \epsilon_{ei}&=&1&\mbox{if $e$ arrives at $i$},\\
        \epsilon_{ei}&=&-1&\mbox{if $e$ leaves $i$},\\
        \epsilon_{ei}&=&0&\mbox{if $e$ is not attached to $i$}.\\
      \end{array}
    \right.
  \end{equation} 
  The long and short variables associated with the edge $e$ are
  \begin{equation}
    u_{e}={\textstyle\frac{1}{\sqrt{2}}} \big(\sum_{i}\epsilon_{ei}y_{i}\big)\quad\mathrm{and}
    \quad v_{e}={\textstyle \frac{1}{\sqrt{2}}}\big(\sum_{i}|\epsilon_{ei}|y_{i}\big),
  \end{equation}
  with $x_{i}$ the variables attached to the corners. We enforce these
  relations by inserting $\delta$-functions with Lagrange multipliers
  $\lambda_{e}$ and $\mu_{e}$ in the definition of ${\cal A}_{G}$
  \begin{equation}
    \int \frac{d\lambda_{e}}{(\pi\theta)^{D}}\exp-\frac{2\imath}{\theta}\left\{\lambda_{e}\cdot J\left[u_{e}-\frac{1}{\sqrt{2}}\left(\sum_{i}\epsilon_{ei}y_{i}\right)\right]\right\}
  \end{equation}
  and
  \begin{equation}
    \int \frac{d\mu_{e}}{(\pi\theta)^{D}}\exp-\frac{2\imath}{\theta}\left\{\mu_{e}\cdot J\left[v_{e}-\frac{1}{\sqrt{2}}\left(\sum_{i}|\epsilon_{ei}|y_{i}\right)\right]\right\}.
  \end{equation}
  Gathering all the terms together, the expression of the amplitude
  reads
  \begin{equation} {\cal A}_{G}=\int{\textstyle
      \prod_{e}\!d\alpha_{e}}\quad\frac{1}{{\cal N}}\times\int d^{N}X
    \exp\left\{-\frac{1}{2}{}^{t}XAX+\imath{}^{t}XB+\frac{C}{2}\right\},\label{Gaussian}
  \end{equation}
  where
  \begin{equation}
    X=\sqrt{\frac{2}{\theta}}\Big(u_{e},v_{e},\lambda_{e},\mu_{e},y_{i}\Big)
  \end{equation}
  is a variable in ${\Bbb R}^{N}$ with $N=4e(G)D+|C^{\mbox{\tiny
      int}(G)}|D$. $A$ is a symmetric $N\times N$ matrix
  \begin{equation}
    A=\begin{pmatrix}
      \mbox{diag}(\Omega_{e}/t_{e})\otimes \mathrm{I}_{D}&0&\imath\mathrm{I}_{e(G)}\otimes J&0&0\cr
      0&\mbox{diag}(\Omega_{e}t_{e})\otimes \mathrm{I}_{D}&0&\imath\mathrm{I}_{e(G)}\otimes J&0\cr
      -\imath\mathrm{I}_{e(G)}\otimes J&0&0&0&\frac{\imath}{\sqrt{2}}\epsilon\otimes J\cr
      0&-\imath\mathrm{I}_{e(G)}\otimes J&0&0&\frac{\imath}{\sqrt{2}}|\epsilon|\otimes J\cr
      0&0&-\frac{\imath}{\sqrt{2}}{}^{t}\epsilon\otimes J&-\frac{\imath}{\sqrt{2}}{}|^{t}\epsilon|\otimes J&\imath\alpha\otimes J
    \end{pmatrix},
  \end{equation}
  with $\mathrm{I}_{M}$ the identity $M\times M$ matrix. $B\in {\Bbb
    R}^{N}$ and $C\in\imath{\Bbb R}$ are defined by
  \begin{equation}
    B=\sqrt{\frac{2}{\theta}}\,\Big(0,0,0,0\sum_{j\in C^{\mbox{\tiny ext}}_{v}}\beta_{ij} Jx_{j}\Big)
    \quad\mbox{and}\quad
    C=-\frac{2\imath}{\theta}\sum_{i,j\in C^{\mbox{\tiny ext}}_{v}}\gamma_{ij}\, x_{i}\cdot Jx_{j}.
  \end{equation}
  Finally, the normalization factor is
  \begin{equation} {\cal
      N}=\prod_{e}\left[\frac{2\Omega_{e}(1-t_{e}^{2})}{\theta\times
        2\pi \times
        2t_{e}}\right]^{D/2}\times\frac{1}{(\pi\theta)^{2e(G)D}}\times\frac{1}{(\pi\theta)^{|T(G)|D}}\times
    \Big(\frac{\theta}{2}\Big)^{N/2},
  \end{equation}
  whose respective contributions are the normalization factors of the
  Mehler kernels \eqref{Mehler}, the $\delta$ functions for the short
  and long variables, the contributions of the vertices ($|T(G)|$ is
  the total number of triangles in the representation of all vertices
  of $G$) and the Jacobian of the change of variables to $X$.

  We are now in a position to perform the Gau\ss ian integration over
  $X$ in \eqref{Gaussian},
  \begin{equation} {\cal A}_{G}=\int{\textstyle
      \prod_{e}\!d\alpha_{e}}\quad\frac{(2\pi)^{N/2}}{{\cal
        N}\sqrt{\det A}}\times\int d^{N}Z
    \exp\left\{-\frac{1}{2}{}^{t}BA^{-1}B+\frac{C}{2}\right\},\label{Gaussian2}
  \end{equation}
  where we assumed $A$ to be invertible, as it should be the case by
  its construction. Alternatively, one could have replaced $A$ by
  $A=\lambda \mathrm{I}_{N}$ with $\lambda$ large enough and show
  afterwards that the limit $\lambda\rightarrow 0$ is well
  defined. For simplicity, we do not do this here and will show, in proposition \ref{nonzero}, that $\det A>0$ provided $G$ does not
  contain an isolated graph with an even number of flags.

  To simplify the normalization factor, let us first derive a
  topological relation between the number of triangles and internal
  corners of any representation of the vertices of $G$ using
  triangles. In each case, the graph obtained by joining the center of
  adjacent triangles is a forest (i.e. a graph without cycles) with
  $|T(G)|$ vertices and $v(G)$ connected components, so that there are
  $|T(G)|-v(G)$ corners common to two triangles. Next, each
  triangle has $3$ corners, which are either attached to flags or
  internal corners, with the internal corners common to two triangles
  counted twice. Accordingly, $3|T(G)|=|C^{\mbox{\tiny
      int}(G)}|+f(G)+(|T(G)|-v(G))$, so that
  \begin{equation}
    2|T(G)|=|C^{\mbox{\tiny int}(G)}|+f(G)-v(G).\label{numberoftriangles}
  \end{equation} 
  Using this relation, we get
  \begin{equation}
    \frac{(2\pi)^{D/2}}{{\cal N}}=\frac{(2\pi\theta)^{|C^{\mbox{\tiny int}(G)}|}}{(2\pi\theta)^{e(G)}(\pi\theta)^{2||T(G)||}2^{{|C^{\mbox{\tiny int}(G)}|}}}=
    2^{f(G)-v(G)}(2\pi\theta)^{v(G)-e(G)-f(G)}.
  \end{equation}

  To define $\mathrm{HU}_{G}(\Omega,t)$ it is helpful to note that the
  matrix $A$ can be written as
  \begin{equation}
    A=D\otimes \mathrm{I}_{D}+R\otimes \imath J,
  \end{equation} 
  with $D$ diagonal and $R$ antisymmetric. The matrix
  $P^{-1}\imath JP$ with $P$ the $D\times D$ block diagonal matrix
  made of $2\times 2$ blocks
  $\begin{pmatrix}\frac{\imath}{\sqrt{2}}&
    -\frac{\imath}{\sqrt{2}}\cr\frac{1}{\sqrt{2}}&\frac{1}{\sqrt{2}}\end{pmatrix}$
  is diagonal with blocks
  $\begin{pmatrix}1&0\cr0&-1\end{pmatrix}$. Therefore,
  \begin{equation}
    \det A=\big[\det(D+R)\big]^{D/2}\times\big[\det(T-R)\big]^{D/2}=\big[\det(D+R)\big]^{D}
  \end{equation}  
  since $\det(T-R)=\det{}^{t}(D-R)=\det(D+R)$. Thus,
  \begin{equation}
    \mathrm{HU}_{G}(\Omega,t)=2^{v(G)}\,\big[{\textstyle \prod_{e}t_{e}}\big]\,\det(D+R)\label{defHU}
  \end{equation}
  is a polynomial in $t_{e}$ (because of the multiplication by
  $\prod_{e}t_{e}$) and in $\Omega_{e}$ and
  \begin{equation}
    \frac{(2\pi)^{N/2}}{{\cal N}\sqrt{\det A}}=\left[\frac{2^{f(G)}\prod_{e}\Omega_{e}(1-t_{e}^{2})}
      {(2\pi\theta)^{e(G)+f(G)-v(G)}\mathrm{HU}_{G}(\Omega,t)}
    \right]^{D/2},
  \end{equation}
  which corresponds to the prefactor in \eqref{amplitude}.

  Finally, taking into account \eqref{defHU}, we define the second
  hyperbolic polynomial as
  \begin{equation}
    \mathrm{HV}_{G}(\Omega,t,x)=2^{v(G)}\theta\big[{\textstyle \prod_{e}t_{e}}\big]\,\det(D+R)\big[{}^{t}BA^{-1}B+C\big].
    \label{defHV}
  \end{equation}
  The only non trivial assertion to check is its polynomial dependence
  on $t_{e}$. The latter follows from
  \begin{equation}
    A^{-1}=(D+R)^{-1}\otimes\big({\textstyle \frac{1+\imath J}{2}}\big)+(T-R)^{-1}\otimes\big({\textstyle \frac{1-\imath J}{2}}\big),
  \end{equation}
  so that $\big[{\textstyle \prod_{e}t_{e}}\big]\,\det(D+R)A^{-1}$ is
  a matrix of polynomials in $t_{e}$ since $ \big[{\textstyle
    \prod_{e}t_{e}}\big]\,\det(D+R)(D+R)^{-1}$ and $\big[{\textstyle
    \prod_{e}t_{e}}\big]\,\det(T-R)(T-R)^{-1}$ are.
\end{proof}

\begin{rem}
When expressed in term of $\Omega$ and $t$, both hyperbolic polynomials $\mathrm{HU}_{G}(\Omega,t)$ and $\mathrm{HV}_{G}(\Omega,t,x)$ do not depend on $\theta$. This is the consequence of the use of the Mehler kernel ${\cal K}_{\widetilde{\Omega}}$ in the kinetic term, with $\widetilde{\Omega}=\frac{2\Omega}{\theta}$. However, there is an implicit $\theta$ dependence in the relation between $t$ and $\alpha$.
\end{rem}

\section{Hyperbolic polynomials as graph polynomials}
\label{sec:hyperb-polyn-as}
\subsection{Reduction relation for the first hyperbolic polynomial}

In general, it is not very convenient to study the hyperbolic polynomials starting from the relations \eqref{defHU} and \eqref{defHV}. It is preferable to compute the determinants by a series of successive reductions, instead of trying to manipulate them in one go. This leads to the following reduction relation.   

\begin{thm}
\label{reductionthm}
The first hyperbolic polynomial  $\mathrm{HU}_{G}$, defined by \eqref{defHU} for any ribbon graph with flags, is multiplicative over disjoint unions, obeys the reduction relation
\begin{equation}
\mathrm{HU}_{G}=t_{e}\,\mathrm{HU}_{G-e}+t_{e}\Omega_{e}^{2}\,\mathrm{HU}_{G\cut e}
+\Omega_{e}\,\mathrm{HU}_{G^{e}-e}+\Omega_{e}t_{e}^{2}\,\mathrm{HU}_{G^{e}\cut e}\label{reduction}
\end{equation}
for any edge $e$. Furthermore,  for the graph $V_{n}$ consisting of an isolated vertex with $n$ flags, we have
\begin{equation}
\mathrm{HU}_{V_{n}}=
\left\{
\begin{array}{l}
2\quad\mbox{if }n\,\mbox{is even},\cr
0\quad\mbox{if }n\,\mbox{is odd}.
\end{array}
\right.\label{isolated}
\end{equation}
\end{thm}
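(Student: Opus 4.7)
My approach is to prove \eqref{reduction} directly from the determinantal expression \eqref{defHU} by extracting the contribution of the chosen edge $e$ inside $\det(D+R)$ via successive Schur complements, and to verify the base case \eqref{isolated} separately.

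First I would isolate, in the matrix $A$ of the proof of Theorem \ref{defHUHV}, the four $D$-dimensional rows and columns attached to the variables $u_e, v_e, \lambda_e, \mu_e$ of edge $e$. These form a block whose diagonal entries are $\Omega_e/t_e$ (for $u_e$) and $\Omega_e t_e$ (for $v_e$), and which couples to the rest of $A$ only through the $\lambda_e$-rows (via the incidence matrix entries $\epsilon_{ei}$) and the $\mu_e$-rows (via $|\epsilon_{ei}|$). A Schur complement on the $(u_e,v_e)$ subblock yields a determinantal factor of $\Omega_e^2$ and replaces the originally vanishing $(\lambda_e,\lambda_e)$ and $(\mu_e,\mu_e)$ diagonal entries by $-t_e/\Omega_e$ and $-1/(\Omega_e t_e)$ respectively, while leaving the couplings to the endpoint corners untouched. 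I would then expand the remaining determinant along the $\lambda_e$ and $\mu_e$ rows/columns into four terms: each of $\lambda_e$ and $\mu_e$ contributes either via its newly produced diagonal entry (which decouples that variable from the corners, effectively \emph{deleting} its constraint) or via its off-diagonal coupling to the corners (which enforces a Lagrange-multiplier-type identification of corner variables, effectively \emph{contracting}). The four combinations are to be matched with $G-e$, $G\cut e$, $G^e-e$ and $G^e\cut e$, with weights $t_e$, $t_e\Omega_e^2$, $\Omega_e$ and $\Omega_e t_e^2$ arising from combining the Schur factors $\Omega_e^2$, $-t_e/\Omega_e$, $-1/(\Omega_e t_e)$ with the overall prefactor $\prod_{e'} t_{e'}$ of \eqref{defHU}.

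The hard part will be the bookkeeping step: in each of the four cases one must verify that the residual matrix is exactly the $(D+R)$-matrix associated to the claimed reduced graph, on its correct set of corners and edges. This reduces to recognising that deletion of $e$ corresponds to removing its coupling rows, that cutting $e$ corresponds to decoupling its two half-edges (so that their corner variables are no longer shared), and, most importantly, that partial duality along $e$ corresponds at the level of the incidence data to exchanging $\epsilon_{ei}$ with $|\epsilon_{ei}|$, i.e.\@ swapping the roles of the short and long Mehler variables of $e$. Multiplicativity over disjoint unions is immediate from the block-diagonality of $A$ and of $\prod_{e'} t_{e'}$. Finally, the base case \eqref{isolated} is handled separately: the values $\mathrm{HU}_{V_n}=2$ for even $n$ and $\mathrm{HU}_{V_n}=0$ for odd $n$ are fixed as the unique choices that make the recursion \eqref{reduction} consistent with the explicit determinantal formula on the smallest non-trivial graphs (e.g.\@ a single edge between two vertices, or a single loop on one vertex), where both sides of \eqref{reduction} can be computed directly and compared.
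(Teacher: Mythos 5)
Your overall strategy---expand $\det(D+R)$ with respect to the rows and columns attached to the edge $e$ and match the four resulting terms with $G-e$, $G\cut e$, $G^{e}-e$, $G^{e}\cut e$---is essentially the paper's strategy; the paper merely packages the same expansion as a Grassmann--Gau\ss ian integral (and explicitly remarks that the operations ``can be performed on the lines and columns of $D+R$''). Your coefficient bookkeeping via the Schur complement on the $(u_{e},v_{e})$ block is sound and does reproduce the weights $t_{e}$, $t_{e}\Omega_{e}^{2}$, $\Omega_{e}$, $\Omega_{e}t_{e}^{2}$. The genuine gap is precisely in the step you defer as ``bookkeeping'': identifying the residual matrix with $D+R$ of the claimed reduced graph. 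This is not routine, for two reasons. First, the prefactor $2^{v(G)}$ in \eqref{defHU} changes under the four operations ($v(G^{e}-e)=v(G)\pm1$ according to whether $e$ is a loop or not), so one must track exactly where the compensating powers of $2$ come from; in the paper they arise from the $\frac{1}{\sqrt{2}}$ entries of $\epsilon$ and from a reorganisation of the vertex's triangle decomposition. Second, when $e$ is a loop the ``contraction'' does not simply identify corners of two distinct vertices: one has to choose the tree-of-triangles representation so that the two corners $i,j$ sit on adjacent triangles and then show that the identification $\eta_{y_i}=\pm\eta_{y_j}$ collapses those two triangles with an overall trivial factor. Your sketch never distinguishes the loop from the non-loop case, and without that distinction the matching of the four terms to the four graphs cannot be completed.

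The treatment of the terminal form \eqref{isolated} is also a genuine gap. For an isolated vertex $V_{n}$ the quantity $\mathrm{HU}_{V_{n}}=2\det(\alpha)$ is a specific determinant (of the antisymmetric adjacency matrix of the internal corners of the triangle decomposition), valid for \emph{all} $n$; it cannot be ``fixed by consistency'' on a single edge or loop, since that only constrains finitely many values and, worse, presupposes an independent direct evaluation of $\mathrm{HU}$ on those small graphs, which is exactly the same kind of determinant computation you are trying to avoid. The paper instead computes $\det(\alpha)$ directly: for one parity the matrix is odd-dimensional and antisymmetric, hence singular; for the other, $\det(\alpha)=\big[\mbox{Pf}(\alpha)\big]^{2}$ and the corner graph admits a unique perfect matching, so $\mbox{Pf}(\alpha)=\pm1$. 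Had you actually run your consistency check on the bridge $B_{m,n}$ you would also have noticed that the computation forces $\mathrm{HU}_{V_{n}}=2$ for $n$ \emph{odd} and $0$ for $n$ \emph{even} (consistent with $r_{2n+1}=2$, $r_{2n}=0$ in theorem \ref{HU=Q} and with all the worked examples), i.e.\@ the opposite parity from the one you assert.
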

\begin{proof}
  Let us recall the defining relation \eqref{defHU} of the first
  hyperbolic polynomial as a determinant,
  \begin{equation}
    \mathrm{HU}_{G}(\Omega,t)=2^{v(G)}\,\big[{\textstyle \prod_{e}t_{e}}\big]\,\det(D+R).
  \end{equation}
  The multiplicativity follows readily from \eqref{defHU} since the
  adjacency and incidence matrices of a disjoint union are block
  diagonal.

  Although all the graphical operations appearing in the reduction
  relations can be performed on the lines and columns of $D+R$, it is
  much more economical to derive them using technics from Grassmannian
  calculus (see for instance \cite{Abdesselam2009aa} for a recent overview of Grassmannian calculus). To proceed, write the determinant as a Gau\ss ian
  integral over Grassmann variables with
  $\left\{\rho,\sigma\right\}\in\left\{u_{e},v_{e},\lambda_{e},\mu_{e},y_{i}\right\}$\footnote{For
    the sake of clarity we use here the same letter for indices and
    the corresponding integration variables in the previous section.},
  \begin{equation}
    \det(D+R)=\int \prod_{\rho}d\overline{\psi}_{\rho}d\psi_{\rho}\exp-\Big\{\sum_{\rho,\sigma}\overline{\psi}_{\rho}(D+R)_{\rho\sigma}\psi_{\sigma}\Big\}.
  \end{equation}
  Next, we perform the change of variables of
  \begin{equation}
    \left\{
      \begin{array}{rcl}
        \overline{\psi}_{\rho}&=&\frac{1}{\sqrt{2}}(\chi_{\rho}-\imath\eta_{\rho}),\cr
        \psi_{\rho}&=&\frac{1}{\sqrt{2}}(\chi_{\rho}+\imath\eta_{\rho}),
      \end{array}
    \right.\quad \mbox{with Jacobian}\quad
    \frac{D(\overline{\psi},\psi)}{D(\chi,\eta)}=\imath.
  \end{equation}

  Because all Grassmann variables anticommute, the determinant is
  expressed as
  \begin{multline}
    \det (D+R)=\cr\int \prod_{\rho}\big[-\imath d{\chi}_{\rho}d\eta_{\rho}\big]\,\exp\imath\Big\{\sum_{\rho}d_{\rho}\chi_{\rho}\eta_{\rho}\Big\}
    \exp-{\textstyle \frac{1}{2}}\Big\{\sum_{\rho,\sigma}R_{\rho\sigma}(\chi_{\rho}\chi_{\sigma}+\eta_{\rho}\eta_{\sigma})\Big\},
  \end{multline}
  with $d_{\rho}=\frac{\Omega_{e}}{t_{e}}$
  (resp. $d_{\rho}=\Omega_{e}t_{e}$) for $\rho=u_{e}$
  (resp. $\rho=v_{e}$) and 0 otherwise. Note that $\int
  \prod_{\rho}d{\chi}_{\rho}\exp-\frac{1}{2}\Big\{\sum_{\rho,\sigma}R_{\rho\sigma}\chi_{\rho}\chi_{\sigma}\Big\}$
  (or the equivalent expression using $\eta$) is the Pfaffian of the
  antisymmetric matrix $R_{\rho\sigma}$.

  Let us select a particular edge $e$ from the corner $i$ to the
  corner $j$ and expand the related exponential
  \begin{equation}
    \begin{array}{rcl}
      \det (D+R)&=&{\displaystyle \int }\prod_{\rho}\big[-\imath d{\chi}_{\rho}d\eta_{\rho}]
      \Big(1+\imath\frac{\Omega_{e}}{t_{e}}\eta_{u_{e}}\chi_{u_{e}}+\imath\Omega_{e}t_{e}\eta_{v_{e}}\chi_{v_{e}}
      -(\Omega_{e})^{2}\eta_{u_{e}}\chi_{u_{e}}\eta_{v_{e}}\chi_{v_{e}}\Big)\cr
      &&\times \exp\imath\Big\{\sum_{\rho\neq u_{e},v_{e}}d_{\rho}\chi_{\rho}\eta_{\rho}\Big\}
      \exp-\frac{1}{2}\Big\{\sum_{\rho,\sigma}R_{\rho\sigma}(\chi_{\rho}\chi_{\sigma}+\eta_{\rho}\eta_{\sigma})\Big\}
      \label{reductiongrass}
    \end{array}.
  \end{equation}
  Moreover, since the operations on the variables $\eta$ and $\chi$
  are identical and independent, we perform them explicitely only on
  $\eta$. In the sequel, we repeatedly use the following elementary
  result from Grassmannian calculus
  \begin{lemma}
    Let $F$ be a function of the Grassmann variables
    $\eta_{1},\eta_{2},\dots$ (i.e. an element of the exterior algebra
    generated by $\eta_{1},\eta_{2},\dots $). Then
    \begin{equation}
      \int d\eta_{1}\,\eta_{1}F(\eta_{1},\eta_{2},\dots)=
      F(0,\eta_{2},\dots),\label{delta1}
    \end{equation}
    and its corollary, the integral representation of the Grassmannian
    $\delta$ function
    \begin{equation}
      \int d\eta_{0}d\eta_{1}\,\exp a\left\{\eta_{0}\eta_{1}\right\}F(\eta_{1},\eta_{2},\dots)=
      d\eta_{1}\,a\delta(\eta_{1})F(\eta_{1},\eta_{2},\dots)=
      aF(0,\eta_{2},\dots).\label{delta2}
    \end{equation}
  \end{lemma}

  It is convenient to explicit all the terms involving the edge $e$ in
  the Pfaffian
  \begin{equation}
    \sum_{\rho,\sigma}R_{\rho\sigma}\eta_{\rho}\eta_{\sigma}=
    \eta_{u_{e}}\eta_{\lambda_{e}}+\lambda_{v_{e}}\eta_{\mu_{e}}
    +{\textstyle \frac{1}{\sqrt{2}}}\eta_{\lambda_{e}}\big(\eta_{y_{j}}-\eta_{y_{i}}\big)
    +{\textstyle \frac{1}{\sqrt{2}}}\eta_{\mu_{e}}\big(\eta_{y_{j}}+\eta_{y_{i}}\big)+\dotsm.
  \end{equation}
  To alleviate the expressions, we make the convention that in the
  following we only represent the part of the Grasmann integral
  affected by the equations.

  The first term in \eqref{reductiongrass},
  \begin{gather}
    \int d\eta_{u_{e}}d\eta_{v_{e}}d\eta_{\lambda_{e}}d\eta_{\mu_{e}}\cr
    \exp-\left\{\eta_{u_{e}}\eta_{\lambda_{e}}+\lambda_{v_{e}}\eta_{\mu_{e}}
      +{\textstyle\frac{1}{\sqrt{2}}}\eta_{\lambda_{e}}\big(\eta_{y_{j}}-\eta_{y_{i}}\big)
      +{\textstyle \frac{1}{\sqrt{2}}}\eta_{\mu_{e}}\big(\eta_{y_{j}}+\eta_{y_{i}}\big)+\dotsm\right\},
  \end{gather}
  corresponds to the deletion of $e$ in $G$ since the integration over
  $\eta_{u_{e}}$ and $\eta_{v_{e}}$ sets
  $\eta_{\lambda_{e}}=\eta_{\mu_{e}}=0$ by using \eqref{delta2}. Then,
  the corners $i$ and $j$ remain as isolated corners. Let us note that
  the factors of $\imath$ cancel since we integrate over 4 pairs
  $\chi_{\rho}\eta_{\rho}$ and that no extra sign arise form the
  commutation of $d\chi_{\rho}$ and $d\eta_{\rho}$, since the latter
  are always performed pairwise on $\chi_{\rho}$ and $\eta_{\rho}$.

  In the second term,
  \begin{gather}
    \int d\eta_{u_{e}}d\eta_{v_{e}}d\eta_{\lambda_{e}}d\eta_{\mu_{e}}\cr
    \eta_{u_{e}}\exp-\left\{\eta_{u_{e}}\eta_{\lambda_{e}}\!+\!\lambda_{v_{e}}\eta_{\mu_{e}}
      \!+\!{\textstyle\frac{1}{\sqrt{2}}}\eta_{\lambda_{e}}\big(\eta_{y_{j}}\!-\!\eta_{y_{i}}\big)
      \!+\!{\textstyle \frac{1}{\sqrt{2}}}\eta_{\mu_{e}}\big(\eta_{y_{j}}\!+\!\eta_{y_{i}}\big)+\dotsm\right\},
  \end{gather}
  we have $\eta_{u_{e}}=0$ and the integration over $\eta_{v_{e}}$
  sets $\eta_{\mu_{e}}=0$ with an extra sign. The remaining
  integration over $\eta_{\lambda_{e}}$ enforces
  $\delta(\eta_{y_{i}}-\eta_{y_{j}})$ after integration over with an
  extra factor of $-1/\sqrt{2}$ using $\eqref{delta2}$. To relate this
  operation to the deletion in the partial dual $G^{e}$, we need to
  distinguish two cases.
  \begin{itemize}

  \item If $e$ is not a loop, then $G^{e}-e$ results from the
    identification of the corners $i$ and $j$ (belonging to two
    different vertices) to get a single vertex, as required by
    $\eta_{y_{i}}=\eta_{y_{j}}$.  Taking into account both Pfaffians
    and the prefactor $2^{v(G)}$, we get
    $\left(\frac{1}{\sqrt{2}}\right)^{2}\times2^{v(G)}=2^{v(G^{e}-e)}$.
    After taking into account the variables $\chi_{\rho}$ and
    $\eta_{\rho}$, we integrate over an odd number of pairs, so that a
    factor of $\imath$ remains, which cancels with the one in
    \eqref{reductiongrass}.

  \item Let us suppose that $e$ is a loop. Using the freedom we have
    in representing the vertex using triangles, we may always assume
    that $i$ and $j$ lie on adjacent triangles $(ikl)$ and $(jkm)$
    with a common corner $k$ and related to the remaining part of the
    graph by two additional corners $l$ and $m$.
    \begin{figure}[!htp]
      \centering
      \includegraphics[scale=.8]{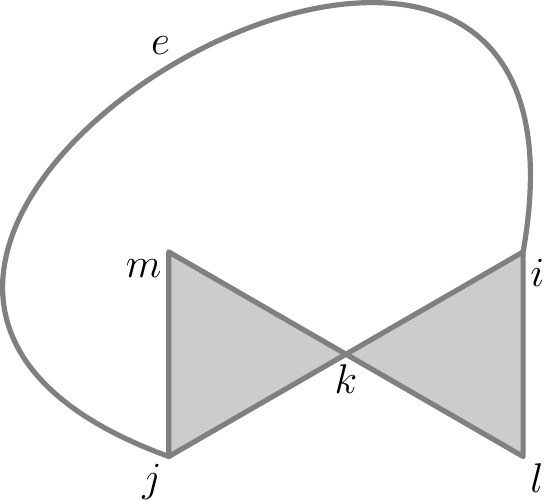}
      \caption{A loop $e$}
      \label{fig:LoopE}
    \end{figure}

    \noindent
    The contribution of the two triangles to the Pfaffian is
    \begin{equation}
      \exp-\left\{\eta_{y_{i}}\eta_{y_{k}}+\eta_{y_{k}}\eta_{y_{l}}+\eta_{y_{l}}\eta_{y_{i}}+
        \eta_{y_{j}}\eta_{y_{k}}+\eta_{y_{k}}\eta_{y_{m}}+\eta_{y_{m}}\eta_{y_{j}}\right\}.\label{twotriangles}
    \end{equation}
    After the identification $\eta_{y_{i}}=\eta_{y_{j}}$, the
    contribution of the triangles $ikl$ and $jkm$ reads
    \begin{align}
      \int d\eta_{y_{i}}d\eta_{y_{k}}\exp-\left\{
        2\eta_{y_{k}}\eta_{y_{i}}+(\eta_{y_{i}}-\eta_{y_{k}})(\eta_{y_{l}}+\eta_{y_{m}})\right\}=\cr
      \int d\eta_{+}d\eta_{-}\exp-\left\{2
        \eta_{+}\eta_{-}+\sqrt{2}\eta_{-}(\eta_{y_{l}}+\eta_{y_{m}})\right\},
    \end{align}
    using the change of variables $
    \eta_{\pm}=\frac{\eta_{y_{i}}\pm\eta_{y_{k}}}{\sqrt{2}}$. Using
    \eqref{delta2}, the integration over $\eta_{+}$ sets $\eta_{-}=0$
    with an extra $-2$, so that the contribution of the two triangles is
    trivial. Therefore, we suppress the latter, which is nothing but
    the deletion of $e$ in $G^{e}$. Finally, the factors of $2$ are
    $\left(\frac{1}{\sqrt{2}}\right)^{2}\times 2^{2}\times
    2^{v(G)}=2\times 2^{v(G)}=2^{v(G^{e}-e)}$ since $e$ is a loop.
    The signs and factors of $\imath$ also cancel since we
    integrate over 7 pairs of variables.
  \end{itemize}

  The third term,
  \begin{gather}
    \int d\eta_{u_{e}}d\eta_{v_{e}}d\eta_{\lambda_{e}}d\eta_{\mu_{e}}\cr
    \eta_{v_{e}}\!\exp-\left\{\eta_{u_{e}}\eta_{\lambda_{e}}\!+\!\eta_{v_{e}}\eta_{\mu_{e}}
      \!+\!{\textstyle\frac{1}{\sqrt{2}}}\eta_{\lambda_{e}}\big(\eta_{y_{j}}\!-\!\eta_{y_{i}}\big)
      \!+\!{\textstyle \frac{1}{\sqrt{2}}}\eta_{\mu_{e}}\big(\eta_{y_{j}}\!+\!\eta_{y_{i}}\big)+\dotsm\right\},
  \end{gather}
  is very similar to the second one, except that the integration over
  $\eta_{u_{e}}$, $\eta_{v_{e}}$, $\eta_{\lambda_{e}}$ results in
  \begin{equation}
    \int d\eta_{\mu_{e}}\exp-\left\{ (\eta_{\mu_{e}})\frac{\eta_{y_{i}}+\eta_{y_{j}}}{\sqrt{2}}\right\}.
  \end{equation}
  Again, we distinguish two cases
  \begin{itemize}
  \item If $e$ is not a loop, let us set a new variable
    $\eta_{y_{p}}=\frac{\eta_{\mu_{e}}}{\sqrt{2}}$ ($p$ does not
    correspond to an existing corner in $G$), so that
    \begin{equation}
      \int d\eta_{\mu_{e}}\exp-\left\{ (\eta_{\mu_{e}})\frac{\eta_{y_{i}}+\eta_{y_{j}}}{\sqrt{2}}\right\}
      =\frac{1}{\sqrt{2}}\int d\eta_{y_{p}}\exp-\left\{ \eta_{y_{p}}\eta_{y_{i}}+\eta_{y_{p}}\eta_{y_{j}}\right\}.
    \end{equation}
    This is the contribution of two triangles $(piq)$ and $(pjr)$
    attached by a common corner $p$ with flags on $q$ and $r$, so that
    there are no terms in $\eta_{y_{q}}=\eta_{y_{r}}0$. Graphically,
    it corresponds to identifying the corners $i$ and $j$ with two
    extra flags separating the two parts of the graph that were
    attached to the corners $i$ and $j$. This is the cut of $e$ in the
    partial dual $G^{e}$.
  \item If $e$ is a loop, then we perform the integration over
    $\eta_{\mu_{e}}$ which enforces $\eta_{y_{i}}+\eta_{y_{j}}=0$ with
    an extra factor $\frac{-1}{\sqrt{2}}$. As in the discussion of the
    second case, without loss of generality we assume that $i$ and $j$
    lie on adjacent triangles $(ikl)$ and $(jkm)$ whose contribution
    is given by \eqref{twotriangles}.  After the identification
    $\eta_{y_{j}}=-\eta_{y_{i}}$, we are left with
    \begin{align}
      \int d\eta_{y_{i}}d\eta_{y_{k}}\exp-\left\{
        (\eta_{y_{i}}-\eta_{y_{k}})\eta_{y_{l}}+
        (\eta_{y_{i}}+\eta_{y_{k}})\eta_{y_{m}}\right\}=\cr \int
      d\eta_{+}d\eta_{-}\exp-\left\{
        \sqrt{2}\eta_{-}\eta_{y_{l}}+{\textstyle
          \sqrt{2}}\eta_{+}\eta_{y_{m}}\right\},
    \end{align}
    using the change of variables $
    \eta_{\pm}=\frac{\eta_{y_{i}}\pm\eta_{y_{k}}}{\sqrt{2}}$. Using
    \eqref{delta2}, the integration over $\eta_{+}$ and $\eta_{-}$
    sets $\eta_{y_{l}}=\eta_{y_{m}}=0$ with an extra factor of 2, so
    that the contribution of the two triangles is trivial. Therefore,
    we suppress the latter, which is nothing but the deletion of $e$
    in $G^{e}$. As in the previous case, all the factors of $-1$, $2$
    and $\imath$ cancel after we take into account the
    contributions of both Pfaffians.
  \end{itemize}

  The fourth term,
  \begin{gather}
    \int d\eta_{u_{e}}d\eta_{v_{e}}d\eta_{\lambda_{e}}d\eta_{\mu_{e}}\cr
    \eta_{u_{e}}\eta_{v_{e}}\!\exp-\left\{\eta_{u_{e}}\eta_{\lambda_{e}}+\lambda_{v_{e}}\eta_{\mu_{e}}
      +{\textstyle\frac{1}{\sqrt{2}}}\eta_{\lambda_{e}}\big(\eta_{y_{j}}-\eta_{y_{i}}\big)
      +{\textstyle \frac{1}{\sqrt{2}}}\eta_{\mu_{e}}\big(\eta_{y_{j}}+\eta_{y_{i}}\big)+\dotsm\right\},
  \end{gather}
  represents the cut of $e$ in $G$ since the integration over
  $\eta_{u_{e}}$ and $\eta_{v_{e}}$ sets
  $\eta_{\lambda_{e}}=\eta_{\mu_{e}}=0$ by using \eqref{delta1}. The
  remaining integrations over $\eta_{\lambda_{e}}$ and
  $\eta_{\mu_{e}}$ can be written as
  \begin{equation}
    \int d{\eta^{+}}d{\eta^{-}}\exp\left\{\eta_{+}\eta_{y_{j}}+\eta_{-}\eta_{y_{i}}\right\}
    \quad\mbox{with}\quad \eta_{\pm}=\frac{\eta_{\mu_{e}}\pm\eta_{\lambda_{e}}}{\sqrt{2}},
  \end{equation} 
  which imposes $\eta_{y_{i}}=\eta_{y_{j}}=0$. Graphically, this means
  that the corners $i$ and $j$ become flags, which yields $G\cut
  e$. Here, the are neither powers of 2, nor extra signs arising from
  the operations. However, we integrate over 6 pairs of variables, so
  that the Jacobians yield -1, which cancel with the sign in
  \eqref{reductiongrass}.

  Finally, let us prove the assertion concerning the isolated
  vertices. In this case, $D+R$ reduces to $\alpha$, the antisymmetric
  adjacency matrix $\alpha$ of the internal corners of the graph,
  defined in the proof of theorem \ref{defHUHV}. For a vertex with an
  even number of flags, we have an odd number of internal corners
  because of the relation \eqref{numberoftriangles}, so that
  \begin{equation}
    \mathrm{HU}_{V_{2n}}=2\det(\alpha)=0.
  \end{equation}
  In case of an even number of flags
  \begin{equation}
    \mathrm{HU}_{V_{2n+1}}=2\det(\alpha)=2\big[\mbox{Pf}(\alpha)\big]^{2}.
  \end{equation}
  Recall that the Pfaffian of a $2n\times 2n$ antisymmetric matrix is
  defined as
  \begin{equation}
    \mbox{Pf}(\alpha)=\sum_{\pi\in \Pi_{n}}(-1)^{\mathrm{sign}(\pi)}\alpha_{\pi(1),\pi(2)}\alpha_{\pi(3),\pi(4)}\cdots\alpha_{\pi(2n\!-\!1),\pi(2n)}
    ,\end{equation}
  with $\Pi_{n}$ the subset of the permutations of $\left\{1,2\dots,2n\right\}$ such that $\pi(2i\!-\!1)<\pi(2i)$ for any $1\leq i\leq n$ and $\pi(1)<\pi(3)<\dots<\pi(2n-1)$. Accordingly, if $\alpha$ is the adjacency matrix of a graph, its Pfaffian is a sum over all its perfect matchings, with relative signs. In the case of the graph build with the edges of the triangles pertaining to a vertex of odd degree and with all the external corners and the triangle edges attached to them removed, it is easy to show by induction on the number of triangles, that there is a unique perfect matching on the triangle edges, with the convention that the empty graph has a unique perfect matching, the empty one. Therefore $\mbox{Pf}(\alpha)=\pm 1$, so that $\mathrm{HU}_{V_{2n+1}}=2$.
\end{proof}

For a graph with $e(G)$ edges, the reduction relation \eqref{reduction} involves $4^{e(G)}$ operations, many of them leading to terminal forms containing a vertex of even degree. For $E(G)\geq 3$, it is therefore not very convenient to compute $\mathrm{HU}_{G}$ using the reduction relation. However, it is instructive to see how it works on the simplest examples with 1 and 2 edges. 

\begin{example}[Bridge with flags]
Let $B_{m,n}$ be the bridge (i.e. one edge and two vertices) with $m$ flags on one vertex and $n$ flags on the other one. Then, the reduction relation reads 
\begin{equation}
\mathrm{HU}_{B_{m,n}}(\Omega_{1},t_{1})=t_{1}\,\mathrm{HU}_{V_{m}\cup V_{n}}+t\Omega_{1}^{2}\,\mathrm{HU}_{V_{m+1}\cup V_{n+1}}
+\Omega_{1}\,\mathrm{HU}_{V_{m+n}}+\Omega_{1} t_{1}^{2}\,\mathrm{HU}_{V_{m+n+2}},
\end{equation},
so that we obtain
\begin{equation}
\mathrm{HU}_{B_{m,n}}(\Omega_{1},t)=
\left\{
\begin{array}{ll}
4\,t_{1}\Omega_{1}^{2}&\quad\mbox{if }\;m\;\mbox{and}\; n\; \mbox{are even},\cr
4\,t_{1}&\quad\mbox{if }\;m\;\mbox{and}\; n\; \mbox{are odd},\cr
2\,\Omega_{1}(1+t_{1}^{2})&\quad\mbox{otherwise}.
\end{array}
\right.\label{bridgeexample}
\end{equation}
\end{example}
.

\begin{example}[Loop with flags]
Let $L_{m,n}$ be the loop (i.e. one edge and one vertex) with $m$ flags on one face and $n$ flags on the other one. The reduction relation  
\begin{equation}
\mathrm{HU}_{L_{m,n}}(\Omega_{1},t_{1})=\Omega_{1}\,\mathrm{HU}_{V_{m}\cup V_{n}}+t\,\mathrm{HU}_{V_{m+n}}
+\Omega_{1} t_{1}^{2}\,\mathrm{HU}_{V_{m+1}\cup V_{n+1}}+t_{1}\Omega_{1}^{2}\,\mathrm{HU}_{V_{m+n+2}},
\end{equation}
implies
\begin{equation}
\mathrm{HU}_{L_{m,n}}(\Omega_{1},t_{1})=
\left\{
\begin{array}{ll}
4\,\Omega_{1} t_{1}^{2}&\quad\mbox{if }\;m\;\mbox{and}\; n\; \mbox{are even},\cr
4\,\Omega_{1}&\quad\mbox{if }\;m\;\mbox{and}\; n\; \mbox{are odd},\cr
2\,t_{1}(1+\Omega_{1}^{2})&\quad\mbox{otherwise}.
\end{array}
\right.\label{loopexample}
\end{equation}
\end{example}

\begin{example}[Cycle of length 2 without flags]\label{2cycleex}
Let us consider a cycle of length two without flags. The reduction relation reads 
\begin{align}
\mathrm{HU}_{\mbox{\tiny cycle}\atop\mbox{\tiny 2 edges, no flag}}(\Omega_{1},\Omega_{2},t_{1},t_{2})=\hskip3cm{}\cr
t_{1}\,\mathrm{HU}_{B_{0,0}}(\Omega_{2},t_{2})+t_{1}\Omega_{1}^{2}\,\mathrm{HU}_{B_{1,1}}(\Omega_{2},t_{2})
+\Omega_{1}\,\mathrm{HU}_{L_{0.0}}(\Omega_{2},t_{2})+\Omega_{1}t_{1}^{2}\,\mathrm{HU}_{V_{1,1}}(\Omega_{2},t_{2}),
\end{align}
and we get, using the previous two examples, 
\begin{equation}
\mathrm{HU}_{\mbox{\tiny cycle}\atop\mbox{\tiny 2 edges, no flag}}(\Omega_{1},\Omega_{2},t_{1},t_{2})=4\big(t_{1}^{2}+t_{2}^{2}\big)\Omega_{1}\Omega_{2}+4\big(\Omega_{1}^{2}+\Omega_{2}^{2}\big)t_{1}t_{2}.\label{2cycle}
\end{equation}
\end{example}

\subsection{Some properties of $\mathrm{HU}_{G}$ as a graph polynomial}

We are now ready to give the combinatorial expression of the first hyperbolic polynomial.

\begin{thm}\label{HU=Q}
The first hyperbolic polynomial can be expressed as
\begin{equation}
\mathrm{HU}_{G}(\Omega,t)={\cal Q}_{G}(t,\Omega,\Omega t^{2},t\Omega^{2},r),
\end{equation}
with $r_{2n+1}=2$ and $r_{2n}=0$, or explicitly,
\begin{multline}
\mathrm{HU}_{G}(\Omega,t)=\cr
\sum_{A,B\subset E(G)\atop\mathrm{admissible}}\left\{2^{v(G^{A})}
\Big(\prod_{e\in A^{c}\cap B^{c}}t_{e}\Big)
\Big(\prod_{e\in A^{c}\cap B}t_{e}\Omega_{e}^{2}\Big)
\Big(\prod_{e\in A\cap B^{c}}\Omega_{e}\Big)
\Big(\prod_{e\in A\cap B}\Omega_{e}t_{e}^{2}\Big)\right\},\label{expansion}
\end{multline}
with $(A,B)$ admissible if each vertex of the graph obtained from $G^{A}$ by cutting the edges in $B$ and deleting those in $B^{c}$ has an odd number of flags.
\end{thm}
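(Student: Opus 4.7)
The plan is to prove the identity $\mathrm{HU}_G = \cQ_G(t,\Omega,\Omega t^2,t\Omega^2,r)$ by induction on the number $e(G)$ of edges, checking that both sides satisfy the same four-term reduction and the same initial data on graphs without edges. The ingredients are Theorem~\ref{reductionthm} for $\mathrm{HU}_G$, Definition~\ref{def:RecursDefQ} (equivalently Lemma~\ref{lem:RedRelD}) for $\cQ_G$, together with the multiplicativity of both invariants over disjoint unions.

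For the base case, $G$ has no edges and is a disjoint union of isolated vertices with flags; by multiplicativity it suffices to treat a single $V_n$. Only $A=B=\emptyset$ contributes to $\cQ_{V_n}$, so the specialization gives $\cQ_{V_n}(t,\Omega,\Omega t^2,t\Omega^2,r)=r_n$, which equals $2$ for $n$ odd and $0$ for $n$ even; this agrees with the value of $\mathrm{HU}_{V_n}$ computed at the end of the proof of Theorem~\ref{reductionthm} via the Pfaffian of the corner-adjacency matrix. For the inductive step, I pick any $e\in E(G)$ and substitute $x_e=t_e$, $y_e=\Omega_e$, $z_e=\Omega_e t_e^2$, $w_e=t_e\Omega_e^2$ into the $\cQ$-reduction~\eqref{eq:RecDefQ}; the four resulting terms then match~\eqref{reduction} term by term, once one identifies the $\vee$ of Lemma~\ref{lem:RedRelD} with the cut $\cut$ of Definition~\ref{def:operationsEdges} (both simply keep the edge $e$ as a pair of flags instead of erasing it). Since each of the four reduced graphs has $e(G)-1$ edges, the induction hypothesis closes the argument.

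For the explicit combinatorial sum, I expand the definition~\eqref{eq:QDef} of $\cQ_G$ under the same substitution. The factor $r^{V(F_B)}=\prod_{v\in V(F_B)}r_{\deg v}$ vanishes unless every vertex of $F_B$ has odd degree, in which case it equals $2^{v(G^A)}$. Measuring the degree of $v$ in $F_B$ as its number of half-ribbons (each edge of $B$ contributes one per endpoint, loops contributing two, and each inherited flag of $G^A$ contributes one), one checks that this count coincides with the number of flags at $v$ in the graph obtained from $G^A$ by cutting the edges of $B$ and deleting those of $B^c$: cutting converts each half-edge of $B$ into a flag while deletion of $B^c$ leaves vertices and flags untouched. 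Hence the non-vanishing pairs $(A,B)$ are precisely the admissible ones, which yields the displayed expansion. The only real obstacle is bookkeeping: matching the $\vee$/$\cut$ conventions in the two reductions and verifying that the half-ribbon degree convention implicit in $\cQ$ reproduces the admissibility condition phrased in terms of flag counts; both are immediate once the definitions are unpacked.
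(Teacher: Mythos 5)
Your argument is correct and follows essentially the same route as the paper: the paper also identifies $\mathrm{HU}_G$ with the specialization of $\cQ_G$ by appealing to the fact that the reduction relation of Theorem~\ref{reductionthm}, multiplicativity, and the terminal values on isolated vertices uniquely determine the polynomial (which is exactly your induction on $e(G)$, made explicit), and then reads off the admissibility condition from $r_{2n}=0$, $r_{2n+1}=2$ in the expansion \eqref{eq:QDef}. Your extra care in matching the $\vee$/$\cut$ conventions and the half-ribbon degree count against the flag-count formulation of admissibility is a welcome, if routine, piece of bookkeeping that the paper leaves implicit.
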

\begin{proof}
  Recall that for ribbon graph with flags the polynomial ${\cal
    Q}_{G}(x,y,z,w,r)$, depending on four variables
  $(x_{e},t_{e},z_{e},w_{e})$ for each edge and a sequence
  $(r_{n})_{n\in{\Bbb N}}$, is defined as
  \begin{multline} {\cal Q}_{G}(x,y,z,w,r)=\hskip3cm{}\cr
    \sum_{A,B\subset E(G)}\left\{\Big(\prod_{v\in
        V(G^{A})}r_{d_{v}}\Big) \Big(\prod_{e\in A^{c}\cap
        B^{c}}x_{e}\Big) \Big(\prod_{e\in A^{c}\cap B}w_{e}\Big)
      \Big(\prod_{e\in A\cap B^{c}}y_{e}\Big) \Big(\prod_{e\in A\cap
        B}z_{e}\Big)\right\}
    \label{Qexpansion}
  \end{multline}
  The graph polynomial ${\cal Q}_{G}$ can be characterized as the
  unique graph polynomial which is multiplicative over disjoint
  unions, that obeys the reduction relation
  \begin{equation} {\cal Q}_{G}=x_{e}{\cal Q}_{G-e}+w_{e}{\cal
      Q}_{G\vee e}+ y_{e}{\cal Q}_{G^{e}-e}+z_{e}{\cal Q}_{G^{e}\vee
      e},
  \end{equation}
  for any edge $e\in E(G)$ and that takes the value ${\cal
    Q}_{V_{n}}(x,y,z,w,r)=r_{n}$ on a isolated vertex with $n$
  flags, see definition \ref{def:RecursDefQ}. These three conditions are precisely the content of theorem
  \ref{reductionthm}, with $x_{e}=t_{e}$, $y_{e}=\Omega_{e}$,
  $z_{e}=\Omega_{e}t_{e}^{2}$, $w_{e}=t_{e}\Omega_{e}^{2}$, $r_{2n}=0$
  and $r_{2n+1}=2$. The relation $r_{2n}=0$ reduces the summation to
  admissible subsets $(A,B)$ and $r_{2n+1}=2$ yields a factor of 2 for
  each vertex of $G_{A}$.
\end{proof}

This formula can be used to compute $\mathrm{HU}_{G}(\Omega,t)$ for
simple examples that admit many symmetries.  Otherwise there are many
possibilities for the subsets $A$ and $B$ that have to be treated, many of them being non admissible.

\begin{example}[Planar banana with three edges]\label{planarbananaex}

Let us consider the planar graph with two vertices and three edges, all of three having both ends attached to different vertices. With $A=\emptyset$, we must have $|B|$ odd. Thus, we get four terms
\begin{equation}
4t_{1}t_{2}t_{3}(\Omega_{1}^{2}+\Omega_{2}^{2}+\Omega_{3}^{2}+\Omega_{1}^{2}\Omega_{2}^{2}\Omega_{3}^{2}),
\end{equation}
If $|A|=1$, $G^{A}$ has a single vertex without flags, so that no cut could yield an odd number of flags. When $|A|=2$, let us suppose that $A=\left\{1,2\right\}$ for definiteness. Then, $G^{A}$ is a cycle with two edges 1 and 2 and an extra loop 3 attached to one of the vertices. We have 4 possibilities for $B$: $\left\{ 1\right\}$, $\left\{ 2\right\}$, $\left\{ 1,3\right\}$ and $\left\{ 2,3\right\}$, that yield the monomials:
\begin{equation}   
4\Omega_{1}\Omega_{2}(t_{1}^{2}t_{3}+t_{2}^{2}t_{3}+t_{1}^{2}t_{3}\Omega_{3}^{2}+t_{2}^{2}t_{3}\Omega_{3}^{2}).
\end{equation}
By cyclic symmetry, we construct 8 other terms that correspond to $A=\left\{2,3\right\}$ and $A=\left\{1,3\right\}$. Finally, with $|A|=3$, $G^{A}$ is a triangle and there is no way to get only odd vertices after cutting. Therefore, we obtain
\begin{multline}
\mathrm{HU}_{\mbox{\tiny planar}\atop \mbox{\tiny 3-banana}}(\Omega,t)=4\Big[t_{1}t_{2}t_{3}\big[\Omega_{1}^{2}+\Omega_{2}^{2}+\Omega_{3}^{2}+\Omega_{1}^{2}\Omega_{2}^{2}\Omega_{3}^{2}\big]
+t_{1}\Omega_{2}\Omega_{3}\big[t_{2}^{2}+t_{3}^{2}+\Omega_{1}^{2}(t_{2}^{2}+t_{3}^{2})\big]\cr
+t_{2}\Omega_{1}\Omega_{3}\big[t_{1}^{2}+t_{3}^{2}+\Omega_{2}^{2}(t_{1}^{2}+t_{3}^{2})\big]
+t_{3}\Omega_{1}\Omega_{2}\big[t_{1}^{2}+t_{2}^{2}+\Omega_{3}^{2}(t_{1}^{2}+t_{2}^{2})\big]
\Big],
\end{multline}

\end{example}

A first consequence of theorem \ref{HU=Q} is the invariance of the first hyperbolic polynomial under partial duality, provided we interchange some of the variables $\Omega$ and $t$.

\begin{cor}
For any $A\subset E(G)$, the first hyperbolic polynomial transforms under partial duality as
\begin{equation}
\mathrm{HU}_{G^{A}}(\Omega_{A},t_{A})=\mathrm{HU}_{G}(\Omega,t),
\end{equation}
with 
\begin{equation}
\left\{
\begin{array}{llcc}
[\Omega_{A}]_{e}=t_{e},& [t_{A}]_{e}=\Omega_{e}&\mbox{for }&e\in A,\cr
[\Omega_{A}]_{e}=\Omega_{e},& [t_{A}]_{e}=t_{e}&\mbox{for }&e\notin A.\cr
\end{array}
\right.
\end{equation}
\end{cor}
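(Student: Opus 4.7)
The plan is to derive the corollary by combining Theorem \ref{HU=Q}, which expresses $\mathrm{HU}_G$ as the evaluation $\cQ_G(t,\Omega,\Omega t^2, t\Omega^2, r)$ with $r_{2n+1}=2,\ r_{2n}=0$, with Corollary \ref{cor:PartialDuality}, which describes how $\cQ$ transforms under partial duality by swapping pairs of variables on the edges of the selected subset.

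First I would apply Theorem \ref{HU=Q} to the left-hand side to write
\begin{equation*}
\mathrm{HU}_{G^A}(\Omega,t)=\cQ_{G^A}\bigl(t,\Omega,\Omega t^{2},t\Omega^{2},r\bigr).
\end{equation*}
Then I would apply Corollary \ref{cor:PartialDuality} with $E'=A$ to express $\cQ_{G^A}$ as a $\cQ_G$ evaluated at variables obtained by exchanging $x\leftrightarrow y$ on $A$ and $z\leftrightarrow w$ on $A$. Concretely, with the substitution $x_e=t_e,\ y_e=\Omega_e,\ z_e=\Omega_e t_e^{2},\ w_e=t_e\Omega_e^{2}$, one gets for each $e\in A$ the quadruple $(\Omega_e,t_e,t_e\Omega_e^{2},\Omega_e t_e^{2})$ and for each $e\notin A$ the unchanged quadruple $(t_e,\Omega_e,\Omega_e t_e^{2},t_e\Omega_e^{2})$.

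The second step is to recognise that these are exactly the values taken by $(t,\Omega,\Omega t^{2},t\Omega^{2})$ at the edge parameters $(\Omega_A,t_A)$ defined in the statement: indeed for $e\in A$ the tuple $(\tilde t_e,\tilde\Omega_e,\tilde\Omega_e\tilde t_e^{2},\tilde t_e\tilde\Omega_e^{2})$ with $(\tilde t_e,\tilde\Omega_e)=(\Omega_e,t_e)$ is manifestly $(\Omega_e,t_e,t_e\Omega_e^{2},\Omega_e t_e^{2})$, and for $e\notin A$ the tuple is unchanged. Re-applying Theorem \ref{HU=Q} in reverse direction then identifies the resulting expression with $\mathrm{HU}_G(\Omega_A,t_A)$, completing the proof.

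The whole argument is a bookkeeping exercise and contains no genuine obstacle: Theorem \ref{HU=Q} and Corollary \ref{cor:PartialDuality} do all the work, and the only point requiring care is to check that the symmetric roles played by $(x,y)$ on one side and $(z,w)$ on the other are compatible with the symmetric monomial $\Omega_e t_e^{2}\leftrightarrow t_e\Omega_e^{2}$ under the swap $\Omega_e\leftrightarrow t_e$, which is obvious by inspection. Note also that the sequence $r$ is inert under the partial duality formula in Corollary \ref{cor:PartialDuality}, so the choice $r_{2n+1}=2,\ r_{2n}=0$ is preserved and no admissibility consideration needs to be revisited.
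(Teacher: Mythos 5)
Your proposal is correct and is essentially the paper's own argument: the paper also proves this corollary by combining Theorem \ref{HU=Q} with the partial-duality property of $\cQ$ (Theorem \ref{thm:PartialDualityQ} and Corollary \ref{cor:PartialDuality}), merely stating it as immediate where you spell out the variable bookkeeping. The only cosmetic difference is that you land on $\mathrm{HU}_{G^{A}}(\Omega,t)=\mathrm{HU}_{G}(\Omega_{A},t_{A})$, which is the stated identity after applying the involutive swap $\Omega\leftrightarrow\Omega_{A}$, $t\leftrightarrow t_{A}$.
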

\begin{proof}
  This an immediate consequence of the relation between
  $\mathrm{HU}_{G}$ and ${\cal Q}_{G}$ and of the transformation of
  ${\cal Q}_{G}$ under partial duality, see theorem \ref{thm:PartialDualityQ}.
\end{proof}
\begin{rem}
It is worthwhile to notice that this is a rather strong result, since the duality holds with respect to any subset of edges for all graphs, in contradistinction with the commutative case, where only the duality with respect to all edges holds for a planar graph.  Note that this property also holds for the noncommutative field theories with Moyal interaction and heat-kernel propagator (see corollary \ref{dualityU}), since in this case we obtain an evaluation of the multivariate Bollob\`as-Riodan polynomial, which is invariant under partial duality.  
\end{rem}

Let us illustrate the use of the partial duality on a simple example.

\begin{example}[Non-planar double tadpole]\label{doubletadpole}

The partial dual of a cycle of length 2 with respect to one of its edges  is the non-planar double tadpole (i.e. the non planar graph with one vertex and two edges). Thus, using the result of example \ref{2cycleex} 
\begin{equation}
\mathrm{HU}_{\mbox{\tiny non-planar}\atop
\mbox{\tiny double tadpole}}(t_{1},t_{2},\Omega_{1},\Omega_{2})=
\mathrm{HU}_{C_{2}}(t_{1},\Omega_{2},\Omega_{1},t_{2})=4\big(\Omega_{1}^{2}+t_{2}^{2}\big)t_{1}\Omega_{2}+4\big(t_{1}^{2}+\Omega_{2}^{2}\big)\Omega_{1}t_{2}.
\end{equation} 
Note that we obtain the same result if we perform the partial duality
on edge 2, since they are symmetric. Partial duality with respect to
both edges yields another cycle of length $2$, with variables all variables $\Omega$ and $t$ interchanged.
\end{example}

Before we deal with particular classes of graphs, let us show that $\mathrm{HU}_{G}$ is not identically 0, except for a particular case.

\begin{prop}
\label{nonzero}
$\mathrm{HU}_{G}$ is identically 0 only for a graph containing an isolated vertex of even degree. 
\end{prop}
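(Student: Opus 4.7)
One direction is immediate: if $G$ contains an isolated vertex with $2n$ flags, then the component $V_{2n}$ contributes a factor $\mathrm{HU}_{V_{2n}}=0$ by \eqref{isolated}, and multiplicativity of $\mathrm{HU}_G$ over disjoint unions (part of Theorem \ref{reductionthm}) forces $\mathrm{HU}_G=0$.

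For the converse I would induct on $e(G)$. The base case $e(G)=0$ follows from multiplicativity and \eqref{isolated}: $G$ is a disjoint union of isolated vertices $V_{n_i}$, and $\mathrm{HU}_G=\prod_i\mathrm{HU}_{V_{n_i}}$ is nonzero iff every $n_i$ is odd. For the inductive step, pick an edge $e$ and apply the reduction relation \eqref{reduction}. The four coefficients $t_e$, $t_e\Omega_e^2$, $\Omega_e$, $\Omega_e t_e^2$ are pairwise distinct monomials in $(\Omega_e,t_e)$, linearly independent over $\mathbb{Q}[\Omega_{e'},t_{e'}: e'\ne e]$, so $\mathrm{HU}_G\equiv 0$ iff each of $\mathrm{HU}_{G-e}$, $\mathrm{HU}_{G\cut e}$, $\mathrm{HU}_{G^e-e}$, $\mathrm{HU}_{G^e\cut e}$ vanishes. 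By the induction hypothesis, this would force each of the four reduced graphs to contain an isolated even-degree vertex, so it suffices to show, assuming $G$ has none, that at least one of the four reductions inherits this property.

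The case analysis hinges on whether $e$ is a loop. If $e$ is a non-loop between $v_1\ne v_2$, the only vertices that can become newly isolated in $G-e$ or $G\cut e$ are $v_1,v_2$, and only when they have no other incident edge. A quick parity check on the flag counts $f_{v_i}$ shows at least one of $G-e$, $G\cut e$ avoids producing an even-degree isolated vertex, except in the symmetric bad case where both $v_1,v_2$ have $e$ as unique edge and $f_{v_1},f_{v_2}$ have opposite parities; in that subcase I use $G^e-e=G/e$, which merges $v_1$ and $v_2$ into a single isolated vertex with $f_{v_1}+f_{v_2}$ (odd) flags. If $e$ is a loop at a vertex $v$ having other edges, neither $G-e$ nor $G\cut e$ isolates $v$, so both reductions are good. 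If $v$ has only $e$ and $f_v$ is odd, then $v$ becomes an isolated odd-degree vertex in both $G-e$ and $G\cut e$, so both reductions are good.

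The principal obstacle is the final subcase: a loop $e$ at a vertex $v$ with $v$ otherwise incident-free and $f_v$ even, which makes $G-e$ and $G\cut e$ both bad. I would resolve it by computing $G^{\{e\}}$ explicitly via the combinatorial-map formulas of Section \ref{sec:combinatorial-maps}; a direct calculation shows that whether $e$ is orientable or non-orientable, the partial dual turns the loop into a non-loop edge connecting two new vertices, onto which the $f_v$ flags of $v$ redistribute as $f_v=f_1+f_2$. Since $f_v$ is even, $f_1$ and $f_2$ have the same parity: if both are odd then $G^e-e=V_{f_1}\sqcup V_{f_2}$ is good, while if both are even then $G^e\cut e=V_{f_1+1}\sqcup V_{f_2+1}$ is good. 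Either way, at least one of the four reductions avoids creating an isolated even-degree vertex, yielding the desired contradiction and closing the induction.
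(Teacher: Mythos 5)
Your induction on $e(G)$ via the reduction relation \eqref{reduction} is a genuinely different route from the paper's proof: the paper instead exhibits, for each connected component with an edge, an explicit admissible pair $(A,B)$ built from a spanning tree (partial-dualize along $E(T)$ or $E(T)\setminus\set{e}$, then cut or delete $e$ according to the flag parities of the two resulting vertices) and reads off a nonzero monomial directly in the expansion \eqref{expansion}, using that all coefficients are nonnegative so no cancellation can occur. Your separation of the four terms by the linear independence of the monomials $t_e$, $t_e\Omega_e^2$, $\Omega_e$, $\Omega_e t_e^2$ is sound, and your first three cases are correct.

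However, the resolution of your final subcase rests on a false claim: the partial dual of a \emph{non-orientable} loop is not a non-loop edge joining two new vertices. For a twisted loop $e$ at an otherwise edge-free vertex $v$, the subgraph $\check{F}_{\set{e}}$ is a M\"obius band, whose boundary has a single component, so $G^{\set{e}}$ has one vertex there and $e$ is again a loop — this is exactly what figure \ref{fig:NonOrLoopContraction} depicts (contracting a non-orientable loop yields one vertex, not two). Consequently, when $f_v$ is even and $e$ is twisted, all four reductions $G-e$, $G\cut e$, $G^{e}-e$, $G^{e}\cut e$ produce an isolated vertex with $f_v$ or $f_v+2$ flags, all even; by \eqref{isolated} and \eqref{reduction}, $\mathrm{HU}$ then vanishes identically on such a component (e.g.\ a single twisted loop with no flags has $\mathrm{HU}=0$) even though it contains no isolated vertex. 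So the proposition as literally stated requires an orientability restriction precisely where your argument breaks. For orientable ribbon graphs your proof is complete — and there it even handles one-vertex components with loops, which the paper's spanning-tree construction (needing an edge $e\in E(T)$) does not reach.
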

\begin{proof}
  We have already seen that on a isolated vertex $\mathrm{HU}_{G}=0$
  if only if $G$ has an even number of flags. Using the
  multiplicativity over disjoint unions, it remains to show that
  $\mathrm{HU}_{G}$ is not identically zero for a graph with at least
  one edge.  To construct a monomial with a non zero coefficient, let
  us choose a spanning tree $T$ in $G$ and an edge $e\in E(T)$. The
  tree $T/(E(T)-\left\{e\right\})$ obtained by contracting all edges
  of $T$ but $e$ has two vertices $v_{1}$ and $v_{2}$. If $v_{1}$ and
  $v_{2}$ both carry an odd number of flags then set
  $A=E(T)-\left\{e\right\}$ and $B=\emptyset$.  If $v_{1}$ and
  $v_{2}$ both carry an odd number of flags then set
  $A=E(T)-\left\{e\right\}$ and $B=\left\{e\right\}$. If one of the
  vertices carries an odd number of flags and the other an even one,
  then set $A=E(T)$ and $B=\emptyset$. Then, with these choices of $A$
  and $B$, the corresponding monomial in \eqref{expansion} is always
  non zero.
\end{proof}

\begin{rem}
Since all the coefficients of the monomials of $\mathrm{HU}_{G}$ are positive as a consequence of the reduction relation, this shows that  $\mathrm{HU}_{G}(\Omega,t)=0$ is possible for $t_{e}>0$ and $\Omega_{e}>0$ if and only if $G$ contains an isolated vertex with an even number of flags. Thus, $\det A>0$ in the Gaussian integration \eqref{Gaussian2} if there is no isolated vertex of even degree.
\end{rem}

For trees, it is possible to obtain a formula  that collects the contribution of various subsets $A$ and $B$. 
\begin{prop}\label{proptree}
For a tree $T$ with flags, the first hyperbolic polynomial reads
\begin{equation}
\mathrm{HU}_{T}(\Omega,t)=\sum_{A\subset E(T)}\sum_{B\subset E(T)-A\atop (B,V(T/A))\,\mathrm{odd}}
\Bigg\{2^{|E(T)|-|A|+1}
\prod_{e\in A}\Omega_{e}(1+t_{e}^{2})\prod_{e'\in B}\Omega_{e'}t_{e'}
\hskip-.5cm\prod_{e''\in E(T)-(A\cup B)}\hskip-.5cmt_{e''}\Bigg\},\label{tree}
\end{equation} 
with $T/A$ the graph resulting from the contraction of the edges in $A$ and a graph is said to be odd if all its vertices have an odd number of attached half-lines, flags included.
\end{prop}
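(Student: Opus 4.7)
The plan is to derive proposition \ref{proptree} from the combinatorial expansion \eqref{expansion} of theorem \ref{HU=Q} by reorganizing the double sum over admissible pairs $(A',B')$, with $A' \subset E(T)$ and $B' \subset E(T^{A'}) = E(T)$. First I would fix $A' = A$, decompose $B'$ as a disjoint union $B'_A \sqcup B$ with $B'_A \subset A$ and $B \subset E(T) - A$, and show that admissibility is insensitive to the choice of $B'_A$ so that the $B'_A$-sum factorizes cleanly per edge of $A$.

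The key structural input is the following claim about trees: for any $A \subset E(T)$, every edge $e \in A$ becomes a loop of $T^A$ based at the corresponding vertex of the contraction $T/A$, and deleting these loops from $T^A$ recovers $T/A$ with the same vertices, flags and non-loop edges. I would prove this by induction on $|A|$, based on the identity $T^A = (T^{A \setminus \{e\}})^{\{e\}}$ from lemma \ref{lem:SimpleProp} applied to any $e \in A$. The induction hypothesis gives that the edges of $A \setminus \{e\}$ are already loops of $T^{A \setminus \{e\}}$ and that removing these loops yields $T/(A \setminus \{e\})$; since deleting loops does not affect the bridge-ness of the remaining edges, $e$ is still a bridge in $T^{A \setminus \{e\}}$. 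One then checks directly on the combinatorial map representation of subsection \ref{sec:partial-duality}, or via definition \ref{def:Contraction}, that a single-edge partial duality converts a bridge into a loop at the merged vertex. In particular $v(T^A) = v(T/A) = v(T) - |A| = |E(T)| + 1 - |A|$, which accounts for the prefactor $2^{|E(T)| - |A| + 1}$ once one recalls that $r_{2n+1} = 2$ in theorem \ref{HU=Q}.

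With this structural fact in hand, the admissibility condition on $(A, B')$ — that every vertex of the spanning subgraph $F_{B'} \subset T^A$ be odd counting flags — becomes independent of $B'_A$, because each loop $e \in A$ contributes $+2$ to the degree of its base vertex whether or not $e \in B'_A$. The condition thus reduces to: every vertex of the spanning subgraph of $T/A$ with edge set $B$ is odd counting flags, which is exactly the oddness of $(B, V(T/A))$ in the sense of the proposition. Summing freely over $B'_A \subset A$ produces the factor $\prod_{e \in A}(\Omega_e + \Omega_e t_e^2) = \prod_{e \in A} \Omega_e(1 + t_e^2)$, while the contributions of edges in $B$ and in $E(T) - (A \cup B)$, read directly off \eqref{expansion}, yield the remaining two products of the proposition.

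The main obstacle is the loop/bridge claim about $T^A$ together with the identification of vertices and flags between $T^A$ with its $A$-loops removed and $T/A$; once it is set up carefully at the level of combinatorial maps, the rest of the argument is pure bookkeeping with the $\mathcal{Q}$-polynomial. A direct induction on $|E(T)|$ via the reduction relation of theorem \ref{reductionthm} is an alternative route, but it is awkward because $T^e \cut e$ is generally not a tree, so one would need to formulate a more general inductive statement valid for a strictly larger class of ribbon graphs.
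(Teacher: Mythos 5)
Your proof is correct, but it takes a genuinely different route from the paper's. The paper proves proposition \ref{proptree} by induction on $e(T)$: it applies the reduction relation of theorem \ref{reductionthm} to a chosen edge $e$, notes that $T^{e}-e$ and $T^{e}\cut e$ are again trees (with flags) while $T-e$ and $T\cut e$ are disjoint unions of two trees, and matches the four resulting sums with the terms of \eqref{tree} according to whether $e\in A$, $e\in B$ or neither. You instead derive the formula in closed form from the expansion \eqref{expansion} of theorem \ref{HU=Q}, using the structural fact that for a tree every edge of $A$ becomes a trivial loop in $T^{A}$ and that $T^{A}$ minus these loops is $T/A$; the free resummation over which of these loops are cut produces the factor $\prod_{e\in A}\Omega_{e}(1+t_{e}^{2})$, and $v(T^{A})=v(T/A)=|E(T)|-|A|+1$ explains the prefactor directly. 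Your route is arguably more illuminating --- it makes visible why admissibility only constrains $B\cap A^{c}$ --- at the price of the auxiliary lemma on partial duals of trees, which you set up correctly (every edge of a tree is a bridge, bridges remain bridges under contraction of other bridges, and a single-edge partial dual turns a bridge into a loop at the merged vertex, by definition \ref{def:Contraction}).

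Three small points. First, your parenthetical objection to the inductive route is unfounded: for a bridge $e$ of a tree, $T^{e}\cut e$ is $T/e$ with two extra flags at the merged vertex, hence again a tree with flags; this is exactly what the paper's proof exploits. Second, the phrase ``each loop $e\in A$ contributes $+2$ to the degree of its base vertex whether or not $e\in B'_{A}$'' should read ``contributes $+2$ or $0$''; the parity argument you intend is of course correct. Third, reading the weights off \eqref{expansion} as you do assigns $\Omega_{e'}^{2}t_{e'}$ (not $\Omega_{e'}t_{e'}$) to each $e'\in B\subset A^{c}$; this disagrees with \eqref{tree} as printed, but the printed exponent is a typo --- compare \eqref{bridgeexample} for $m,n$ even and the monomial $16t_{1}t_{2}\Omega_{2}^{2}t_{3}\Omega_{3}^{2}$ in example \ref{dumbbellex} --- so your derivation in fact lands on the correct statement.
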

\begin{proof}
  If $e(T)=0$, then $T=V_{n}$ is an isolated vertex and
  $A=B=\emptyset$, so that $(B,V(T/A))=V_{n}$ and we recover
  \eqref{isolated}. If $e(T)=1$, then $T=B_{m,n}$ is a bridge with
  flags and \eqref{tree} reproduces \eqref{bridgeexample}. Let us now
  prove the result by induction on $e(T)$, singling out an edge $e$
  and using the reduction relation
  \begin{equation}
    \mathrm{HU}_{T}=\Omega_{e}\,\mathrm{HU}_{T^{e}-e}+\Omega_{e}t_{e}^{2}\,\mathrm{HU}_{T^{e}\cut
      e}+
    t_{e}\,\mathrm{HU}_{T-e}+t_{e}\Omega_{e}^{2}\,\mathrm{HU}_{T\cut e}.
  \end{equation}
  The graphs $T_{1}=T^{e}-e$ and $T_{2}=T^{e}\cut e$ are trees whereas
  $T-e=T_{3}\cup T_{4}$ and $T\cut e=T_{5}\cup T_{6}$ are disjoint
  unions of 2 trees. All the trees have less than $e(T)$ edges so that
  we may apply the induction assumption, with a sum over
  $A_{i},B_{i}\subset E(T_{i})$.

  For the first two terms, we gather terms for which $A_{1}$=$A_{2}$
  and define $A=A_{1}\cup \left\{ e\right\}$. Then, with $B=B_{1}$ or
  $B=B_{2}$, the graph $(B,V(T/A))$ is odd if only if
  $(B_{1},V(T_{1}/A_{1}))$ or $(B_{2},V(T_{2}/A_{2}))$ are and the
  powers of 2 agree,
  $2^{E(T)-|A|+1}=2^{E(T^{e}-e)-|A_{1}|+1}=2^{E(T^{e}\vee
    e)-|A_{2}|+1}$. This reproduces the terms in \eqref{tree} such
  that $e\in A$.

  In the case of $T-e$, $\mathrm{HU}_{T-e}$ factorizes as two
  independent summations over $(A_{3},B_{3})$ and $(A_{4},B_{4})$ and
  we set $A=A_{3}\cup A_{4}$ and $B=B_{3}\cup B_{4}$.  The graph
  $(B,V(T/A))$ is odd if only if $(B_{1},V(T_{1}/A_{1}))$ and
  $(B_{2},V(T_{2}/A_{2}))$ are and the powers of 2 agree,
  $2^{E(T)-|A|+1}=2^{E(T_{1})-|A_{1}|+1}2^{E(T_{2})-|A_{3}|+1}$. This
  reproduces in \eqref{tree} the terms such that $e\notin A$ and
  $e\notin B$.

  For $T\cut e$, we proceed similarly with $A=A_{5}\cup A_{6}$ and
  $B=B_{5}\cup B_{6}\cup\{ e\}$ and recover the terms in \eqref{tree}
  for which $e\notin A$ and $e\in B$.
\end{proof}

Let us illustrate the use of proposition \ref{proptree} on some simple examples.

\begin{example}[n-star tree without flags]
Consider the n-star tree $\star_{n}$ is made of one $n$-valent vertex, attached to $n$ univalent ones, all without flags. Since all the edges not in $A$ are necessarily in $B$ (otherwise the leaves yield vertices without flag),
\begin{equation}
\mathrm{HU}_{\star_{n}}(\Omega,t)=\sum_{A\subset E(\star_{n})\atop |A|+n\,\mbox{\tiny odd}}
\Big\{2^{n-|A|+2}
\prod_{e\in A}\Omega_{e}(1+t_{e}^{2})\prod_{e'\in E(\star_{n})-A}\Omega_{e'}t_{e'}\Big\}.
\end{equation} 
\end{example}

Using partial duality, one can compute the first hyperbolic polynomial for every graph made of loops attached to the vertices of a tree. Indeed, the partial duality with respect to the loops transforms the diagram into another  tree.

\begin{example}[Dumbbell]
\label{dumbbellex}
Let us consider the dumbbell graph (an edge labelled 1 attached to two vertices, each carrying a loop labelled 2 and 3).  Let us perform the partial duality with respect to the loops 2 and 3 to obtain a linear tree  with two three edges and no flag, for which proposition \ref{proptree} immediately yields
\begin{align}
\mathrm{HU}_{\mbox{\tiny linear tree}\atop\mbox{\tiny 3 edges no flag}}(\Omega,t)=
16t_{1}t_{2}\Omega_{2}^{2}t_{3}\Omega_{3}^{2}+4t_{1}\Omega_{1}^{2}\Omega_{2}(1+t_{2}^{2})\Omega_{3}(1+t_{3}^{2})
\hskip2cm{}\cr
+4t_{2}\Omega_{2}^{2}\Omega_{1}(1+t_{1}^{2})\Omega_{3}(1+t_{3}^{2})
+4t_{3}\Omega_{3}^{2}\Omega_{1}(1+t_{1}^{2})\Omega_{2}(1+t_{2}^{2}).
\end{align}
Using the partial duality $\mathrm{HU}_{\mbox{\tiny dumbbell}}(\Omega_{1},\Omega_{2},\Omega_{3},t_{1},t_{2},t_{3})=
\mathrm{HU}_{\mbox{\tiny linear tree}\atop\mbox{\tiny 3 edges no flag}}(t_{1},t_{2},\Omega_{3},t_{1},\Omega_{2},\Omega_{3})$ we get
\begin{align}
\mathrm{HU}_{\mbox{\tiny dumbbell}}(\Omega,t)=
16t_{1}\Omega_{2}t_{2}^{2}\Omega_{3}t_{3}^{2}+4t_{1}\Omega_{1}^{2}t_{2}(1+\Omega_{2}^{2})t_{3}(1+\Omega_{3}^{2})
\hskip2cm{}\cr
+4\Omega_{2}t_{2}^{2}\Omega_{1}(1+t_{1}^{2})t_{3}(1+\Omega_{3}^{2})
+4\Omega_{3}t_{3}^{2}\Omega_{1}(1+t_{1}^{2})t_{2}(1+\Omega_{2}^{2}).
\end{align}
\end{example}

Beyond trees, it is also possible to give a useful formula for cycles, i.e. a connected graph in which every vertex has valence 2.

\begin{prop}\label{propcycle}
For a cycle $C$ with $m$ flags in one face and $n$ in the other one, the first hyperbolic polynomial reads:
\begin{align}
\mathrm{HU}_{C}(\Omega,t)=\quad4\delta_{(-1)^{m},(-1)^{n}}\Big(\prod_{e\in E(C)}\Omega_{e}\Big)\sum_{A\subset E(C)\atop |A|+n\,\mbox{\tiny odd}}\Bigg\{\prod_{e'\in A}t_{e'}^{2}\Bigg\}\quad+\hskip3cm{}\cr
\sum_{A\subset E(C)\atop (A,V(C))\,\mathrm{acyclic}}
\sum_{B\subset E(C/A)\atop (B,V(C/A))\,\mathrm{odd}}
\Bigg\{2^{|E(C)|-|A|}
\prod_{e\in A}\Omega_{e}(1+t_{e}^{2})\prod_{e'\in B'}\Omega_{e'}t_{e'}
\hskip-.5cm\prod_{e''\in E(C)-(A\cup B)}\hskip-.5cmt_{e''}\Bigg\},\label{cycle}
\end{align} 
where a graph is acyclic if it does not contain a (non necessarily spanning) subgraph isomorphic to a cycle.
\end{prop}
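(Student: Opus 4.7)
The plan is to derive the cycle formula directly from the combinatorial expansion (\ref{expansion}) in Theorem~\ref{HU=Q}, by tracking what the partial dual $C^{A}$ looks like for each $A\subset E(C)$. Throughout it will be crucial that $v(C)=|E(C)|$, so that $v(C/A)=|E(C)|-|A|$ and hence $2^{v(C^{A})}=2^{|E(C)|-|A|}$ in case (b) below.

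I would split the outer sum over $A$ into two disjoint families: (a) $A=E(C)$, so that $(A,V(C))$ is itself a cycle, and (b) $A\subsetneq E(C)$, so that $(A,V(C))$ is acyclic (any proper subset of edges of a cycle is a forest). These will contribute respectively the two summands of (\ref{cycle}).

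For (a), Lemma~\ref{lem:SimpleProp} gives $C^{E(C)}=C^{\star}$, the full natural dual. For a cycle with $m$ flags in one face and $n$ in the other, $C^{\star}$ has two vertices joined by $|E(C)|$ parallel edges and carrying respectively $m$ and $n$ flags. Since $A^{c}=\emptyset$ the monomial in (\ref{expansion}) collapses to $\prod_{e\in E(C)}\Omega_{e}\cdot\prod_{e'\in B}t_{e'}^{2}$, and admissibility requires both vertices of $F_{B}\subset C^{\star}$ to be odd, with respective flag counts $m+|B|$ and $n+|B|$. This forces $m\equiv n\pmod{2}$ and $|B|+n$ odd, and together with $2^{v(C^{\star})}=4$ reproduces the first summand of (\ref{cycle}) after renaming the dummy variable $B$ to $A$.

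For (b), the key structural input is that every edge $e\in A$ appears as a loop of $C^{A}$, attached to the vertex of $C^{A}$ corresponding, via $v(C^{A})=v(C/A)$, to the connected component of $(A,V(C))$ containing the endpoints of $e$, and that the flags of $C^{A}$ redistribute accordingly onto these components. This follows from the identity $G^{A}-A=G/A$, itself an iterated consequence of Definition~\ref{def:Contraction} together with the composition law $(G^{E'})^{\{e\}}=G^{E'\cup\{e\}}$ of Lemma~\ref{lem:SimpleProp}. Granted this, flipping the membership of any $e\in A$ in $B\subset E(C^{A})$ changes the degree of a vertex of $F_{B}$ by $\pm 2$ and so preserves admissibility. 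I can therefore perform the sum over $B\cap A$ freely, producing the factor $\prod_{e\in A}(\Omega_{e}+\Omega_{e}t_{e}^{2})=\prod_{e\in A}\Omega_{e}(1+t_{e}^{2})$; what remains is a sum over $B\subset A^{c}=E(C/A)$ in which the admissibility on $C^{A}$ translates exactly into ``$(B,V(C/A))$ is odd''. The remaining edge products then match the last two factors of (\ref{cycle}).

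The main obstacle is precisely the structural claim in (b): that the edges of $A$ become loops in $C^{A}$ on the appropriate vertices and that the flags redistribute consistently. This is geometrically transparent from Chmutov's picture, but for a rigorous check I would argue on the combinatorial maps of Section~\ref{sec:partial-duality}, inducting on $|A|$ via the composition law and reducing to the base case $|A|=1$ where the new adjacencies can be read off directly from $\sigma_{0}\theta_{e}\sigma_{1e}$.
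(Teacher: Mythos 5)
Your proof is correct, but it takes a genuinely different route from the paper's. The paper proves Proposition \ref{propcycle} by induction on $e(C)$: it applies the four-term reduction relation \eqref{reduction} to a chosen edge, observes that $C^{e}-e$ and $C^{e}\vee e$ are again cycles (handled by the induction hypothesis) while $C-e$ and $C\vee e$ are trees (handled by Proposition \ref{proptree}), and then regroups the four resulting sums according to whether the distinguished edge lies in $A$ and/or $B$. You instead evaluate the closed expansion \eqref{expansion} directly, splitting on $A=E(C)$ versus $A\subsetneq E(C)$; this makes the origin of the two summands transparent — the full dual $C^{\star}$ is a banana on two vertices, which produces the face-parity factor $\delta_{(-1)^{m},(-1)^{n}}$ and the first sum, while every proper $A$ is a forest whose partial dual is $C/A$ decorated with loops, and membership of those loops in $B$ can be summed freely since cutting or deleting a loop shifts a vertex degree by $2$, yielding $\prod_{e\in A}\Omega_{e}(1+t_{e}^{2})$. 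The price of your approach is the structural description of $C^{A}$ for a forest $A$, which you correctly isolate as the one point requiring a combinatorial-map verification (and your induction on $|A|$ via Lemma \ref{lem:SimpleProp} is the right way to do it); the paper's induction sidesteps any global description of $C^{A}$ at the cost of a fiddlier bookkeeping of how admissibility and the powers of $2$ recombine across the four terms. One small point in your favour: your derivation produces the factor $\Omega_{e'}^{2}t_{e'}$ for $e'\in B$ (coming from $e'\in A^{c}\cap B$ in \eqref{expansion}), which is what the paper's own examples (the triangle without flags, the sunset, the linear tree in Example \ref{dumbbellex}) require; the factor $\Omega_{e'}t_{e'}$ printed in \eqref{cycle} and \eqref{tree} is a typo, so the ``match'' you assert holds for the corrected statement rather than the literal one.
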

\begin{proof}
  We prove this result by induction on the number of edges of $C$,
  starting with $e(G)=1$. In this case, $C$ is a loop with flags and
  \eqref{cycle} reduces to \eqref{loopexample}. Let us consider a
  cycle with $e(G)>1$ edges, $m$ flags on one face and $n$ flags on
  the other one and apply the reduction relation to an edge $e$,
  \begin{equation}
    \mathrm{HU}_{C}=\Omega_{e}\,\mathrm{HU}_{C^{e}-e}+\Omega_{e}t_{e}^{2}\,\mathrm{HU}_{C^{e}\vee e}+
    t_{e}\,\mathrm{HU}_{C-e}+t_{e}\Omega_{e}^{2}\,\mathrm{HU}_{C\vee e}.
  \end{equation}
  $C^{e}-e$ (resp. $C^{e}\cut e$) are cycles with $e(C)-1$ edges and
  $m$ flags on one face and $n$ flags on the other one (resp. $m+1$
  and $n+1$), so that we apply the induction assumption and express
  both of them using subsets $A'$ and $B'$ of $E(G)-\{e\}$ as in
  \eqref{cycle}. Setting $A=A'\cup \{e\}$ and $B'=B$, these terms can
  be collected and correspond to those terms in \eqref{cycle} such
  that $e\in A$. The numerical factors agree and $\big(B,V(C/A)\big)$
  is odd if only if $\big(B',V\big((C^{e}\!-\!e)/A'\big)\big)$ and
  $\big(B',V\big((C^{e}\!-\!e)/A'\big)\big)$ are because the graphs
  $\big(A,V(C)\big)$, $\big(A',V(C^{e}\!-\!e)\big)$ and
  $\Big(A',V(C^{e}\!\vee\!e)\big)$ are acyclic.

  The graphs $C-e$ and $C\cut e$ are trees, so that we may apply
  proposition \ref{proptree} to expand $\mathrm{HU}_{C-e}$ and
  $\mathrm{HU}_{C\vee e}$ using subsets $A'$ and $B'$ of
  $E(C)-\left\{e\right\}$. Setting $A=A'$ and $B=B'$, terms in
  $\mathrm{HU}_{C-e}$ correspond to terms in $\mathrm{HU}_{C}$ such
  that neither $A$ nor $B$ contains $e$. With $A=A'$ and
  $B=B'\cup\{e\}$, the expansion of $\mathrm{HU}_{C\vee e}$ reproduces
  those terms in the expansion of $\mathrm{HU}_{C}$ for which $e\notin
  A$ and $e\in B$.
\end{proof}

\begin{example}[Triangle without flags]
Consider a triangle (cycle with three edges) and no flags. Applying proposition \ref{propcycle}, we get
\begin{align}
\mathrm{HU}_{\mbox{\tiny triangle}\atop\mbox{\tiny without flag}}(\Omega,t)=
4\Omega_{1}\Omega_{2}\Omega_{3}(t_{1}^{2}+t_{2}^{2}+t_{3}^{2}+t_{1}^{2}t_{2}^{2}t_{3}^{2})
\hskip2cm{}\cr
+4\Omega_{1}(1+t_{1}^{2})t_{2}t_{3}(\Omega_{2}^{2}+\Omega_{3}^{2})
+4\Omega_{2}(1+t_{2}^{2})t_{1}t_{3}(\Omega_{1}^{2}+\Omega_{3}^{2})
+4\Omega_{3}(1+t_{3}^{2})t_{1}t_{2}(\Omega_{1}^{2}+\Omega_{2}^{2}).
\end{align}
As we perform the duality with respect to all three edges, we recover the planar banana with three edges (see example \ref{planarbananaex}), with $\Omega_{e}\leftrightarrow t_{e}$ for all edges.
\end{example}

\begin{example}[Triangle with flags]
For a triangle with one flag on each vertex, all in the same face, proposition \ref{propcycle} immediately yields
\begin{align}
\mathrm{HU}_{\mbox{\tiny triangle}\atop\mbox{\tiny with flags}}(\Omega,t)=
8t_{1}t_{2}t_{3}(1+\Omega_{1}^{2}\Omega_{2}^{2}\Omega_{3}^{2})
\hskip2cm{}\cr
+2t_{1}\Omega_{2}(1+t_{2}^{2})\Omega_{3}(1+t_{3}^{2})
+2t_{2}\Omega_{1}(1+t_{1}^{2})\Omega_{3}(1+t_{3}^{2})
+2t_{3}\Omega_{1}(1+t_{1}^{2})\Omega_{2}(1+t_{2}^{2}).\label{trianglewithflags}
\end{align}
Note that the first term in \eqref{cycle} vanishes, since there are three flags in one face and none in the other one.
\end{example}

\subsection{The second hyperbolic polynomial}

Let us now evaluate the second hyperbolic polynomial $\mathrm{HV}_{G}$ in terms of $\mathrm{HU}_{G}$, which is itself an evaluation of the graph polynomial ${\cal Q}_{G}$.

\begin{thm}
The second hyperbolic polynomial can be expressed as
\begin{align}
\mathrm{HV }_{G}=
\sum_{i}\mathrm{HU }_{G_{i}}x_{i}^{2}+
\frac{1}{2}\sum_{i\neq j}
\Big[\mathrm{HU }_{(G_{ij})^{e_{ij}}-e_{ij}}-\mathrm{HU }_{(G_{ij})^{e_{ij}}\vee e_{ij}}\Big]\,x_{i}\cdot x_{j}\cr
+\frac{1}{2}\sum_{i\neq j}
\Big[\mathrm{HU }_{(\check{G}_{ij})^{e_{ij}}-e_{ij}}-\mathrm{HU }_{(\check{G}_{ij})^{e_{ij}}\vee e_{ij}}\Big]\,x_{i}\cdot Jx_{j},
\end{align}
where $G_{i}$ is the graph obtained from $G$ by removing the flag on the corner $i$, $G_{ij}$  by joining the external corners $i$ and $j$ by an extra edge $e_{ij}$ and $\check{G}_{ij}$ by attaching an extra flag to $G_{ij}$ immediately after $i$ in counterclokwise order around the vertex $i$ is attached.
\end{thm}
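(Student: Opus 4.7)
The polynomial $\mathrm{HV}_G$ is by construction a homogeneous quadratic form in the external variables $x_1,\dots,x_{f(G)}$, since in the Gaussian representation it arises from ${}^{t}BA^{-1}B+C$ with $B$ linear and $C$ quadratic in the $x_i$. It therefore decomposes uniquely as a diagonal part in $x_i^{2}$, a symmetric off-diagonal part in $x_i\cdot x_j$ and an antisymmetric off-diagonal part in $x_i\cdot Jx_j$. The plan is to identify each of the three families of coefficients with the claimed combination of $\mathrm{HU}$ polynomials of modified graphs.

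For the diagonal coefficient, the graph $G_i$ obtained by erasing the flag at corner $i$ differs from $G$ only in that $y_i$ is promoted from an external datum to an internal corner variable that is integrated out. The matrix $D+R$ associated to $G_i$ is accordingly the matrix $D+R$ of $G$ augmented by one row and one column corresponding to $y_i$. Using the identity $\mathrm{HU}_{\bullet}=2^{v(\bullet)}[\prod_{e}t_{e}]\det(D_{\bullet}+R_{\bullet})$ together with a cofactor expansion of the augmented determinant along this new row, the coefficient of $x_i^{2}$ in $\mathrm{HV}_G$ is exactly $\mathrm{HU}_{G_i}$.

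For the symmetric bilinear coefficient I would add an auxiliary edge $e_{ij}$ joining corners $i$ and $j$, producing $G_{ij}$, and apply the reduction relation of Theorem \ref{reductionthm} to $e_{ij}$ in $(G_{ij})^{e_{ij}}$. The four terms of the reduction separate according to the four possible behaviours of the new edge. In particular, the deletion-after-partial-duality term $\mathrm{HU}_{(G_{ij})^{e_{ij}}-e_{ij}}$ corresponds at the matrix level to the identification $y_i=y_j$ in $D+R$, whereas the cut-after-partial-duality term $\mathrm{HU}_{(G_{ij})^{e_{ij}}\vee e_{ij}}$ keeps the two rows separated but subject to extra constraints. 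Their difference cancels the diagonal contributions and retains precisely the off-diagonal minor of $D+R$ which, by the cofactor formula $(A^{-1})_{ij}=(-1)^{i+j}\det\hat{A}_{ji}/\det A$, is the coefficient of $x_i\cdot x_j$ in ${}^{t}BA^{-1}B$. The same construction applied to $\check G_{ij}$, where an additional flag is prepended to corner $i$, exploits the oriented structure of the Moyal vertex: the extra flag inserts a phase that converts the symmetric coupling $x_i\cdot x_j$ into the symplectic one $x_i\cdot Jx_j$, which accounts for the analogous formula for the antisymmetric part.

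The main obstacle I anticipate is the careful bookkeeping of signs, powers of $2$, and factors of $\prod_{e}t_{e}$ that relate these minors of $D+R$ to the $\mathrm{HU}$ polynomials of the augmented, cut and partially dualized graphs. This is most cleanly handled with the Grassmannian techniques already used in the proof of Theorem \ref{reductionthm}: one introduces the Grassmann variables associated with the auxiliary edge (or flag), expands the resulting integral, and checks that precisely two of the four terms produced by the reduction relation survive and combine with the correct relative sign to reproduce the claimed formula.
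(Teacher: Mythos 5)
Your proposal follows essentially the same route as the paper: decompose $\mathrm{HV}_{G}$ as a quadratic form in the external variables, obtain the diagonal coefficient $\mathrm{HU}_{G_{i}}$ by integrating out $x_{i}$ (equivalently, bordering the matrix $D+R$), and extract the off-diagonal coefficients by adjoining an auxiliary edge $e_{ij}$ (plus an extra flag inserted through the oriented Moyal vertex for the $x_{i}\cdot Jx_{j}$ part) and matching against the four-term reduction relation, whose two partial-dual terms yield $\tfrac{a_{ii}+a_{jj}}{2}\pm a_{ij}$. The only difference is one of bookkeeping: the paper realizes these graph operations by explicit bosonic Gau\ss ian integrals over $x_{i}$, $x_{j}$ and the auxiliary corner, comparing normalizations with the parametric form \eqref{amplitude}, whereas you propose Grassmann/cofactor manipulations of $D+R$; both amount to the same determinant identities.
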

\begin{proof}
  Let us isolate two external corners $i$ and $j$ and write
  \begin{equation}
    \mathrm{HV }_{G}= a_{ii}\,x_{i}^{2}+ a_{jj}\,x_{j}^{2} +2a_{ij}\,x_{i}\cdot x_{j}+2\imath b_{ij}\,x_{i}\cdot J x_{j}+\cdots\,,
  \end{equation} 
  where the dots stand for terms that vanish when $x_{i}=x_{j}=0$. To
  determine $a_{ii}$, we set $x_{k}=0$ for $k\neq i$ and integrate
  over $x_{i}$,
  \begin{equation}
    \int d^{D}x_{i}\;{\cal A}_{G}\Big|_{x_{k}=0\atop k\neq i}={\cal A}_{G_{i}}\Big|_{x_{k}=0\atop k\neq i}.
  \end{equation}
  Comparing both sides with \eqref{amplitude}, we readily get
  $a_{ii}=\mathrm{HU}_{G_{i}}$.

  Similarly, to compute $a_{ij}$, we insert an extra edge $e_{ij}$
  between the flags $i$ and $j$
  \begin{equation}
    \int d^{D}x_{i}d^{D}x_{j}\;K_{\widetilde{\Omega}_{e_{ij}}}(x_{i},x_{j}){\cal A}_{G}\Big|_{x_{k}=0\atop k\neq i,k\neq j}={\cal A}_{G_{ij}}\Big|_{x_{k}=0\atop k\neq i,k\neq j}.
  \end{equation}
  The integral is Gau\ss ian over $X=\begin{pmatrix}x_{i}\cr
    x_{j}\end{pmatrix}$
  \begin{equation}
    \int d^{D}x_{i}d^{D}x_{j}\;K_{\widetilde{\Omega}_{e_{ij}}}(x_{i},x_{j}){\cal A}_{G}\Big|_{x_{k}=0\atop k\neq i,k\neq j}=
    {\cal N}\;\int d^{2D}X\exp-\frac{1}{2}{}^{t}XAX,
  \end{equation}
  with a normalization factor
  \begin{equation}\ {\cal N}=
    \left[\frac{\Omega_{e_{ij}}(1-t_{e_{ij}}^{2})}{2\pi\theta\,t_{e_{ij}}}\right]^{D/2}\times
    \left[\frac{\prod_{e}\Omega_{e}(1-t_{e}^{2})}
      {2^{v(G)-f(G)}(2\pi\theta)^{e(G)+f(G)-v(G)}\mathrm{HU}_{G}(\Omega,t)}
    \right]^{D/2}
  \end{equation}
  and
  \begin{align}
    A=\frac{1}{\theta{\mathrm{HU}_{G}}}\begin{pmatrix}
      {\mathrm{HU}_{G}}\Omega_{e_{ij}}\big(
      t_{e_{ij}}+\frac{1}{t_{e_{ij}}}\big)+2a_{ii}&
      {\mathrm{HU}_{G}}\Omega_{e_{ij}}\big(
      t_{e_{ij}}-\frac{1}{t_{e_{ij}}}\big)+2a_{ij}\cr
      {\mathrm{HU}_{G}}\Omega_{e_{ij}}\big(
      t_{e_{ij}}-\frac{1}{t_{e_{ij}}}\big)+2a_{ji}&
      {\mathrm{HU}_{G}}\Omega_{e_{ij}}\big(
      t_{e_{ij}}+\frac{1}{t_{e_{ij}}}\big)+2a_{jj}
    \end{pmatrix}\otimes I_{D}\cr
    +\frac{1}{\theta{\mathrm{HU}_{G}}}
    \begin{pmatrix}
      0& 2b_{ij}\cr -2b_{ij}&0
    \end{pmatrix}
    \otimes\imath J.
  \end{align}
  This determinant can be expressed as $\xi^{D/2}$, with
  \begin{align}
    \xi =\left[\frac{2}{\theta\mathrm{HU}_{G}}\right]^{2}
    \Big[(\Omega_{e_{ij}}\mathrm{HU}_{G})^{2}+a_{ii}a_{jj}-a_{ij}^{2}+b_{ij}^{2}+\cr
    \Omega_{e_{ij}}\mathrm{HU}_{G}\big(
    t_{e_{ij}}+\frac{1}{t_{e_{ij}}}\big)\big(\frac{a_{ii}+a_{jj}}{2}\big)-
    \Omega_{e_{ij}}\mathrm{HU}_{G}\big(
    t_{e_{ij}}-\frac{1}{t_{e_{ij}}}\big)a_{ij} \Big].
  \end{align}
  We perform the Gau\ss ian integration over $X$ to obtain ${\cal
    A}_{G_{ij}}\Big|_{x_{k}=0\atop k\neq i,k\neq j} $ and identify
  $\mathrm{HU}_{G_{ij}}$
  \begin{align}
    \mathrm{HU}_{G_{ij}}=(\Omega_{e_{ij}})^{2}t_{e_{ij}}\mathrm{HU}_{G}+
    \frac{a_{ii}a_{jj}-a_{ij}^{2}+b_{ij}^{2}}{\mathrm{HU}_{G}}t_{e_{ij}}+\cr
    \frac{\Omega_{e_{ij}}}{2}\big[
    (t_{e_{ij}})^{2}+1\big]\big(a_{ii}+a_{jj}\big)-
    \Omega_{e_{ij}}\big[ (t_{e_{ij}})^{2}-1\big]a_{ij}.
  \end{align}
  Using the reduction relation, we identify the first term with
  $(\Omega_{e_{ij}})^{2}t_{e_{ij}}\mathrm{HU}_{G_{e_{ij}}}\!\cut
  e_{ij}$, the second with $t_{e_{ij}}\mathrm{HU}_{G_{e_{ij}}}-
  e_{ij}$ (this proves that $\mathrm{HU}_{G}$ divides\footnote{It is a
    simple case of the Dodgson condensation identities.}
  $a_{ii}a_{jj}-a_{ij}^{2}+b_{ij}^{2}$) and the sum of the last two
  terms with
  $\Omega_{e_{ij}}\mathrm{HU}_{(G_{e_{ij}})^{e_{ij}}-e_{ij}}+\Omega_{e_{ij}}(t_{e_{ij}})^{2}\mathrm{HU}_{(G_{e_{ij}})^{e_{ij}}\vee
    e_{ij}}$. Thus we have
  \begin{equation}
    a_{ij}=\frac{1}{2}\Big[\mathrm{HU }_{(\check{G}_{ij})^{e_{ij}}-e_{ij}}-\mathrm{HU }_{(\check{G}_{ij})^{e_{ij}}\vee e_{ij}}\Big].
  \end{equation}
  To compute $b_{ij}$, we use a similar method but introduce an extra
  flag on the vertex $i$ is attached to, immediately after $i$ in
  counterclockwise order. Then, we connect $i$ and $j$ with an extra
  edge $e_{ij}$ to obtain $\check{G}_{ij}$. In terms of graph
  amplitudes, this can be expressed as
  \begin{equation}
    \int d^{D}x_{i}d^{D}x_{j}d^{D}y\;K_{\widetilde{\Omega}_{e_{ij}}}(y,x_{j}){\cal V}_{3}(x_{i},y,0){\cal A}_{G}\Big|_{x_{k}=0\atop k\neq i,k\neq j}={\cal A}_{\check{G}_{ij}}\Big|_{x_{k}=0\atop k\neq i,k\neq j}.
  \end{equation}
  As before, the integral over $X=\begin{pmatrix}x_{i}\cr y\cr
    x_{j}\end{pmatrix}$ is Gau\ss ian,
  \begin{equation}
    \int d^{D}x_{i}d^{D}x_{j}d^{D}y\;K_{\widetilde{\Omega}_{e_{ij}}}(y,x_{j}){\cal V}_{3}(x_{i},y,0){\cal A}_{G}\Big|_{x_{k}=0\atop k\neq i,k\neq j}=
    {\cal N}\;\int d^{2D}X\exp-\frac{1}{2}{}^{t}XAX,
  \end{equation}
  with a normalization factor
  \begin{equation} {\cal N}=
    \left[\frac{\Omega_{e_{ij}}(1-t_{e_{ij}}^{2})}{2\pi\theta\,t_{e_{ij}}}\right]^{D/2}\times
    \left[\frac{\prod_{e}\Omega_{e}(1-t_{e}^{2})}
      {2^{v(G)-f(G)}(2\pi\theta)^{e(G)+f(G)-v(G)}\mathrm{HU}_{G}(\Omega,t)}
    \right]^{D/2}\times\frac{1}{(\pi\theta)^{D}}
  \end{equation}
  and
  \begin{align}
    A=\frac{1}{\theta{\mathrm{HU}_{G}}}\begin{pmatrix}
      2a_{ii}&0&2a_{ij}\cr 0&{\mathrm{HU}_{G}}\Omega_{e_{ij}}\big(
      t_{e_{ij}}+\frac{1}{t_{e_{ij}}}\big)&
      {\mathrm{HU}_{G}}\Omega_{e_{ij}}\big(
      t_{e_{ij}}-\frac{1}{t_{e_{ij}}}\big)\cr
      2a_{ij}&{\mathrm{HU}_{G}}\Omega_{e_{ij}}\big(
      t_{e_{ij}}-\frac{1}{t_{e_{ij}}}\big)&
      2a_{jj}+{\mathrm{HU}_{G}}\Omega_{e_{ij}}\big(
      t_{e_{ij}}+\frac{1}{t_{e_{ij}}}\big)
    \end{pmatrix}\otimes I_{D}\cr
    +\frac{1}{\theta{\mathrm{HU}_{G}}}
    \begin{pmatrix}
      0&2{\mathrm{HU}_{G}}&2b_{ij}\cr -2{\mathrm{HU}_{G}}&0&0\cr
      -2b_{ij}&0&0
    \end{pmatrix}
    \otimes\imath J
  \end{align}
  Its determinant is $\det A=\xi^{D/2}$ with
  \begin{align}
    \xi=\frac{8}{\theta^{3}\mathrm{HU}_{G}}\Omega_{e_{ij}}\big(
    t_{e_{ij}}-\frac{1}{t_{e_{ij}}}\big)b_{ij}+
    \xi_{1}\Omega_{e_{ij}}\big(
    t_{e_{ij}}+\frac{1}{t_{e_{ij}}}\big)+\xi_{2}\Omega_{e_{ij}}^{2}+\xi_{3},
  \end{align}
  with $\xi_{1}$, $\xi_{2}$ and $\xi_{3}$ independent of
  $\Omega_{e_{ij}}$ and $t_{e_{ij}}$.

  We perform the Gau\ss ian integration over $X$ to obtain ${\cal
    A}_{\check{G}_{ij}}\Big|_{x_{k}=0\atop k\neq i,k\neq j} $ and
  identify the terms in $\Omega_{e_{ij}}\big( t_{e_{ij}}^{2}-1\big)$
  to obtain
  \begin{align}
    2b_{ij}=\mathrm{HU
    }_{(\check{G}_{ij})^{e_{ij}}-e_{ij}}-\mathrm{HU
    }_{(\check{G}_{ij})^{e_{ij}}\vee e_{ij}},
  \end{align}
  which proves our expression for the antisymmetric part of
  $\mathrm{HV}_{G}$.

  Let us note that up to a change of sign, we could have attached the
  extra flag before $i$ or on the vertex $j$ is attached to.
\end{proof}
As a consequence, the second hyperbolic polynomial is also invariant under partial duality.
\begin{cor}
The second hyperbolic polynomial transforms under partial duality as
\begin{equation}
\mathrm{HV}_{G^{A}}(\Omega_{A},t_{A},x)=
\mathrm{HV}_{G}(\Omega,t,x),
\end{equation}
with 
\begin{equation}
\left\{
\begin{array}{llcc}
[\Omega_{A}]_{e}=t_{e},& [t_{A}]_{e}=\Omega_{e}&\mbox{for }&e\in A,\cr
[\Omega_{A}]_{e}=\Omega_{e},& [t_{A}]_{e}=t_{e}&\mbox{for }&e\notin A.\cr
\end{array}
\right.
\end{equation}
The variables $x$ attached to the flags are left unchanged.
\end{cor}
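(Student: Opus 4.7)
The plan is to leverage the explicit combinatorial expression for $\mathrm{HV}_{G}$ obtained in the preceding theorem together with the partial duality invariance of $\mathrm{HU}_{G}$ stated in the corollary to Theorem \ref{HU=Q}. Since $\mathrm{HV}_{G}$ is a linear combination of the quantities $\mathrm{HU}_{G_{i}}$, $\mathrm{HU}_{(G_{ij})^{e_{ij}}-e_{ij}}$, $\mathrm{HU}_{(G_{ij})^{e_{ij}}\vee e_{ij}}$, $\mathrm{HU}_{(\check G_{ij})^{e_{ij}}-e_{ij}}$ and $\mathrm{HU}_{(\check G_{ij})^{e_{ij}}\vee e_{ij}}$, with coefficients of the form $x_{i}^{2}$, $x_{i}\cdot x_{j}$ or $x_{i}\cdot Jx_{j}$, and since the latter are independent of $\Omega$ and $t$, it is enough to check that each $\mathrm{HU}$ appearing on the right-hand side transforms under partial duality with respect to $A\subset E(G)$ exactly as $\mathrm{HU}_{G}$ does.

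The key point I would establish first is that all the auxiliary graph constructions commute with partial duality with respect to $A$, because none of them touches the edges of $A$ themselves. More precisely, removing the flag at a corner $i$, adjoining a new edge $e_{ij}$ between two external corners, or inserting an extra flag and then adjoining such an edge, are purely local operations at corners whose combinatorial data involve edges and flags disjoint from $A$. In the combinatorial map language of section \ref{sec:ribbon-graphs}, partial duality with respect to $A$ only modifies the permutations along the crosses of edges in $A$, so it commutes with any operation acting on crosses outside $A$. Hence
\begin{align}
(G_{i})^{A}=(G^{A})_{i},\qquad (G_{ij})^{A}=(G^{A})_{ij},\qquad (\check G_{ij})^{A}=(\check G^{A})_{ij},
\end{align}
and, since $e_{ij}\notin A$, the third item of Lemma \ref{lem:SimpleProp} gives
$\bigl((G_{ij})^{e_{ij}}\bigr)^{A}=\bigl((G_{ij})^{A}\bigr)^{e_{ij}}$, and likewise with $\check G_{ij}$. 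The operations $-e_{ij}$ and $\vee e_{ij}$ obviously commute with taking the partial dual with respect to $A$ as well.

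With this commutation established, I would apply the $\mathrm{HU}$ partial duality corollary term by term to the formula for $\mathrm{HV}_{G^{A}}$. Each $\mathrm{HU}$ of a graph of the form $(\cdot)^{A}$ evaluated at $(\Omega,t)$ equals the corresponding $\mathrm{HU}$ of the non-dualised graph evaluated at $(\Omega_{A},t_{A})$, where $\Omega_{A}$ and $t_{A}$ are defined as in the statement. Since the coefficients $x_{i}^{2}$, $x_{i}\cdot x_{j}$ and $x_{i}\cdot Jx_{j}$ are unaffected (the $x$'s attached to the flags are preserved), summing yields
\begin{align}
\mathrm{HV}_{G^{A}}(\Omega,t,x)=\mathrm{HV}_{G}(\Omega_{A},t_{A},x),
\end{align}
which is precisely the claim.

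The only genuinely delicate step is verifying the commutation claim $(G_{ij})^{e_{ij}A}=(G^{A})_{ij}{}^{e_{ij}}$ at the level of combinatorial maps, because $e_{ij}$ is an edge that does not exist in $G$ itself and must be introduced consistently in both orderings; once this is clearly justified using the locality of partial duality on disjoint sets of edges, the rest of the argument is a term-by-term rewriting.
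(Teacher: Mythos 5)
Your proposal is correct and follows essentially the same route as the paper, which simply invokes the partial-duality invariance of $\mathrm{HU}_{G}$ together with the fact that partial duality (with respect to $A$) commutes with the flag/edge operations defining $G_{i}$, $G_{ij}$ and $\check G_{ij}$; you have merely spelled out the term-by-term substitution and the commutation of the auxiliary constructions in more detail. The form of your conclusion, $\mathrm{HV}_{G^{A}}(\Omega,t,x)=\mathrm{HV}_{G}(\Omega_{A},t_{A},x)$, is equivalent to the stated one since the substitution $(\Omega,t)\mapsto(\Omega_{A},t_{A})$ is an involution.
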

\begin{proof}
  This follows immediately from the invariance of $\mathrm{HU}_{G}$
  and the fact that partial duality commutes with the operations we
  performed on the flags.
\end{proof}

Let us illustrate the computation of $\mathrm{HV}_{G}$ on some simple examples.
\begin{example}[Bridge]
Consider the graph with a single edge, two vertices, each with one flag, labeled 1 and 2. Thus, $G_{1}$ and $G_{2}$ are graphs with one edge, two vertices and a single flag, $G_{12}$ is a banana with two edges and no flag and $\check{G}_{12}$ is a banana with a single flag. This immediately leads to
\begin{equation}
\mathrm{HV}_{G}=2\Omega(t^{2}+1)(x_{1}^{2}+x_{2}^{2})+4\Omega(t^{2}-1)x_{1}x_{2}.
\end{equation}
Since we also have $\mathrm{HU}_{G}=4t$, the amplitude reads
\begin{multline}
{\cal A}_{G}(\Omega,x_{1},x_{2})=\cr\int d\alpha\,\left[\frac{\Omega(1-t^{2})}{2\pi\theta t}\right]^{D/2}
\exp-\frac{\Omega}{2\theta}\left\{\big(t+\frac{1}{t}\big)\big(x_{1}^{2}+x_{2}^{2}\big)+2\big(t-\frac{1}{t}\big)x_{1}x_{2}\right\}.
\end{multline}
To compare this amplitude with the corresponding one in the commutative theory (see proposition \ref{commlimrop}), recall that we are working with an oscillator of frequency $\frac{2\Omega}{\theta}$. Therefore, we have to substitute $\Omega\rightarrow\frac{\theta\Omega}{2}$,
\begin{multline}
{\cal A}_{G}\big(\frac{\theta\Omega}{2},x_{1},x_{2}\big)=\cr\int d\alpha\,\left[\frac{\Omega}{2\pi}\times \frac{(1-t^{2})}{2t}\right]^{D/2}
\exp-\frac{\Omega}{4}\left\{\big(t+\frac{1}{t}\big)\big(x_{1}^{2}+x_{2}^{2}\big)+2\big(t-\frac{1}{t}\big)x_{1}x_{2}\right\},
\end{multline}
which is nothing but the Mehler kernel of an oscillator of frequency $\Omega$, as it should since there is no integration on the external flags. Strictly speaking, the commutative amplitude is recovered after the limit $\theta\rightarrow 0$, but the latter is trivial since the $\theta$-dependence drops from ${\cal A}_{G}\big(\frac{\theta\Omega}{2},t,x_{1},x_{2}\big)$.
\end{example}

\begin{example}[Tadpole]

Let us now perform the partial duality on the unique edge of the bridge treated in the last example. We obtain $G^{e}$ which is a loop with a single vertex and one flag in each of its two faces. The corresponding amplitude reads
\begin{multline}
{\cal A}_{G^{e}}(\Omega,t,x_{1},x_{2})=\cr\int d\alpha\,\left[\frac{(1-t^{2})}{(2\pi\theta)^{2}}\right]^{D/2}
\exp-\frac{t}{2\theta}\left\{\big(\Omega+\frac{1}{\Omega}\big)\big(x_{1}^{2}+x_{2}^{2}\big)+2\big(\Omega-\frac{1}{\Omega}\big)x_{1}x_{2}\right\}.
\end{multline} 
Let us note that we exchanged $\Omega$ and $t$ in the hyperbolic polynomials, but not in the prefactor. It is also worthwhile to point out that we have traded the simple graph with two 2-valent vertices for a more complicated one with one 4-valent vertex. While a direct evaluation of the former is straightforward, it becomes more complicated for the latter, because of the structure of the 4-valent vertex.

To compare it with the commutative case, we substitute   $\Omega\rightarrow\frac{\theta\Omega}{2}$ and take the limit $\theta\rightarrow 0$, so that
\begin{equation}
{\cal A}_{G}\big(\frac{\theta\Omega}{2},x_{1},x_{2}\big)=\int d\alpha\,\left[\frac{(1-t^{2})}{(2\pi\theta)^{2}}\right]^{D/2}
\exp-\frac{t\Omega}{4}\big(x_{1}+x_{2}\big)^{2}
\exp-\frac{t}{4\Omega\theta}\big(x_{1}-x_{2}\big)^{2}.
\end{equation}
Then, using 
\begin{equation}
\lim_{\sigma\rightarrow 0}\frac{1}{(2\pi\sigma^{2})^{D/2}}\exp-\frac{(x_{1}-x_{2})^{2}}{2\sigma^{2}}=\delta^{D}(x_{1}-x_{2}),
\end{equation}
we recover
\begin{equation}
\lim_{\theta\rightarrow0}{\cal A}_{G}\big(\frac{\theta\Omega}{2},x_{1},x_{2}\big)=\delta^{D}(x_{1}-x_{2})\int d\alpha\,\left[\frac{\Omega}{2\pi}\times \frac{(1-t^{2})}{2t}\right]^{D/2}
\exp-\frac{t\Omega}{4}\big(x_{1}+x_{2}\big)^{2}.
\end{equation}
This is indeed the commutative amplitude, since the 4-valent vertex reduces in the limit $\theta\rightarrow 0$ to a product of Dirac distributions (see \eqref{vertexcomm}).

\end{example}

\begin{example}[Sunset] Consider the graph with two vertices related
  by three edges labeled 1,2 and 3 and one flag on each vertex, both
  in the face bounded by the edges 1 and 3. It is simpler to compute
  the hyperbolic polynomial of its dual, which is a cycle with three
  edges and two faces, each broken by a flag on the vertex not
  adjacent to the edge 2, All the graphs involved in the expression of
  the hyperbolic polynomial are cycles or trees with flags so that an
  immediate application of propositions \ref{proptree} and
  \ref{propcycle} provides us with
\begin{align}
\mathrm{HU}_{\mbox{\tiny cycle with 3 edges}\atop\mbox{\tiny 2 broken faces}}=
4\Omega_{1}\Omega_{2}\Omega_{3}\big[
1+t_{1}^{2}t_{2}^{2}+t_{1}^{2}t_{3}^{2}+t_{2}^{2}t_{3}^{2}\big]+4\Omega_{1}t_{2}t_{3}(1+t_{1}^{2})(\Omega_{2}^{2}+\Omega_{3}^{2})\cr
+4\Omega_{2}t_{1}t_{3}(1+t_{2}^{2})(\Omega_{1}^{2}+\Omega_{3}^{2})+4\Omega_{3}t_{1}t_{2}(1+t_{3}^{2})(\Omega_{1}^{2}+\Omega_{2}^{2})
\end{align} 
and
\begin{gather}
\mathrm{HV}_{\mbox{\tiny cycle with 3 edges}\atop\mbox{\tiny 2 broken faces}}(x_{1},x_{2})=
\big[x_{1}^{2}+x_{2}^{2}\big]\big[8t_{1}t_{2}t_{3}(\Omega_{2}^{2}+\Omega_{1}^{2}\Omega_{3}^{2})
+2t_{1}\Omega_{2}\Omega_{3}(1+\Omega_{1}^{2})(1+t_{2}^{2})(1+t_{3}^{2})\cr
+2t_{2}\Omega_{1}\Omega_{3}(1+\Omega_{2}^{2})(1+t_{1}^{2})(1+t_{3}^{2})
+2t_{3}\Omega_{1}\Omega_{2}(1+\Omega_{3}^{2})(1+t_{1}^{2})(1+t_{2}^{2})\big]\cr
+x_{1}\cdot x_{2}\,\big[
16t_{1}t_{2}t_{3}(\Omega_{1}^{2}\Omega_{3}^{2}-1)
+4t_{1}(1+t_{2}^{2})(1+t_{3}^{2})\Omega_{2}\Omega_{3}(\Omega_{1}^{2}-1)\cr
\quad+4t_{2}(1+t_{1}^{2})(1+t_{3}^{2})\Omega_{1}\Omega_{3}(\Omega_{2}^{2}-1)
+4t_{3}(1+t_{1}^{2})(1+t_{2}^{2})\Omega_{1}\Omega_{2}(\Omega_{3}^{2}-1)\big]\cr
+x_{1}\cdot Jx_{2}\,\big[4(1+t_{1}^{2})t_{2}t_{3}\Omega_{1}(\Omega_{3}^{2}-\Omega_{2}^{2})\cr
\quad+4(1+t_{2}^{2})t_{1}t_{3}\Omega_{2}(\Omega_{3}^{2}-\Omega_{1}^{2})
+4(1+t_{3}^{2})t_{1}t_{2}\Omega_{3}(\Omega_{2}^{2}-\Omega_{1}^{2})\big].
\end{gather} 
We readily obtain the hyperbolic polynomials  of the sunset by interchanging $\Omega_{e}$ and $t_{e}$ for all edges, 
\begin{align}
\mathrm{HU}_{\mbox{\tiny sunset}}=
4t_{1}t_{2}t_{3}\big[
1+\Omega_{1}^{2}\Omega_{2}^{2}+\Omega_{1}^{2}\Omega_{3}^{2}+\Omega_{2}^{2}\Omega_{3}^{2}\big]+4t_{1}(t_{2}^{2}+t_{3}^{2})\Omega_{2}\Omega_{3}(1+\Omega_{1}^{2})\cr
+4t_{2}(t_{1}^{2}+t_{3}^{2})\Omega_{1}\Omega_{3}(1+\Omega_{2}^{2})+4t_{3}(t_{1}^{2}+t_{2}^{2})\Omega_{1}\Omega_{2}(1+\Omega_{3}^{2})
\end{align} 
and
\begin{gather}
\mathrm{HV}_{\mbox{\tiny sunset}}(x_{1},x_{2})=
\big[x_{1}^{2}+x_{2}^{2}\big]\big[8\Omega_{1}\Omega_{2}\Omega_{3}(t_{2}^{2}+t_{1}^{2}t_{3}^{2})
+2\Omega_{1}t_{2}t_{3}(1+t_{1}^{2})(1+\Omega_{2}^{2})(1+\Omega_{3}^{2})\cr
\quad+2\Omega_{2}t_{1}t_{3}(1+t_{2}^{2})(1+\Omega_{1}^{2})(1+\Omega_{3}^{2})
+2\Omega_{3}t_{1}t_{2}(1+t_{3}^{2})(1+\Omega_{1}^{2})(1+\Omega_{2}^{2})\big]\cr
+x_{1}\cdot x_{2}\,\big[
16\Omega_{1}\Omega_{2}\Omega_{3}(t_{1}^{2}t_{3}^{2}-1)
+4\Omega_{1}(1+\Omega_{2}^{2})(1+\Omega_{3}^{2})t_{2}t_{3}(t_{1}^{2}-1)\cr
\quad+4\Omega_{2}(1+\Omega_{1}^{2})(1+\Omega_{3}^{2})t_{1}t_{3}(t_{2}^{2}-1)
+4\Omega_{3}(1+\Omega_{1}^{2})(1+\Omega_{2}^{2})t_{1}t_{2}(t_{3}^{2}-1)\big]\cr
+x_{1}\cdot Jx_{2}\,\big[4(1+\Omega_{1}^{2})\Omega_{2}\Omega_{3}t_{1}(t_{3}^{2}-t_{2}^{2})\cr
\quad+4(1+\Omega_{2}^{2})\Omega_{1}\Omega_{3}t_{2}(t_{3}^{2}-t_{1}^{2})
+4(1+\Omega_{3}^{2})\Omega_{1}\Omega_{2}t_{3}(t_{2}^{2}-t_{1}^{2})\big].
\end{gather}
In the commutative limit,  we keep only the lowest order terms in $\Omega$ in the hyperbolic polynomials and we recover the product of three independent Mehler kernels for the amplitude. Moreover, if we denote by $a$ (resp. $b$, $c$) the coefficient of the term in $(x_{1}^{2}+x_{2}^{2})$ (resp. half of the coefficient of $x_{1}x_{2}$, half of the coefficient of $x_{1}Jx_{2}$), then the Dodgson condensation identity $a^{2}-b^{2}+c^{2}=\mathrm{HU}_{\mbox{\tiny sunset}}\mathrm{HU}_{\mbox{\tiny 3-banana}}$ is obeyed.
\end{example}

\begin{example}[3-star tree with flags]
We compute the hyperbolic polynomials for the 3-star tree is made of one trivalent vertex, attached to 3 univalent ones, each with one flag. The first hyperbolic polynomial results from a direct application of proposition \ref{proptree}
\begin{align}
\mathrm{HU}_{\mbox{\tiny 3-star tree}\atop\mbox{\tiny with flags}}=2\Omega_{1}\Omega_{2}\Omega_{3}(1+t_{1}^{2})
(1+t_{2}^{2})(1+t_{3}^{2})\qquad\qquad{}\cr
+8\Omega_{1}(1+t_{1}^{2})t_{2}t_{3}+8\Omega_{2}(1+t_{2}^{2})t_{1}t_{3}+8\Omega_{3}(1+t_{3}^{2})t_{1}t_{2}.
\end{align}
All the graphs involved in the computation of the second hyperbolic polynomial reduce to trees and cycles after a single use of the reduction relation, so that propositions \ref{proptree} and \ref{propcycle} yield
\begin{gather}
\mathrm{HV}_{\mbox{\tiny 3-star tree}\atop\mbox{\tiny with flags}}(x_{1},x_{2},x_{3})=
x_{3}^{2}\,\big[8t_{1}t_{2}t_{3}\Omega_{3}^{2}+4\Omega_{1}^{2}t_{1}(1+t_{2}^{2})\Omega_{2}(1+t_{3}^{2})\Omega_{3}\cr
+4\Omega_{2}^{2}t_{2}(1+t_{1}^{2})\Omega_{1}(1+t_{3}^{2})\Omega_{3}+4\Omega_{3}^{2}t_{3}(1+t_{1}^{2})\Omega_{1}(1+t_{2}^{2})\Omega_{2}
\big]\cr
+x_{1}\cdot x_{2}\,\big[8(1-t_{1}^{2})(t_{2}^{2}-1)\Omega_{1}\Omega_{2}t_{3}\big]+\mbox{2 cyclic permutations}\cr
+x_{1}\cdot Jx_{2}\,\big[4(1-t_{1}^{2})(1-t_{2}^{2})\Omega_{1}\Omega_{2}\Omega_{3}\big]+\mbox{2 cyclic permutations}.\cr
\end{gather}

\end{example}

\label{second}

\section{Various limiting cases}
\label{sec:vari-limit-cases}

\subsection{The critical model \texorpdfstring{$\Omega=1$}{Omega = 1}}
\label{sec:crit-model-omeg1}
When we set $\Omega_{e}=1$ for all edges, the hyperbolic polynomial $\mathrm{HU}_{G}$ can be factorized over the faces  of $G$ (i.e. the connected components of the boundary).  Before we give a combinatorial proof of a general factorization theorem at $\Omega=1$, let us present a heuristic derivation of this result for ribbon graphs without flags, based on the matrix basis.

The Moyal algebra of Schwartz functions on ${\Bbb R}^{D}$ is isomorphic to an algebra of infinite dimensional matrices $M_{pq}$ whose indices $p,q$ are elements of ${\Bbb N}^{D/2}$ and whose entries decrease faster than any polynomials in $p,q$. Using this isomorphism $\phi\rightarrow M$, the interaction \eqref{intstar} can be written as
\begin{equation}
S_{\mbox{\tiny int}}[M]=(2\pi\theta)
\sum_{n\geq1}\frac{g_{n}}{n}\,\mbox{Tr}\big[M^{n}\big],
\end{equation}
which is the standard interaction familiar from matrix models. The associated vertex reads
\begin{equation}
{\cal V}_{n}(p_{i},q_{1},p_{2},q_{2},\dots p_{n},q_{n})=(2\pi\theta)\,\delta_{q_{1},p_{2}}\delta_{q_{2},p_{3}}\cdots\delta_{q_{n-1},p_{n}}\delta_{q_{n},p_{1}}.
\end{equation}
The quadratic term reads
\begin{equation}
S_{\mbox{\tiny 0}}[M]=
\frac{1}{2}\sum_{p,q,r,s}M_{pq}\Delta_{pq,rs}M_{rs}.
\end{equation}
In the critical case $\Omega=1$,
\begin{equation}
\Delta_{pq,rs}=(2\pi\theta)\delta_{ps}\delta_{qr}\,\frac{4(|p|+|q|+1)}{\theta},
\end{equation}
where $|p|=p_{1}+\cdots+_{D/2}$ for any multi-index $p=(p_{1},\dots,p_{D/2})\in{\Bbb N}^{D/2}$.

Because of the Kronecker symbols $\delta$, the multi-indices are identical around each faces (as in ordinary matrix models), so that the amplitude factorizes over the faces for a graph without flags,
\begin{equation}
{\cal A}_{G}=\int\prod_{e}d\alpha_{e}\,\frac{1}{(2\pi\theta)^{e(G)-v(G)}}\prod_{\sigma\,\mbox{\tiny faces of }G}
\prod_{e\,\mbox{\tiny edges}\atop \mbox{\tiny bounding }\sigma}\Big\{\sum_{i_{e}\in{\Bbb N}^{D/2}}\exp-\frac{4\alpha_{e}}{\theta}\big(|p_{e}|+\frac{1}{2}\big)\Big\}.
\end{equation}
Summing up the geometric series and expressing the amplitudes in terms of $t_{e}=\tanh\frac{2\alpha_{e}}{\theta}$, we obtain 
\begin{equation}
{\cal A}_{G}=\int\prod_{e}d\alpha_{e}\,\bigg[\frac{1}{(2\pi\theta)^{e(G)-v(G)}}\times\prod_{e}\frac{1-t_{e}}{1+t_{e}}
\prod_{\sigma\,\mbox{\tiny faces of }G}
\Big(1-\prod_{e\,\mbox{\tiny edges}\atop \mbox{\tiny bounding }\sigma}\frac{1-t_{e}}{1+t_{e}}\Big)^{-1}\bigg]^{D/2}.
\end{equation}
Then, identifying a face $\sigma$ of $G$ with a vertex $v^{*}$ of $G^{*}$,
\begin{equation}
\prod_{e\,\mbox{\tiny edges}\atop \mbox{\tiny bounding }\sigma}\big(1+t_{e}\big)
-\prod_{e\,\mbox{\tiny edges}\atop \mbox{\tiny bounding }\sigma}\big(1-t_{e}\big)
=2\sum_{A\subset E_{v^{*}},\atop|A|\;\mbox{\tiny odd}}\;\prod_{e\in A}t_{e},
\end{equation}
with $E_{v^{*}}$ as the set of half-edges of $G^{*}$ incident to $v^{*}$. Comparing with the general expression of the amplitude $\eqref{amplitude}$, this suggests that
\begin{equation}
\mathrm{HU}_{G}(1,t)=2^{v(G^{\ast})}\prod_{v^{\ast}\in V(G^{\ast})}
\Big\{\sum_{A\subset E_{v^{*}},\atop|A|\;\mbox{\tiny odd}}\;\prod_{e\in A}t_{e}\Big\}.\label{eq:FactHU}
\end{equation}

\begin{example}[Dumbbell]
Let us consider the dumbbell graph (an edge labelled 1 attached to to vertices, each carrying a loop labelled 2 and 3). The graph has 3 faces and we get
\begin{equation}
\mathrm{HU}_{\mbox{\tiny dumbbell}}(1,t)=8t_{2}t_{3}\big[2t_{1}(1+t_{2}t_{3})+(1+t_{1}^{2})(t_{2}+t_{3})\big].
\end{equation}
\end{example}

Let us now prove the factorization of $\HU$ at $\Omega =1$ in a
completely combinatorial way. To this aim, we will use the bijections
introduced in section \ref{sec:biject-betw-class}. Moreover, the
polynomial $\HU$ can be extended to ribbon graphs with flags and we
show that the factorization (\ref{eq:FactHU}) holds in this
case too.

\paragraph{Statement of the problem}
\label{sec:statement-problem}

Via the $x$-space representation, we
computed the parametric representation of the Grosse-Wulkenhaar model, see section \ref{sec:feynm-ampl-grosse}. This representation
involves a new ribbon graph invariant $\cQ$, see
equation (\ref{eq:QDef}). In fact, this is only a special evaluation $\HU$
of $\cQ$ which is used in the Feynman amplitudes:
\begin{align}
  \HU(G;\bt,\bO)=&\cQ(G;\bt,\bO,\bt^{2}\bO,\bt\bO^{2},\br)\label{eq:HUQ}
\end{align}
with $r_{2n}=0$ and $r_{2n+1}=2$. Then, with a slight abuse of
notation, and using definition \ref{def:ColGraphs}, the polynomial $\HU$ can be written:
\begin{align}
  \HU(G;\bt,\bO)=&\sum_{A\subset
      E(G)}t^{A}\,\Omega^{A^{c}}\sum_{B\in\COdd(G^{A^{c}})}\big(t^{B\cap
      A^{c}}\big)^{2}\big(\Omega^{B\cap A}\big)^{2}.
\end{align}
Note that if $G$ is a ribbon graph with flags, $\HU$ is also
well-defined.

On another side, we computed the parametric representation of the
\emph{critical}  ($\Omega=1$) Grosse-Wulkenhaar model via the matrix
base. It involves the following polynomial, see (\ref{eq:FactHU}) and definition \ref{def:ColCuttGraphs}:
\begin{align}
  U(G;\bt)\defi \sum_{H\in\COddF(G^{\star})}t^{H}.
\end{align}
Uniqueness of the parametric representation implies
\begin{align}
  \HU(G;\bt,\mathbf{1})=&U(G;\bt).\label{eq:ToProveBij}
\end{align}
Our task is now to give a bijective proof of \eqref{eq:ToProveBij}. To
this aim, given a ribbon graph $G$ with flags, we are going to present a
bijection $\chi_{G}$ between the colored odd cutting subgraphs of $G^{\star}$
and the colored odd subgraphs of all the partial duals of $G$. Finally
the monomial in $\HU$ corresponding to a subgraph $g$ will be proven
to be equal to the monomial of $\chi_{G}(g)$ in $U$.

\paragraph{A bijection between colored odd subgraphs}
\label{sec:biject-betw-parity}

\begin{lemma}\label{lem:bijection}
  Let $G$ be an orientable ribbon graph with flags. For any total order $<$ on the set
  $E(G)$ of edges of $G$, there is a bijection $\chi_{G}$ between $P\defi\bigcup_{S\subset E(G)}\COdd(G^{S})$
  and $\COddF(G^{\star})$.
\end{lemma}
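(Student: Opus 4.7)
The plan is to realize $\chi_{G}$ as an iterated application of the edge-wise bijection from Lemma \ref{lem:ColCutSubg}, processed in the order prescribed by $<$. First, I note that $\COdd(G^{S})$ sits naturally inside $\COddF(G^{S})$: a colored odd spanning subgraph may be regarded as a colored odd cutting subgraph in which every edge is taken as a full edge (no half-edge appears alone). Second, enumerating $E(G)\setminus S$ in increasing order as $e_{i_{1}}<\cdots<e_{i_{k}}$, the successive composition
\[
\Phi_{S}:\COddF(G^{S})\xrightarrow{\,\chi^{\set{e_{i_{1}}}}\,}\COddF(G^{S\cup\set{e_{i_{1}}}})\xrightarrow{\,\chi^{\set{e_{i_{2}}}}\,}\cdots\xrightarrow{\,\chi^{\set{e_{i_{k}}}}\,}\COddF(G^{E(G)})=\COddF(G^{\star})
\]
is a bijection. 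I would then define $\chi_{G}$ on the $S$-component of $P$ as the restriction of $\Phi_{S}$ to $\COdd(G^{S})\subset\COddF(G^{S})$.

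The heart of the argument is the image characterization: $\Phi_{S}(\COdd(G^{S}))$ should consist of exactly those elements of $\COddF(G^{\star})$ whose cut edges (i.e.\ edges appearing as flags coming from half-edges) are precisely the ones indexed by $E(G)\setminus S$. This would follow from the flag-refined description of $\chi^{\set{e}}$ given at the end of Section \ref{sec:color-cutt-subgr}: it is a one-to-one map between $\COddF(\cdot)\flag F'$ and $\COddF(\cdot)\flag F'_{e}$, where $F'_{e}$ toggles the membership of $e$ in the flag-set. Starting from an element of $\COdd(G^{S})$, in which no edge is cut, each successive $\chi^{\set{e_{i_{j}}}}$ turns $e_{i_{j}}$ into a cut (flagged) edge while leaving the cut-status of the other edges unchanged, producing an element of $\COddF(G^{\star})$ cut exactly along $E(G)\setminus S$.

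Finally, since every colored odd cutting subgraph of $G^{\star}$ possesses a uniquely defined set of cut edges, the images $\Phi_{S}(\COdd(G^{S}))$ partition $\COddF(G^{\star})$ as $S$ ranges over subsets of $E(G)$, so the map $\chi_{G}:P\to\COddF(G^{\star})$ assembled from the $\Phi_{S}$ is a bijection.

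The hard part will be checking that the image characterization is stable under the iteration, in particular when some $e_{i_{j}}$ is a loop in the intermediate graph $G^{S\cup\set{e_{i_{1}},\dots,e_{i_{j-1}}}}$: the case analysis on loop parity in the proof of Lemma \ref{lem:ColCutSubg} must be shown to compose coherently across successive partial dualities. The orientability assumption on $G$ and the fixed total order on $E(G)$ are what make Lemma \ref{lem:ColCutSubg} applicable unambiguously at each step, and thereby ensure that this bookkeeping closes up.
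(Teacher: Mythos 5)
Your proposal is correct and follows essentially the same route as the paper: the paper also defines $\chi_{G,E'}$ as the composition $\chi_{G^{E(G)\setminus\set{e_{|E'^{c}|}}}}^{\set{e_{|E'^{c}|}}}\circ\dotsm\circ\chi_{G}^{\set{e_{1}}}$ of the edge-wise bijections of Lemma \ref{lem:ColCutSubg} in the order given by $<$, and uses the flag-refined property ($\COddF\flag F'\to\COddF\flag F'_{e}$) to land in $\COddF(G^{\star})\flag E'^{c}$, so that the images partition $\COddF(G^{\star})$ by flag-set. Your image characterization and the partition argument are exactly the content of the paper's diagram.
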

Before entering into the proof of lemma \ref{lem:bijection}, let us
first give a preliminary definition:
\begin{defn}[Restrictions]\label{def:restr}
  Let $G$ be a ribbon graph with flags. For any $E'\subset E(G)$,
  the restriction of the map $\chi_{G}$ to $\COdd(G^{E'})$ is
  denoted by $\chi_{G,E'}:\COdd(G^{E'})\to\COddF(G^{\star})$.
\end{defn}
\begin{proof}
  We first explain how the map $\chi_{G}$ is defined. Let $G$ be a
  colored ribbon graph with flags. Let $g\in\bigcup_{S\subset
    E(G)}\COdd(G^{S})$ be a colored odd subgraph of a partial dual of
  $G$, say $G^{E'}$ for $E'\subset E(G)$. The subgraph
  $\check{g}\defi\chi_{G}(g)\in\COddF(G^{\star})$ has edges in
  $E(g)\cap E'$ and flags in $E'^{c}$. Here is how it is constructed
  from $g$.

  Each of the maps $\chi_{G,E'}$ is defined as the
  composition of $|E'^{c}|$ maps that we describe now. In section
  \ref{sec:color-cutt-subgr}, we introduced bijections
  \begin{align}
    \chi_{G}^{\set e}:\COddF(G)\bij\COddF(F^{\set e}).
  \end{align}
  We saw that given any flag-set $F'$ of $G$, these maps restrict to
  bijections
  \begin{align}
    \chi_{G}^{\set e}:\COddF(G)\flag F'\bij\COddF(G)\flag F'_{e}.
  \end{align}
  Given any order on $E(G)$, we can write
  $E'^{c}\fide\set{e_{1},\dotsc,e_{|E'^{c}|}}$. Then we define
  \begin{align}
    \chi_{G,E'}\defi\chi_{G^{E(G)\setminus\set{e_{|E'^{c}|}}}}^{\set{e_{|E'^{c}|}}}\circ\dotsm\circ\chi_{G^{\set{e_{1}}}}^{\set{e_{2}}}\circ\chi_{G}^{\set{e_{1}}}.
  \end{align}
  This map is well defined and is a bijection from $\COdd(G^{E'})$ to
  $\COddF(G^{\star})\flag E'^{c}$, as shown by the following diagram:
  \begin{center}
    \begin{tikzpicture}[>=stealth]
      \definecolor{fabgray}{gray}{.6}; \foreach \u in {6} { \node (un)
        at (0,0) [shape=rectangle] {$\COdd(G^{E'})$}; \node (deux) at
        (\u,0) [shape=rectangle]
        {$\COddF(G^{E'\cup\set{e_{1}}})\flag\set{e_{1}}$}; \node
        (trois) at (2*\u,0) [shape=rectangle]
        {$\COddF(G^{E'\cup\set{e_{1},e_{2}}})\flag\set{e_{1},e_{2}}$};
        \node (quatre) at (2*\u,-.5*\u) [shape=circle] {$\vdots$};
        \node (cinq) at (2*\u,-\u) [shape=rectangle]
        {$\COddF(G^{\star})\flag E'^{c}$}; \draw [thick,<->,fabgray]
        (un.east)--(deux.west) node [above,text width=3cm,text
        centered,midway] {\color{black} $\chi_{G}^{\set{e_{1}}}$};
        \draw [thick,<->,fabgray] (deux.east)--(trois.west) node
        [above,text width=3cm,text centered,midway] {\color{black}
          $\chi_{G^{\set{e_{1}}}}^{\set{e_{2}}}$}; \draw
        [thick,<->,fabgray] (trois.south)--(quatre.north) node
        [right,text width=2cm,text centered,midway] {\color{black}
          $\chi_{G^{\set{e_{1},e_{2}}}}^{\set{e_{3}}}$}; \draw
        [thick,<->,fabgray] (quatre.south)--(cinq.north) node
        [right,text width=2cm,text centered,midway] {\color{black}
          $\chi_{G^{E(G)\setminus\set{e_{|E'^{c}|}}}}^{\set{e_{|E'^{c}|}}}$};
        \draw [thick,<->,fabgray] (un.south)--(cinq.west) node
        [below,text width=3cm,text height=.5cm,text centered,midway]
        {\color{black} $\chi_{G,E'}$}; }
    \end{tikzpicture}
  \end{center}
  This proves lemma \ref{lem:bijection}.
\end{proof}

\paragraph{Factorization of \textbf{$\HU$}}
\label{sec:factorization-hu}
Let us define the monomials of $\HU$ (for $\bO\equiv 1$) and $U$ by
\begin{align}
  \HU(G;\bt,\mathbf{1})\fide&\ \sum_{g\in P}\cM_{\HU}(G;g),\\
  U(G;\bt)\fide&\ \sum_{h\in\COddF(G^{\star})}\cM_{U}(G^{\star};h).
\end{align}
Let $g\in\COdd(G^{A^{c}})$, $\chi_{G}(g)\in\COddF(G^{\star})\flag
A$. Moreover $E(\chi_{G}(g))=E(g)\cap A^{c}$. Thus
$\cM_{U}(G^{\star};\chi_{G}(g))=t^{A}(t^{E(g)\cap
  A^{c}})^{2}=\cM_{\HU}(G;g)$. This implies
\begin{align}
  \HU(G;\bt,\mathbf{1})=&\ \sum_{g\in P}\cM_{\HU}(G;g)\\
  =&\ \sum_{g\in
    P}\cM_{U}(G^{\star};\chi_{G}(g))=\sum_{g'\in\COddF(G^{\star})}\cM_{U}(G^{\star};g')\\
  =&\ U(G^{\star};\bt).
\end{align}

\begin{example}[Triangle with flags]
Consider the triangle with one flag on each vertex, all in the same face. In this case, one face has an even number of flags while the other has an odd number, which yields
\begin{equation}
\mathrm{HU}_{\mbox{\tiny triangle}\atop \mbox{\tiny 3 flags}}(1,t)=4\big[t_{1}+t_{2}+t_{3}+t_{1}t_{2}t_{3}\big]\big[1+t_{1}t_{2}+t_{1}t_{3}+t_{2}t_{3}\big],
\end{equation}
in accordance with  \eqref{trianglewithflags}.
\end{example}

\subsection{An algorithm for computing \texorpdfstring{$\mathrm{HU}_{G}(\Omega,t)$}{HU} based on the critical model}

The previous factorization over faces of $G$ provides us with a useful
algorithm to compute $\mathrm{HU}_{G}(\Omega,t)$, for any ribbon graph
with flags. $\mathrm{HU}_{G}(1,t)$ has indeed the same monomials in $t$ as $\mathrm{HU}_{G}(\Omega,t)$: all its coefficients are positive and no cancellation is possible. We only have to write each of the coefficient of each monomial in $t$ as a polynomial in $\Omega$. To proceed, we first determine the monomials in $\mathrm{HU}_{G}(1,t)$ by expanding
\begin{equation}
\mathrm{HU}_{G}(1,t)=2^{v(G^{\ast})}\prod_{v^{\ast}\in V(G^{\ast})}
\Big\{\sum_{A\subset E_{v^{*}},\atop|A|\;\mbox{\tiny odd}}\;\prod_{e\in A}t_{e}\Big\}.
\end{equation}
Then, for each monomial (discarding the prefactor)
\begin{itemize}
\item
perform the partial duality with respect to the set $A$ of edges with an even power of $t_{e}$ and multiply the monomial by $\prod_{e}\Omega_{e}$,  
\item
cut in $G^{A}$ the edges with a factor $t_{e}^{2}$ (edges in $A\cap B$) and delete those with $t_{e}^{0}$ (edges in $A\cap B^{c}$),
\item
sum over all possibilities of cutting the edges not in $A$ , with a factor $\Omega_{e}^{2}$, or deleting, with a factor 1,
\item
multiply by $2^{v(G^{A})}$.
\end{itemize}
At the end, it is useful to check the result by evaluating it at $\Omega=1$. The interest of this algorithm is that we are performing the operations only on the subsets $A$ and $B$ that are admissible, in contradistinction with the general expansion formula   \eqref{expansion}, where the admissibility can be tested only after having performed the partial duality and the cuts. Therefore, we avoid non admissible sets right from the beginning.

\begin{example}[Non-planar 3-banana]\label{nonplanarbananaex}
In the case of the non planar banana, the critical model yields
\begin{multline}
\mathrm{HU}_{\mbox{\tiny non planar}\atop \mbox{\tiny 3-banana}}(1,t)=\cr2\big[8t_{1}t_{2}t_{3}+2t_{1}(1+t_{2}^{2})(1+t_{3}^{2})
+2t_{2}(1+t_{1}^{2})(1+t_{3}^{2})+2t_{3}(1+t_{1}^{2})(1+t_{3}^{2})\big].
\end{multline}
Applying the algorithm, we deduce
\begin{gather}
\mathrm{HU}_{\mbox{\tiny non planar}\atop \mbox{\tiny 3-banana}}(\Omega,t)= 4\Big[t_{1}t_{2}t_{3}\big[\Omega_{1}^{2}+\Omega_{2}^{2}+\Omega_{3}^{2}+\Omega_{1}^{2}\Omega_{2}^{2}\Omega_{3}^{2}\big]
+t_{1}\Omega_{2}\Omega_{3}\big[t_{2}^{2}+t_{3}^{2}+\Omega_{1}^{2}(t_{2}^{2}+t_{3}^{2})\big]\cr
+t_{2}\Omega_{1}\Omega_{3}\big[t_{1}^{2}+t_{3}^{2}+\Omega_{2}^{2}(t_{1}^{2}+t_{3}^{2})\big]
+t_{3}\Omega_{1}\Omega_{2}\big[t_{1}^{2}+t_{2}^{2}+\Omega_{3}^{2}(t_{1}^{2}+t_{2}^{2})\big]
\Big].
\end{gather}
\end{example}

\subsection{The noncommutative heat kernel
  limit\texorpdfstring{ $\Omega\rightarrow 0$}{: Omega to zero}}

In this section, we study the amplitude \eqref{defamplitude} and the first hyperbolic polynomial $\mathrm{HU}_{G}(\Omega,t)$ in the limit of vanishing oscillator frequency. In order to avoid a lengthy discussion of the second hyperbolic polynomial, we restrict ourselves to graphs without flags. The general case can be treated along the same lines. Without further loss of generality, we also assume the graph to be connected. 

In the limit $\Omega\rightarrow 0$, the Mehler kernel reduces to the heat kernel,
\begin{equation}
\lim_{\widetilde{\Omega}\rightarrow 0}{\cal K}_{\widetilde{\Omega}}(x,y)=
{\cal K}_{0}(x,y)=
\frac{1}{(4\pi)^{D/2}}\int_{1/\Lambda^{2}}^{\infty}\frac{d\alpha}{\alpha^{D/2}}
\exp-\frac{(x-y)^{2}}{4\alpha}.\label{heatkernel}
\end{equation}
Notice that ${\cal K}_{0}(x,y)$ only depends on $x-y$ , so that it is invariant under translations, ${\cal K}_{0}(x+a,y+a)={\cal K}_{0}(x,y)$. Because the heat kernel and the vertex are both invariant under translations, the integrand in \eqref{defamplitude} only depends on $2e(G)-1$ variables for a connected graph without flags. Therefore the integral over the variables attached to the half-lines is trivially divergent and the limit $\Omega\rightarrow 0$ of the amplitude is not defined. 

In order to cure this problem, graph amplitudes with heat kernel propagators are usually defined by an integration over all variables associated to the half-lines, save one. 

\begin{defn}
Let $G$ be a connected ribbon graph without flags and let us attach a variable $y_{i}\in{\Bbb R}^{D}$ to each half-edge of $G$, with the convention that  $y_{i_{0}}=0$ for a fixed half-edge $i_{0}$. The (generalized) amplitude of a ribbon graph in the heat kernel theory is defined as
\begin{equation} 
{\cal A}_{G}^{\mbox{\tiny heat kernel}}=\int \prod_{i\neq i_{0}} d^{D}y_{i}
\prod_{e\in E(G)}{\cal K}_{0}(y_{i_{e,+}},y_{i_{e,-}})\prod_{v\in V(G)}{\cal V}_{d_{v}}(y_{i_{v,1}},\dots,y_{i_{v.d_{v}}}),\label{defamplitudeheatkernel}
\end{equation}
with $y_{i_{e,+}},y_{i_{e,-}}$ the variables attached to the ends of $e$ and $y_{i_{v,1}},\dots,y_{i_{v.d_{v}}}$ the variables attached in cyclic order around vertex $v$.
\end{defn}

After the removal of one of these integration variables, the limit $\Omega\rightarrow 0$ is well-defined and related to the first Symanzik polynomial $\mathrm{U}_{G}$ of a non commutative field theory, which is itself an evaluation of the Bollob\`as-Riodan polynomial.  In order to see how this results from the limit $\Omega\rightarrow$ of an amplitude with Mehler kernel, we first define a new graph whose amplitude is  
obtained by integrating over all half-lines but $i_{0}$.

\begin{prop}\label{amplitudeGhat}
et $G$ be a connect ribbon graph without flags and $i_{0}$ one of its half-lines.
We define $\widehat{G}_{i_{0}}$ as the graph constructed by replacing
the half-line $i_{0}$ by a flag on the vertex it is attached to in $G$
and inserting a bivalent vertex with one flag on its other end, see
figure \ref{fig:Ghat}. Then, the amplitude of  $\widehat{G}_{i_{0}}$ with variables $x=0$ for the two extra flags is
\begin{equation} 
{\cal A}_{\widehat{G}_{i_{0}}}(\Omega,0)=\int \prod_{i\neq i_{0}} d^{D}y_{i}
\prod_{e\in E(G)}{\cal K}_{\Omega_{e}}(y_{i_{e,+}},y_{i_{e,-}})\prod_{v\in V(G)}{\cal V}_{d_{v}}(y_{i_{v,1}},\dots,y_{i_{v.d_{v}}}),
\label{i0removed}
\end{equation}
with the convention $y_{i_{0}}=0$.
\end{prop}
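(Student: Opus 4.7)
The plan is to unfold definition \eqref{defamplitude} applied to $\widehat{G}_{i_0}$ and to integrate out the auxiliary variable introduced by the new bivalent vertex. The key observation driving the argument is that $\mathcal{V}_2(x_1,x_2)=\delta^{D}(x_1-x_2)$, so a bivalent vertex carrying a flag pinned to $0$ forces the variable on its other half-edge to equal $0$ as well; this is precisely the mechanism that will implement the convention $y_{i_0}=0$ appearing on the right hand side of \eqref{i0removed}.

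Concretely, let $e_0\in E(G)$ be the edge containing the half-line $i_0$, let $i_0'$ be its other half-edge attached to some vertex $v'$, and let $v$ be the vertex to which $i_0$ is attached in $G$. Reading off figure \ref{fig:Ghat}, the graph $\widehat{G}_{i_0}$ is obtained from $G$ by removing $e_0$ and adding a flag at $v$ together with a new edge $e_0'$ (carrying the frequency $\Omega_{e_0}$) joining $v'$ to a fresh bivalent vertex $v_{\mathrm{new}}$ which carries a second flag. Denote by $j$ the half-edge of $e_0'$ attached to $v_{\mathrm{new}}$. All other vertices, edges, half-edges, and the cyclic orderings around vertices distinct from $v$ are copied verbatim; the cyclic ordering around $v$ is the one inherited from $G$, with $i_0$ reinterpreted as a flag.

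Next I would write out $\mathcal{A}_{\widehat{G}_{i_0}}(\Omega,0)$ using \eqref{defamplitude} with the two extra flag variables set to $0$. The edges of $G$ distinct from $e_0$ and the vertices of $G$ distinct from $v$ contribute exactly the same factors as in \eqref{i0removed}. The vertex $v$ contributes $\mathcal{V}_{d_v}$ evaluated with $0$ inserted in the cyclic slot formerly occupied by $y_{i_0}$, which coincides with the corresponding factor of \eqref{i0removed} under the convention $y_{i_0}=0$. The genuinely new pieces are the propagator $\mathcal{K}_{\Omega_{e_0}}(y_{i_0'},y_j)$ of $e_0'$, the bivalent vertex contribution $\mathcal{V}_2(y_j,0)=\delta^{D}(y_j)$, and the extra integration over $y_j$. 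Integrating out $y_j$ against the Dirac delta sets $y_j=0$, and by symmetry of the Mehler kernel the propagator becomes $\mathcal{K}_{\Omega_{e_0}}(0,y_{i_0'})=\mathcal{K}_{\Omega_{e_0}}(y_{i_{e_0,+}},y_{i_{e_0,-}})\big|_{y_{i_0}=0}$, reproducing exactly the factor associated to $e_0$ on the right hand side of \eqref{i0removed}.

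There is no real obstacle here: the argument boils down to the elementary identity $\int d^{D}y_j\,\delta^{D}(y_j)\,f(y_j)=f(0)$. The only point requiring a bit of care is purely book-keeping, namely verifying that the cyclic ordering around $v$ in $\widehat{G}_{i_0}$ agrees with that of $v$ in $G$ once $i_0$ is promoted to a flag, and that the insertion of $v_{\mathrm{new}}$ introduces no spurious normalization factor. Both points are immediate from the definition of $\mathcal{V}_n$ and from the normalization of $\mathcal{V}_2$ as a Dirac distribution.
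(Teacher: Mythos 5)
Your argument is correct and follows essentially the same route as the paper's (much terser) proof: compare the integrands locally at the modified edge and vertex, and observe that the two zero-valued flags pin the relevant variables to $0$, with ${\cal V}_{2}(y_{j},0)=\delta^{D}(y_{j})$ collapsing the extra integration. Your version usefully makes explicit the Dirac-delta mechanism and the symmetry of the Mehler kernel that the paper leaves implicit.
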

\begin{figure}[!htp]
  \centering
  \subfloat[An edge $e$ made of $2$ half-edges $i_0$ and $j_0$ in a graph
  $G$]{{\label{fig:halfEdgei0}}\includegraphics[scale=.8]{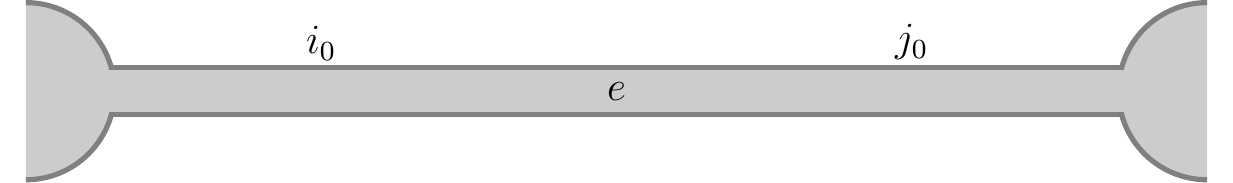}}\\
  \subfloat[The transformed edge in $\widehat{G}_{i_{0}}$]{{\label{fig:transformedEdge}}\includegraphics[scale=.8]{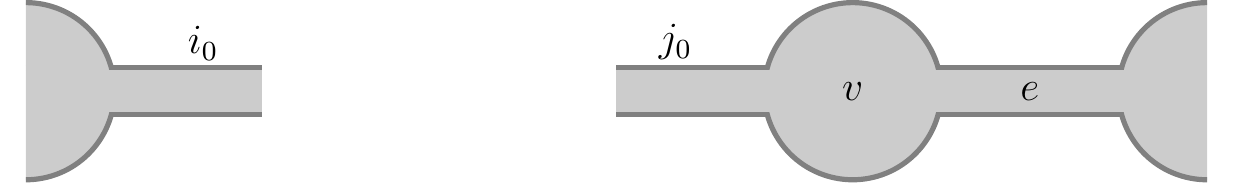}}
  \caption{From $G$ to $\widehat{G}_{i_{0}}$}
  \label{fig:Ghat}
\end{figure}
\begin{proof}
  The amplitudes ${\cal A}_{\widehat{G}_{i_{0}}}(\Omega,0)$ and ${\cal
    A}_{G}$ only differ by the vertex and the edge involving the
  half-line $i_{0}$. Since the two flags of $\widehat{G}_{i_{0}}$
  carry $x=0$, the relevant variable in the interaction and in the
  propagator is set to 0, which reproduces \eqref{i0removed}. Then,
  the heat kernel limit follows immmediately from isolating
  $\prod_{e}\Omega_{e}$ in \eqref{amplitude}.
\end{proof}

\begin{rem}
If $\widehat{G}_{e_{0}}$ is the graph obtained by encircling $i_{0}$ by an extra loop $e_{0}$, then $\widehat{G}_{i_{0}}=\widehat{G}_{e_{0}}^{e_{0}}\cut e_{0}$.
\end{rem}

Then, the heat-kernel limit can be taken as follows.

\begin{thm}
For a connected ribbon graph without flag,
\begin{equation}
{\cal A}_{G}^{\mbox{\tiny heat kernel}}
=\int \prod_{e}d\alpha_{e}\left[ \frac{1}{(4\pi)^{e(G)-v(G)+1}\mathrm{U}_{G}(\alpha,\theta)}\right]^{\frac{D}{2}},
\end{equation}
with
\begin{equation}
\mathrm{U}_{G}(\alpha,\theta)=\sum_{A\subset E(G)\atop (A,V(G))\,\mathrm{quasi-tree}}
\bigg(\frac{\theta}{2}\bigg)^{|A|\!-\!|V(G)|\!+\!1}\bigg\{\prod_{e\notin A}\alpha_{e}\bigg\}\label{defU},
\end{equation}
where a quasi-tree is a ribbon graph whose boundary is connected \footnote{a connected ribbon graph with a single face, in the quantum field theory terminology.}.
\end{thm}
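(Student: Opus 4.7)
The plan is to compute the heat-kernel amplitude as the $\Omega\to 0$ limit of the parametric representation for $\widehat{G}_{i_{0}}$, and to extract the leading behaviour of $\mathrm{HU}_{\widehat{G}_{i_{0}}}$ in this limit. First I invoke Proposition~\ref{amplitudeGhat}: since the Mehler kernel converges pointwise to the heat kernel as $\Omega\to 0$ and the integration in \eqref{i0removed} is already regularized by fixing $y_{i_{0}}=0$, taking the limit inside the integral yields $\mathcal{A}_{G}^{\text{heat kernel}}=\lim_{\Omega\to 0}\mathcal{A}_{\widehat{G}_{i_{0}}}(\Omega,0)$. Since $\widehat{G}_{i_{0}}$ has $e(G)$ edges, $v(G)+1$ vertices and $2$ flags, so that $e+f-v=e(G)-v(G)+1$, and since $\mathrm{HV}_{\widehat{G}_{i_{0}}}(\Omega,t,0)=0$, Theorem/Definition~\ref{defHUHV} gives
\begin{equation*}
\mathcal{A}_{\widehat{G}_{i_{0}}}(\Omega,0)=\int\prod_{e}d\alpha_{e}\Biggl[\frac{4\prod_{e}\Omega_{e}(1-t_{e}^{2})}{(2\pi\theta)^{e(G)-v(G)+1}\,\mathrm{HU}_{\widehat{G}_{i_{0}}}(\Omega,t)}\Biggr]^{D/2}.
\end{equation*}

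Next I extract the leading $\Omega\to 0$ behaviour of $\mathrm{HU}_{\widehat{G}_{i_{0}}}$. Inserting $t_{e}=2\Omega_{e}\alpha_{e}/\theta+O(\Omega_{e}^{3})$ into the combinatorial expansion of Theorem~\ref{HU=Q}, each admissible monomial acquires total $\Omega$-degree $|E|+2|B|$, so the minimum $|E|$ is reached precisely on terms with $B=\varnothing$ and
\begin{equation*}
\mathrm{HU}_{\widehat{G}_{i_{0}}}(\Omega,t)=\prod_{e}\Omega_{e}\cdot\widetilde{U}(\alpha,\theta)+O(\Omega^{|E|+2}),\qquad \widetilde{U}(\alpha,\theta)=\sum_{A}2^{v(\widehat{G}_{i_{0}}^{\,A})}\prod_{e\in A^{c}}\frac{2\alpha_{e}}{\theta},
\end{equation*}
where $A\subset E(\widehat{G}_{i_{0}})$ is required to be admissible with $B=\varnothing$, i.e.\ every vertex of the partial dual $\widehat{G}_{i_{0}}^{\,A}$ must carry an odd number of the two flags of $\widehat{G}_{i_{0}}$.

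The heart of the proof is the combinatorial identification of these admissible $A$ with the spanning quasi-trees of $G$, under the canonical bijection $E(\widehat{G}_{i_{0}})\simeq E(G)$. With only two flags, admissibility forces $v(\widehat{G}_{i_{0}}^{\,A})=2$ with exactly one flag per vertex; using the standard identification of the vertices of a partial dual with the boundary components of the corresponding spanning subgraph, this amounts to requiring that $F_{A}\subset\widehat{G}_{i_{0}}$ has exactly two boundary components on which the two flags of $\widehat{G}_{i_{0}}$ are separated. For $e_{0}\notin A$, $F_{A}\subset\widehat{G}_{i_{0}}$ is the disjoint union of $F_{A}\subset G$ and the isolated flagged vertex $w$, so the condition reads precisely "$F_{A}\subset G$ has a single boundary component". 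For $e_{0}\in A$, contracting the extra edge in $F_{A}\subset\widehat{G}_{i_{0}}$ produces $(F_{A}\subset G)\cut e_{0}$ without changing boundary topology or flag distribution, and a short boundary-tracing case analysis (non-separating versus separating behaviour of $e_{0}$ with respect to the boundary of $F_{A}\subset G$) yields the same equivalence. This bijection is the main obstacle; an alternative, less geometric route would be to check that $\widetilde{U}$ and the RHS of \eqref{defU} obey matching deletion/contraction reduction relations on non-loop edges with compatible initial data, using Theorem~\ref{reductionthm}.

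Once the bijection is in place, writing $|A|=v(G)-1+2g(A)$ and $|A^{c}|=e(G)-v(G)+1-2g(A)$ for a spanning quasi-tree of genus $g(A)$, and using $v(\widehat{G}_{i_{0}}^{\,A})=2$, a direct rearrangement gives
\begin{equation*}
\widetilde{U}(\alpha,\theta)=4\,(2/\theta)^{e(G)-v(G)+1}\,U_{G}(\alpha,\theta).
\end{equation*}
Substituting back, the factors $\prod_{e}\Omega_{e}$ and $4$ cancel and $(2\pi\theta)^{e-v+1}(2/\theta)^{e-v+1}=(4\pi)^{e-v+1}$, yielding the announced formula for $\mathcal{A}_{G}^{\text{heat kernel}}$.
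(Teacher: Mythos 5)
Your proposal is correct and follows essentially the same route as the paper: pass to $\widehat{G}_{i_{0}}$ via Proposition~\ref{amplitudeGhat}, keep only the $B=\emptyset$ terms of the expansion of Theorem~\ref{HU=Q} in the limit $\Omega\rightarrow 0$, and identify the surviving admissible sets $A$ with the spanning quasi-trees of $G$, with identical bookkeeping of the powers of $2$ and $\theta$. The one place where you are lighter than the paper is the case $e_{0}\in A$ of that bijection, which you leave to a ``short boundary-tracing case analysis'' while the paper settles it algebraically using $(G\vee e)^{\star}=(G^{\star})^{\{e\}}\vee e$ together with the fact that the natural dual of a quasi-tree has a single vertex, so that the marked edge becomes a loop whose partial dual separates the two flags.
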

\begin{proof}
  Using theorem \ref{defHUHV}, we can express ${\cal
    A}_{\widehat{G}_{i_{0}}}(\Omega,0)$ as
  \begin{equation} {\cal A}_{\widehat{G}_{i_{0}}}(\Omega,0)=\int
    {\textstyle \prod_{e}\!d\alpha_{e}}\,
    \left[\frac{2^{f(\widehat{G}_{i_{0}})}\prod_{e}\Omega_{e}(1-t_{e}^{2})}
      {(2\pi\theta)^{e(\widehat{G}_{i_{0}})+f(\widehat{G}_{i_{0}})-v(\widehat{G}_{i_{0}})}\mathrm{HU}_{\widehat{G}_{i_{0}}}(\Omega,t)},
    \right]^{D/2}
  \end{equation}
  since the variables attached to the flags vanish. Then, using
  proposition \ref{amplitudeGhat}, we take the Mehler kernel limit
  $\Omega\rightarrow 0$ and get
  \begin{align}
    \lim_{\Omega\rightarrow 0}{\cal
      A}_{\widehat{G}_{i_{0}}}(\Omega,0)={\cal A}_{G}^{\mbox{\tiny
        heat kernel}}\cr =\int \prod_{e}d\alpha_{e}\left[
      \frac{1}{(4\pi)^{e(G)-v(G)+1}\mathrm{U}_{G}(\alpha,\theta)}\right]^{\frac{D}{2}},
  \end{align}
  with
  \begin{equation}
    \mathrm{U}_{G}(\alpha,\theta)=\bigg(\frac{\theta}{2}\bigg)^{e(G)\!-\!v(G)\!+\!1}\lim_{\Omega\rightarrow0}
    \frac{\mathrm{HU}_{\widehat{G}_{i_{0}}}(\Omega,t)}{4\prod_{e}\Omega_{e}}
  \end{equation}
  and $t_{e}=\tanh\frac{2\Omega_{e}\alpha_{e}}{\theta}$. To express
  this limit in terms of quasi-trees, recall that theorem \ref{HU=Q}
  shows that
  \begin{equation}
    \frac{\mathrm{HU}_{\widehat{G}_{i_{0}}}(\Omega,t)}{\prod_{e}\Omega_{e}}=
    \sum_{A,B\subset E(\widehat{G}_{i_{0}})\atop\mathrm{admissible}}\left\{2^{V(\widehat{G}_{i_{0}}^{A})}
      \Big(\prod_{e\in A^{c}\cap B^{c}}\frac{t_{e}}{\Omega_{e}}\Big)
      \Big(\prod_{e\in A^{c}\cap B}t_{e}\Omega_{e}\Big)
      \Big(\prod_{e\in A\cap B}t_{e}^{2}\Big)\right\}.
  \end{equation}
  In the limit $\Omega\rightarrow 0$ with
  $t_{e}=\tanh\frac{2\Omega_{e}\alpha_{e}}{\theta}$, only those terms
  with $B=\emptyset$ do not vanish. Accordingly
  \begin{equation}
    \lim_{\Omega\rightarrow 0}\frac{\mathrm{HU}_{\widehat{G}_{i_{0}}}(\Omega,t)}{\prod_{e}\Omega_{e}}=
    \sum_{A\subset E(G)\atop(A,\emptyset)\mathrm{admissible}}\left\{2^{V(\widehat{G}_{i_{0}}^{A})}
      \Big(\prod_{e\in A^{c}}\frac{\alpha_{e}}{\theta}\Big)\right\}.
  \end{equation}
  Next, notice that $(A,\emptyset)$ is admissible if and only if the
  boundary of $(A,V(\widehat{G}_{i_{0}}))$ has two connected
  components, each carrying one of the flags. To conclude, we need the
  following lemma.
  \begin{lemma}
  The natural bijection between the edges of $G$ and of
  $\widehat{G}_{i_{0}}$ induces a bijection
  \begin{equation}
    \big(A,V(G)\big)\mapsto \big(A,V(\widehat{G}_{i_{0}})\big),
  \end{equation}
  between spanning quasi-trees of $G$ and spanning subgraphs of
  $\widehat{G}_{i_{0}}$ whose boundary has two components, each
  carrying one flag.
\end{lemma}
\begin{proof}
  In $\widehat{G}_{i_{0}}$, let us call $v$ the additional vertex, as
  in figure \ref{fig:transformedEdge}. The set $\cQ_{G}$ of spanning quasi-trees
  in $G$ is the union of two disjoint subsets, respectively
  $\cQ_{G,e}$ and $\cQ_{G}^{e}$, who contain or do not contain
  $e$. Let $Q\in\cQ_{G}^{e}$. By definition, $e\notin E(Q)$. In
  $\widehat{G}_{i_{0}}$, $v$ being connected to the rest of the graph
  only be $e$, the subgraph $F_{E(Q)}\subset \widehat{G}_{i_{0}}$ has
  obviously two boundaries: the boundary of $v$ and its flag $j_{0}$,
  and the boundary of its other component, which is a quasi-tree. On
  the contrary, let $F\subset \widehat{G}_{i_{0}}$ be a subgraph with
  two boundaries, each of which bearing a flag and such that $e\notin
  E(F)$. Then, one boundary of $F$ is the boundary of $v$ and its
  flag. The rest of the graph $F$ has thus only one boundary and is
  therefore a quasi-tree: $F\subset G$ is a quasi-tree.

  Let us now consider the case of subgraphs, which do contain
  $e$ as an edge. First of all, notice that the subgraphs of
  $\widehat{G}_{i_{0}}$ which contain $e$ are in one-to-one
  correspondence with the subgraphs of $G\cut e$ and that this map is
  also a bijection on the subgraphs with two boundaries, each of which
  bears a flag. So we are going to prove that $\cQ_{G,e}$ is in
  one-to-one correspondence with the spanning subgraphs of $G\cut e$
  with two boundaries, one flag per boundary.

  For any ribbon graph with flags $G$ and any $e\in E(G)$, $(G\cut
  e)^{\star}=(G^{\star})^{\set e}\cut e$. Let $Q\in\cQ_{G,e}$. Its
  dual $Q^{\star}$ is a one-vertex ribbon graph. The edge $e$ is a
  loop in $Q^{\star}$ which implies that $(Q^{\star})^{\set e}\cut
  e=(Q\cut e)^{\star}$ has two vertices, each of which bears a
  flag. It is exactly the dual of a subgraph of $G\cut e$ with two
  boundaries and one flag per boundary.

  On the contrary, let $F\subset G\cut e$ be a subgraph with two
  faces, one flag per face. Its dual has two vertices and one flag per
  vertex. To map it to a subgraph of $G^{\star}$, one needs to \emph{uncut}
  $e$ that is glue the two flags together and perform a partial
  duality wrt $e$. This new edge links the two
  vertices of $F$ so that its partial dual has only one vertex. Its
  (natural) dual has therefore one boundary and is then a spanning
  quasi-tree of $G$.
\end{proof}

Therefore, we always have $2^{v(\widehat{G}_{i_{0}}^{A})}=4$ and
\begin{equation}
  \lim_{\Omega\rightarrow0}
  \frac{\mathrm{HU}_{\widehat{G}_{i_{0}}}(\Omega,t)}{4\prod_{e}\Omega_{e}}=
  \sum_{A\subset E(G)\atop (A,V(G))\,\mathrm{quasi-tree}}\bigg\{\prod_{e\notin A}\frac{2\alpha_{e}}{\theta}\bigg\}.
\end{equation}
Finally, \eqref{defU} follows from the factorization of powers of
$\frac{\theta}{2}$.
\end{proof}

\begin{example}[Planar banana and non planar banana]

In the case of the planar and non planar bananas (see examples \ref{planarbananaex} and \ref{nonplanarbananaex}) bananas , let us remove one of the half lines of edge 1. Then,
\begin{align}
\mathrm{HU}_{\widehat{\mbox{\tiny planar}\atop \mbox{\tiny 3-banana}}}(\Omega,t)=\hskip3cm{}\cr
	4\Omega_{1}(1+t_{1}^{2})\big[\Omega_{2}\Omega_{3}(t_{2}^{2}+t_{3}^{2})+t_{2}t_{3}(1+\Omega_{2}^{2}\Omega_{3}^{2})\big]
+4t_{1}\big[t_{2}(1+t_{3})^{2}+t_{3}(1+t_{2}^{2})\big]
\end{align}
and
\begin{align}
\mathrm{HU}_{\widehat{\mbox{\tiny non planar}\atop \mbox{\tiny 3-banana}}}(\Omega,t)=\hskip3cm{}\cr
	4\Omega_{1}(1+t_{1}^{2})\big[\Omega_{2}\Omega_{3}(1+t_{2}^{2}t_{3}^{2})+t_{2}t_{3}(1+\Omega_{2}^{2}\Omega_{3}^{2})\big]
+4t_{1}\big[t_{2}(1+t_{3})^{2}+t_{3}(1+t_{2}^{2})\big],
\end{align}
from which we deduce
\begin{align}
\mathrm{U}_{\mbox{\tiny planar}\atop \mbox{\tiny 3-banana}}(\alpha,\theta)=
\alpha_{1}\alpha_{2}+\alpha_{1}\alpha_{3}+\alpha_{2}\alpha_{3}\label{planarU}
\end{align}
and
\begin{align}
\mathrm{U}_{\mbox{\tiny non planar}\atop \mbox{\tiny 3-banana}}(\alpha,\theta)=
\alpha_{1}\alpha_{2}+\alpha_{1}\alpha_{3}+\alpha_{2}\alpha_{3}+\bigg(\frac{\theta}{2}\bigg)^{2}.\label{nonplanarU}
\end{align}
All the terms in \eqref{planarU} and the first three terms in \eqref{nonplanarU} correspond to the spanning trees. The last term in \eqref{nonplanarU} is the quasi-tree made of all edges.
\end{example}

In fact, $\mathrm{U}_{G}$ is an evaluation of the multivariate  Bollob\'as-Riordan polynomial ${\cal Z}(a,q,c)$
\begin{equation}
\mathrm{U}_{G}(\alpha,\theta)=\bigg(\frac{\theta}{2}\bigg)^{e(G)\!-\!v(G)\!+\!1}\lim_{c\rightarrow0}c^{-1}{\cal Z}_{G}(\frac{2\alpha}{\theta},1,c).
\end{equation}
Equivalently, it can be expressed in terms of the polynomial ${\cal Q}$ as
\begin{equation} 
\mathrm{U}_{G}(\alpha,\theta)=\bigg(\frac{\theta}{2}\bigg)^{e(G)\!-\!v(G)\!+\!1}{\cal Q}_{G}(\frac{2\alpha}{\theta},1,0,0,r)\label{U=Q},
\end{equation}
with $r_{1}=1$ and $r_{n}=0$  for $n\neq 1$
This suggest that $\mathrm{U}_{G}$ has a natural transformation under partial duality.

\begin{cor}\label{dualityU}
For any $A\subset E(G)$, the first Symanzik polynomial transforms under partial duality as
\begin{equation}
\mathrm{U}_{G^{A}}(\alpha,\theta)=\bigg(\frac{\theta}{2}\bigg)^{v(G)-v(G^{A})}
\bigg(\prod_{e\in A}\frac{2\alpha_{e}}{\theta}\bigg)\,\mathrm{U}_{G}(\alpha_{A},\theta),
\end{equation}
with $\left[\alpha_{A}\right]_{e}=\frac{\theta^{2}}{4\alpha_{e}}$ if $e\in A$ and $\left[\alpha_{A}\right]_{e}=\alpha_{e}$ if $e\notin A$.
\end{cor}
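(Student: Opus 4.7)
The plan is to reduce the statement to a routine consequence of the partial duality of the universal polynomial $\cQ$ (Corollary~\ref{cor:PartialDuality}) together with the evaluation formula \eqref{U=Q}. The only genuine computation is tracking the scaling factors correctly; there is no new combinatorial content beyond what is already proved.

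First I would start from equation \eqref{U=Q} applied to $G^{A}$,
\begin{equation}
\mathrm{U}_{G^{A}}(\alpha,\theta)=\Big(\tfrac{\theta}{2}\Big)^{e(G^{A})-v(G^{A})+1}\,\cQ_{G^{A}}\big(\tfrac{2\alpha}{\theta},1,0,0,r\big),
\end{equation}
and use that $e(G^{A})=e(G)$ since partial duality acts bijectively on the edge set. Now I would invoke Corollary~\ref{cor:PartialDuality}, which expresses $\cQ_{G^{A}}$ in terms of $\cQ_{G}$ by swapping, on the edges of $A$, the variables in the pairs $(\bx,\by)$ and $(\bz,\bw)$. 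Since the specialization sets $\bz=\bw=0$, the $(\bz,\bw)$-swap is invisible, and I am left with $\cQ_{G}(\tilde\bx,\tilde\by,0,0,r)$ where $\tilde x_{e}=\frac{2\alpha_{e}}{\theta},\tilde y_{e}=1$ for $e\notin A$ and $\tilde x_{e}=1,\tilde y_{e}=\frac{2\alpha_{e}}{\theta}$ for $e\in A$.

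Next I would use the elementary factorization
\begin{equation}
\tilde x^{A'^{c}}\tilde y^{A'}=\Big(\prod_{e}\tilde y_{e}\Big)\Big(\tilde x/\tilde y\Big)^{A'^{c}}
\end{equation}
inside the $\cQ$-sum, which is legitimate because $\bz=\bw=0$ eliminates the $B$-dependence through $z,w$. This gives
\begin{equation}
\cQ_{G^{A}}\big(\tfrac{2\alpha}{\theta},1,0,0,r\big)=\Big(\prod_{e\in A}\tfrac{2\alpha_{e}}{\theta}\Big)\,\cQ_{G}\big(\tfrac{2\alpha_{A}}{\theta},1,0,0,r\big),
\end{equation}
since $\tilde x_{e}/\tilde y_{e}=\frac{2\alpha_{e}}{\theta}$ for $e\notin A$ and $\tilde x_{e}/\tilde y_{e}=\frac{\theta}{2\alpha_{e}}=\frac{2[\alpha_{A}]_{e}}{\theta}$ for $e\in A$, which is precisely the definition of $\alpha_{A}$ in the statement.

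Finally I would substitute the expression for $\cQ_{G}(\tfrac{2\alpha_{A}}{\theta},1,0,0,r)$ back using \eqref{U=Q} to obtain
\begin{equation}
\mathrm{U}_{G^{A}}(\alpha,\theta)=\Big(\tfrac{\theta}{2}\Big)^{v(G)-v(G^{A})}\Big(\prod_{e\in A}\tfrac{2\alpha_{e}}{\theta}\Big)\,\mathrm{U}_{G}(\alpha_{A},\theta),
\end{equation}
after collecting the powers of $\theta/2$ coming from the two prefactors $(\theta/2)^{e(G)-v(G^{A})+1}$ and $(\theta/2)^{-(e(G)-v(G)+1)}$. The only sanity check needed is that the factor $\prod_{e\in A}\tilde y_{e}$ really equals $\prod_{e\in A}\frac{2\alpha_{e}}{\theta}$ and not its inverse; this is fixed by the asymmetric role of $A$ and $A^{c}$ under the $\cQ$-duality, which places the new $\bx$-variable on $A^{c}$ and the new $\by$-variable on $A$. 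No step is particularly difficult: the whole argument is a one-line application of Corollary~\ref{cor:PartialDuality} once the bookkeeping of prefactors is set up.
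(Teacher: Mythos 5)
Your proposal is correct and follows essentially the same route as the paper: apply \eqref{U=Q} to $G^{A}$, invoke the partial duality of $\cQ$ (with the $(\bz,\bw)$-swap trivialized by the specialization $\bz=\bw=0$), factor $\prod_{e\in A}\frac{2\alpha_{e}}{\theta}$ out of the monomials to recognize the evaluation $\cQ_{G}(\frac{2\alpha_{A}}{\theta},1,0,0,r)$, and revert to $\mathrm{U}_{G}$. Your ``elementary factorization'' step is exactly the paper's multiply-and-divide manipulation inside the quasi-tree sum, and the bookkeeping of the powers of $\theta/2$ matches.
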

\begin{proof}
  First write \eqref{U=Q} as
  \begin{equation}
    \bigg(\frac{2}{\theta}\bigg)^{e(G^{A})\!-\!v(G^{A})\!+\!1}\mathrm{U}_{G^{A}}(\alpha,\theta)={\cal Q}_{G^{A}}(x,y,0,0,r),
  \end{equation}
  with $x_{e}= \frac{2\alpha_{e}}{\theta}$ and $y_{e}=1$. Then,
  partial duality for ${\cal Q}$ reads
  \begin{equation} {\cal Q}_{G^{A}}(x,y,0,0,r)={\cal
      Q}_{G}(x',y',0,0,r)
  \end{equation}
  with $x'_{e}=1$ and $y'_{e}= \frac{2\alpha_{e}}{\theta}$ for $e\in
  A$ and $x'_{e}= \frac{2\alpha_{e}}{\theta}$ and $y'_{e}=1$ for
  $e\notin A$. Next, we expand
  \begin{align} {\cal Q}_{G}(x',y',0,0,r)=\sum_{A'\subset E(G)\atop
      (A',V(G))\,\mbox{\tiny quasi-tree}}\left\{ \Big(\prod_{e\in
        A'^{c}\cap A^{c}}\frac{2\alpha_{e}}{\theta}\Big)
      \Big(\prod_{e\in A'\cap
        A}\frac{2\alpha_{e}}{\theta}\Big)\right\}\cr =\sum_{A'\subset
      E(G)\atop (A',V(G))\,\mbox{\tiny quasi-tree}}\left\{
      \Big(\prod_{e\in A'^{c}\cap
        A^{c}}\frac{2\alpha_{e}}{\theta}\Big) \Big(\prod_{e\in
        A'^{c}\cap
        A}\frac{2\alpha_{e}}{\theta}\frac{\theta}{2\alpha_{e}}\Big)
      \Big(\prod_{e\in A'\cap
        A^{c}}\frac{2\alpha_{e}}{\theta}\frac{\theta}{2\alpha_{e}}\Big)
      \Big(\prod_{e\in A'\cap
        A}\frac{2\alpha_{e}}{\theta}\Big)\right\}\cr =\Big(\prod_{e\in
      A}\frac{2\alpha_{e}}{\theta}\Big)\sum_{A'\subset E(G)\atop
      (A',V(G))\,\mbox{\tiny quasi-tree}}\left\{ \Big(\prod_{e\in
        A'^{c}\cap A^{c}}\frac{2\alpha_{e}}{\theta}\Big)
      \Big(\prod_{e\in A'^{c}\cap A}\frac{\theta}{2\alpha_{e}}\Big)
    \right\} ={\cal Q}_{G}(x'',y'',0,0,r),
  \end{align}
  with $x''_{e}=\frac{\theta}{2\alpha_{e}}$ for $e\in A$ and $x''_{e}=
  \frac{2\alpha_{e}}{\theta}$ for $e\notin A$ and $y''_{e}=1$ for all
  $e$ .  Reverting to the Symanzik polynomials $\mathrm{U}_{G}$ and
  $\mathrm{U}_{G^{A}}$, we get the announced result.
\end{proof}

\begin{example}[Non-planar double tadpole in the heat kernel theory]
The partial dual of a cycle of length 2 with respect to one of its edge  is the non-planar double tadpole (see example \ref{doubletadpole}).  For a cycle of length two, we have a sum over 2 spanning trees
\begin{equation}
\mathrm{U}_{\mbox{\tiny cycle with}\atop
\mbox{\tiny 2 edges}}(\alpha_{1},\alpha_{2},\theta)=\alpha_{1}+\alpha_{2},
\end{equation} 
from which we deduce, using partial duality,
\begin{equation}
\mathrm{U}_{\mbox{\tiny non-planar}\atop
\mbox{\tiny double tadpole}}(\alpha_{1},\alpha_{2},\theta)=\alpha_{1}\alpha_{2}+\bigg(\frac{\theta}{2}\bigg)^{2}.
\end{equation} 
\end{example}

Finally, in the commutative limit $\theta\rightarrow0$ we recover the well known expression of the first Symanzik polynomial as a sum over spanning trees.

\begin{cor}
\begin{equation}
\lim_{\theta\rightarrow 0}\mathrm{U}_{G}(\alpha,\theta)=
\sum_{A\subset E(G)\atop (A,V(G))\,\mathrm{tree}}\bigg\{\prod_{e\notin A}\alpha_{e}\bigg\}.
\end{equation}
\end{cor}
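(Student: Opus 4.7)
The plan is to pass to the limit directly in the formula
\begin{equation*}
\mathrm{U}_{G}(\alpha,\theta)=\sum_{A\subset E(G)\atop (A,V(G))\,\mathrm{quasi\text{-}tree}}
\bigg(\frac{\theta}{2}\bigg)^{|A|-|V(G)|+1}\bigg\{\prod_{e\notin A}\alpha_{e}\bigg\},
\end{equation*}
which was established in the preceding theorem. Since $\alpha_{e}$ and the other factors do not depend on $\theta$, only the power of $\theta/2$ needs to be analysed, and the corollary will follow at once once I identify which quasi-trees contribute a non-vanishing term in the limit.

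First I would check that the exponent $|A|-|V(G)|+1$ is non-negative on every spanning quasi-tree. By definition, a quasi-tree is a connected ribbon graph with a single boundary component, hence in particular connected as an ordinary graph, so any spanning quasi-tree $(A,V(G))$ satisfies $|A|\geq |V(G)|-1$. Therefore the exponent is $\geq 0$, with equality precisely when $|A|=|V(G)|-1$.

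Next I would argue that equality singles out the spanning trees. If $(A,V(G))$ is a connected spanning subgraph with $|A|=|V(G)|-1$ edges, it has no cycle, hence is a tree in the ordinary (non-ribbon) sense. Conversely, any spanning tree of $G$ is a connected ribbon subgraph with one face, thus a quasi-tree with $|A|=|V(G)|-1$. One can also see this via Euler's formula: a connected ribbon graph with $v$ vertices, $e$ edges, one face has $v-e+1=2-2g$, so $e=v-1+2g$, and $e=v-1$ forces $g=0$ together with the no-cycle condition, recovering the usual definition of a tree.

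Finally I would take $\theta\to 0$ term by term: the factor $(\theta/2)^{|A|-|V(G)|+1}$ tends to $1$ when $(A,V(G))$ is a spanning tree and to $0$ on every strictly quasi-tree subgraph, yielding
\begin{equation*}
\lim_{\theta\to 0}\mathrm{U}_{G}(\alpha,\theta)=\sum_{A\subset E(G)\atop (A,V(G))\,\mathrm{tree}}\prod_{e\notin A}\alpha_{e},
\end{equation*}
as announced. The sum over spanning quasi-trees is finite, so the interchange of limit and summation is unproblematic; there is essentially no hard step, the only thing to verify carefully being the elementary combinatorial fact that, among quasi-trees, the spanning trees are exactly those of minimal edge-count $|V(G)|-1$.
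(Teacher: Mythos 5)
Your proposal is correct and follows essentially the same route as the paper, which simply observes that in the limit $\theta\rightarrow 0$ only the subsets $A$ with $|A|-|V(G)|+1=0$ survive in the quasi-tree expansion, and that this condition characterizes spanning trees. You merely spell out the elementary verification (non-negativity of the exponent on quasi-trees and the identification of the minimal case with spanning trees) that the paper leaves implicit.
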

\begin{proof}
  In this limit, only the subsets $A$ such that $|A|-|V|+1=0$
  contribute to \eqref{defU}. This condition characterizes spanning
  trees.
\end{proof}

\subsection{The commutative Mehler kernel limit \texorpdfstring{$\theta\rightarrow 0$}{}}

In this section, we derive a combinatorial formula for the first hyperbolic polynomial in the commutative limit $\theta\rightarrow 0$ in terms of trees and unicyclic graphs. First of all, to recover a commutative quantum field theory with the Mehler kernel corresponding to an harmonic oscillator of frequency $\Omega$ instead of $\widetilde{\Omega}=\frac{2\Omega}{\theta}$ we have to substitute $\Omega\rightarrow\frac{\theta\Omega}{2}$ in \eqref{amplitude}.  

In order to simplify the analysis, we restrict ourselves to graphs without flags\footnote{Otherwise there are extra powers of $\theta$ on the external corners that arise from Dirac distribution on the flags, as we have seen on the examples in section \ref{second}.}. For such a graph, the commutative limit  of the amplitude reads (see proposition \ref{commlimrop})
\begin{equation}
\lim_{\theta\rightarrow 0}
{\cal A}_{G}({\textstyle \frac{\theta\Omega}{2}})=
\lim_{\theta\rightarrow 0}\int \prod_{e}d\alpha_{e}\, \left[\frac{\prod_{e}\Omega_{e}(1-t_{e}^{2})}
{(4\pi)^{e(G)}(2\pi\theta)^{-v(G)}\mathrm{HU}_{G}\big(\frac{\theta\Omega}{2},t\big)}
\right]^{\frac{D}{2}}=
{\cal A}^{\mbox{\tiny commutative}}_{\underline{G}}(\Omega).
\end{equation}
In the limit $\theta\rightarrow 0$, the only terms that survive in $\theta^{-v(G)}\mathrm{HU}_{G}(\frac{\Omega\theta}{2},t)$ are associated with subgraphs of $G$ having at most one cycle per connected component.

\begin{prop}
For a ribbon graph $G$ without flag,
\begin{multline}
\lim_{\theta\rightarrow0}\theta^{-v(G)}\mathrm{HU}_{G}\big(\frac{\theta\Omega}{2},t\big)
=\cr\sum_{A'\subset E(G)\,\mathrm{s.t.}\,
(A',V(G))\atop\mathrm{commutative\, admissible}}\Big\{\prod_{e\in E(G)-A'}t_{e}
\prod_{{K\,\mathrm{connected\,components}}\atop\mathrm{of}\,(A',V(G))}{\cal W}_{K}(\Omega,t)\Big\},\label{commlim}
\end{multline}
where a spanning subgraph is commutative admissible if its connected components are trees (with a least one edge) and unicyclic graphs (i.e. connected graphs with a single cycle). If  $K$ is a tree $T$, its weight is
\begin{equation}
{\cal W}_{T}(\Omega,t)=2^{1-|T|}\sum_{t\in T}\Big\{\;\;\Omega_{e}^{2}t_{e}\!\!\prod_{e'\in T-\{e\}}\Omega_{e'}(1+t_{e'}^{2})\Big\}\label{tree1}
\end{equation}
and if $K$ is a unicylic graph $U$ with cycle edges $C$, its weight is
\begin{equation}
{\cal W}_{U}(\Omega,t)=2^{2-|U|}\label{unicyclic}
\sum_{C'\subset C\atop
|C'|\,\mathrm{odd}}\Big\{
\prod_{e\in C'}\Omega_{e}t_{e}^{2}
\prod_{e'\in U-C}\Omega_{e'}(1+t_{e'}^{2})\Big\}.
\end{equation}
\end{prop}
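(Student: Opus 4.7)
The plan is to begin from the combinatorial expansion of $\mathrm{HU}_G$ given by Theorem \ref{HU=Q} and then to carefully control the powers of $\theta$ under the substitution $\Omega_e \to \theta \Omega_e/2$. Each monomial in that expansion carries a number of $\Omega$-factors equal to $|A|+2|A^{c}\cap B|$ (one $\Omega$ per edge in $A$, two per edge in $A^{c}\cap B$, and none on the edges of $A^{c}\cap B^{c}$), so after dividing by $\theta^{v(G)}$ only those admissible pairs saturating $|A|+2|A^{c}\cap B|=v(G)$ survive in the limit.

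The next step is to encode each surviving pair by the spanning subgraph $A' := A\cup B\subset E(G)$ and to show that its connected components are forced to be trees or unicyclic graphs. Writing $d_{K} := |A^{c}\cap B\cap E(K)|$ for each component $K$ and using $E(K)\subset A'$, one obtains the identity $|A\cap E(K)|+d_{K}=|E(K)|$; combining it with the global saturation and with the Euler relation $|E(K)|=v(K)+\mu(K)-1$ yields $\sum_{K}(1-\mu(K)-d_{K})+N_{\mathrm{iso}}=0$, where $\mu(K)$ is the cyclomatic number and $N_{\mathrm{iso}}$ counts isolated-vertex components. The positivity of $d_{K}$ and of $\mu(K)$, together with admissibility (which forbids isolated vertices and higher-cyclomatic components, as explained below), force each $K$ to be either a tree with $d_{T}=1$ or a unicyclic graph with $d_{U}=0$; admissibility of a unicyclic component further imposes that $|B\cap C|$ be odd, where $C$ is its cycle.

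The resummation is then component-by-component. Edges outside $A'$ contribute the factor $\prod_{e\in E(G)-A'}t_{e}$. Within a tree $T$, one sums over the choice of the distinguished edge $e^{*}\in A^{c}\cap B$ (contributing $\Omega_{e^{*}}^{2}t_{e^{*}}$) and then freely over the $B$-membership of the remaining edges of $T$ (which lie in $A$, each summing to $\Omega_{e'}+\Omega_{e'}t_{e'}^{2}$), reproducing the expression $\sum_{e^{*}\in T}\Omega_{e^{*}}^{2}t_{e^{*}}\prod_{e'\in T-\{e^{*}\}}\Omega_{e'}(1+t_{e'}^{2})$. Within a unicyclic $U$ with cycle $C$, all edges lie in $A$; the non-cycle edges sum freely to $\prod_{e''\in U-C}\Omega_{e''}(1+t_{e''}^{2})$, while the cycle edges give the restricted sum $\sum_{C'\subset C,\,|C'|\text{ odd}}\prod_{e\in C'}\Omega_{e}t_{e}^{2}\prod_{e'\in C-C'}\Omega_{e'}$. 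Finally the overall factor $2^{v(G^{A})}/2^{|A|+2|A^{c}\cap B|}$ collapses on the saturation locus to $2^{v(G^{A})-v(G)}$, which by a boundary-counting argument factorizes over components of $A'$ as $\prod_{T}2^{1-|T|}\prod_{U}2^{2-|U|}$, matching the announced prefactors of $\mathcal{W}_{T}$ and $\mathcal{W}_{U}$.

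The principal difficulty, as I foresee it, lies in excluding connected components of $(A',V(G))$ that are isolated vertices or graphs of cyclomatic number $\mu\geq 2$. Isolated vertices of $A'$ give rise, in the reduced graph, to vertices carrying no flags whatsoever, which is an even count and hence forbidden by admissibility when $G$ has no flags. For components with $\mu(K)\geq 2$ the argument is more delicate: one shows, using the computation of $v(G^{A})$ via the boundary components of the cut subgraph $\check{F}_{A}$ restricted to $K$, that any admissible choice of $(A\cap E(K), B\cap E(K))$ produces a strict inequality $|A\cap E(K)|+2|A^{c}\cap B\cap E(K)|>v(K)$, so that the saturation condition cannot be met on such a component. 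This Euler-type lemma on each component of $A'$ is the technical heart of the proof; once it is in place the remainder of the argument is an essentially mechanical resummation.
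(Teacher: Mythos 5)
Your overall architecture --- power counting under $\Omega_{e}\to\theta\Omega_{e}/2$, the spanning subgraph $A'=A\cup B$, the identity $\sum_{K}(1-\mu(K)-d_{K})+N_{\mathrm{iso}}=0$, and the component-by-component resummation of the weights and of the powers of $2$ --- is the same as the paper's and those parts are sound. The gap is in the middle step. The saturation condition is \emph{global}, so a component with $\mu(K)+d_{K}\geq 2$ is excluded only if no other component can run a \emph{deficit}, i.e.\ satisfy $|A\cap E(K')|+2d_{K'}<v(K')$. By Euler's formula $|A\cap E(K)|+2d_{K}=v(K)+\mu(K)+d_{K}-1$, so the deficit components are exactly the tree components with $d=0$, i.e.\ trees all of whose edges lie in $A$ and which meet no edge of $A^{c}\cap B$. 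You rule out the edgeless ones (isolated vertices) but say nothing about such trees with at least one edge; without excluding them, your identity admits configurations such as a tree with $d=0$ compensating a tree with $d=2$, and the claimed dichotomy ``tree with $d=1$ or unicyclic with $d=0$'' does not follow. Relatedly, the step you single out as the technical heart --- the strict inequality for $\mu(K)\geq 2$ --- is immediate from the Euler relation above, needs no admissibility or boundary counting, and by itself does not close the argument.

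The missing ingredient is precisely the crux of the paper's proof: if $K$ is a tree with all edges in $A$ and $d_{K}=0$, then in the partial dual $G^{A}$ it collapses to a single vertex carrying only loops (the edges of $K$); cutting those loops that lie in $A\cap B$ produces flags in pairs, while the edges of $A^{c}\cap B^{c}$ touching $K$ are deleted, so that vertex ends up with an even number of flags, contradicting the (odd) admissibility of $(A,B)$. Once this is in place every component satisfies $\mu+d\geq 1$, global saturation forces $\mu+d=1$ on each component, and the remainder of your resummation (including the prefactors $2^{1-|T|}$ and $2^{2-|U|}$) goes through. A side remark: your cycle-edge sum correctly carries the factor $\prod_{e'\in C\setminus C'}\Omega_{e'}$ for the uncut cycle edges, which is absent from the displayed formula \eqref{unicyclic} but is confirmed by the paper's own examples; your version is the right one.
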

\begin{proof}
  First recall that
  \begin{equation}
    \mathrm{HU}_{G}(\Omega,t)=\sum_{A,B\subset E(G)\atop\mathrm{admissible}}2^{V(G^{A})}
    \Big(\prod_{e\in A^{c}\cap B^{c}}t_{e}\Big)
    \Big(\prod_{e\in A^{c}\cap B}t_{e}\Omega_{e}^{2}\Big)
    \Big(\prod_{e\in A\cap B^{c}}\Omega_{e}\Big)
    \Big(\prod_{e\in A\cap B}\Omega_{e}t_{e}^{2}\Big),
  \end{equation}
  with $(A,B)$ admissible if each vertex of the graph obtained from
  $G^{A}$ by cutting the edges in $B$ and removing those in $B^{c}$
  has an even number of flags. After the rescaling $\Omega\rightarrow
  \frac{\Omega\theta}{2}$, only those graphs for which
  \begin{equation}
    |A|+2|A^{c}\cap B|\leq v(G)\label{commineq} 
  \end{equation}
  contribute to the commutative limit \eqref{commlim}. Let
  $A'=A\cup(A^{c}\cap B)$ and let $\left\{K_{n}\right\}$ be the
  connected components of $(A',V(G))$. We first show that each $K_{n}$
  is either a unicyclic graph with no edge in $B$ or a tree with one
  edge in $B$ and then compute its weight.

  Let $A'_{n}$ denote the edge set of $K_{n}$, $V_{n}$ its vertex set
  and $B_{n}=A'_{n}\cap A^{c}\cap B$. Thus \eqref{commineq} can be
  written as a sum over connected components
  \begin{equation}
    \sum_{n}|A'_{n}|-|V_{n}|+|B_{n}|\leq 0.\label{commineq2}
  \end{equation}
  With $(A,B)$ admissible, this implies that for each $n$
  \begin{equation}
    |A'_{n}|-|V_{n}|+|B_{n}|= 0.\label{commeq}
  \end{equation}
  Indeed, if this is not the case, then there is $n_{0}$ such that
  $|A'_{n_{0}}|-|V_{n_{0}}|+|B_{n_{0}}|\neq0$. Without loss of
  generality, we may assume that
  $|A'_{n_{0}}|-|V_{n_{0}}|+|B_{n_{0}}|<0$, since if it is strictly
  positive in one connected component, it has to be strictly negative
  in another one to obey \eqref{commineq2}. Then
  $|A'_{n_{0}}|-|V_{n_{0}}|+1+|B_{n_{0}}|\leq0$, but since
  $|A'_{n_{0}}|-|V_{n_{0}}|+1$ (the dimension of the cycle space of
  $K_{n_{0}}$) and $|B_{n_{0}}|$ are positive, this implies that
  $|A'_{n_{0}}|-|V_{n_{0}}|+1=|B_{n_{0}}|=0$. Therefore, $K_{n_{0}}$
  is a tree and $A'_{n}\cap B\subset A$, which means that all the
  edges of $K_{n_{0}}$ belong to $A$ and no edge in $B\cap A^{c}$ is
  incident to a vertex of $K_{n_{0}}$. In the partial dual $G^{A}$,
  $K_{n_{0}}$ gives rise to a single vertex with loops and the cuts of
  the edges in $B$ always yields an even number of flags since there
  is no edge in $B\cap A^{c}$ incident to this vertex. This is in
  contradiction with the fact that $(A,B)$ is admissible, so that
  \eqref{commeq} holds.

  Let us rewrite \eqref{commeq} as
  \begin{equation}
    |A'_{n}|-|V_{n}|+1+|B_{n}|-1= 0.
  \end{equation}
  Because $|A'_{n}|-|V_{n}|+1\geq 0$, $|B_{n}|\geq 2$ is
  impossible. With $|B_{n}|=1$, we have $|A'_{n}|-|V_{n}|+1=0$ so that
  $K_{n}$ is a tree with a single edge in $B$. For $|B_{n}|=0$, we
  obtain $|A'_{n}|-|V_{n}|+1=1$, so that $K_{n}$ is a unicyclic graph
  with no edge in $B$.

  To compute the weights, let us first note that $E-A'=A^{c}\cap
  B^{c}$, so that the contributions of the connected components
  $K_{n}$ factorize and each $e\in E-A'$ yields a factor of
  $t_{e}$. If $K_{n_{0}}$ is a tree, then the partial duality with
  respect to $A$ yields two vertices with loops attached joined by the
  edge in $B$. Each loop contributes a factor of
  $\frac{\Omega_{e}(1+t_{e}^{2})}{2}$, the edge in $B$
  $\frac{t_{e}\Omega^{2}}{4}$ and there is an additional factor of $4$
  since $k_{n_{0}}$ yields two vertices in $G^{A}$. Summing terms that
  only differ by the position of the edge in $B$ on the tree, we
  obtain \eqref{tree1}. If $K_{n}$ is a unicyclic graph, then in the
  partial dual it becomes two vertices with loops, joined by the cycle
  edges. Each loop contributes a factor of
  $\frac{\Omega_{e}(1+t_{e}^{2})}{2}$ and we cut an odd number of
  cycle edges for $(A,B)$ to be admissible. Finally, this yields two
  vertices in $G^{A}$ so that we have an additional factor of 4. This
  proves \eqref{unicyclic}.
\end{proof}

\begin{example}[Dumbbell]
For the dumbbell graph (see \ref{dumbbellex}), the commutative limit is
\begin{align}
\lim_{\theta\rightarrow 0}\theta^{-2}\mathrm{HU}_{\mbox{\tiny dumbbell}}(\frac{\theta\Omega}{2},t)=
4t_{1}\Omega_{2}t_{2}^{2}\Omega_{3}t_{3}^{2}+4t_{1}\Omega_{1}^{2}t_{2}t_{3}
\hskip2cm{}\cr
+4\Omega_{2}t_{2}^{2}\Omega_{1}(1+t_{1}^{2})t_{3}
+4\Omega_{3}t_{3}^{2}\Omega_{1}(1+t_{1}^{2})t_{2},
\end{align}
which corresponds to the covering by two disjoint cycles, one tree and the two unicycles.
\end{example}

\begin{example}[Planar banana and non planar banana]
For the planar and non planar bananas (see examples \ref{planarbananaex} and \ref{nonplanarbananaex}) bananas , we have
\begin{align}
\lim_{\theta\rightarrow 0}\mathrm{HU}_{\mbox{\tiny planar}\atop \mbox{\tiny 3-banana}}(\frac{2\Omega}{\theta},t)=
\lim_{\theta\rightarrow 0}\mathrm{HU}_{\mbox{\tiny non planar}\atop \mbox{\tiny 3-banana}}(\frac{2\Omega}{\theta},t)=\cr
t_{1}t_{2}t_{3}\big[\Omega_{1}^{2}+\Omega_{2}^{2}+\Omega_{3}^{2}\big]\hskip2.5cm{}\cr
+t_{1}\Omega_{2}\Omega_{3}\big[t_{2}^{2}+t_{3}^{2}\big]
+t_{2}\Omega_{1}\Omega_{3}\big[t_{1}^{2}+t_{3}^{2}\big]
+t_{3}\Omega_{1}\Omega_{2}\big[t_{1}^{2}+t_{2}^{2}\big]
\Big].
\end{align}
The first term corresponds to the contribution of the three spanning trees and the last one to the three cycles with two edges. As expected, there is no difference between the two polynomial since the two graphs only differ by a non cyclic permutation of the half-lines at one of the vertices.
\end{example}

\section*{Conclusion and outlooks}
\addcontentsline{toc}{section}{Conclusion and outlooks}
Motivated by the quest of an explicit combinatorial expression of the polynomial appearing in the parametric expression of the Feynman graph amplitudes of the Grosse-Wulkenhaar model, we have introduced a new topological polynomial for ribbon graphs with flags. This polynomial is a natural extension of the multivariate Bollob\'as-Riordan polynomial, with a reduction relation that involves two additional operations and that preserves the invariance under partial duality. This work raises the following questions.

From a purely mathematical point of view, the Bollob\'as-Riordan polynomial is intimately tied with knot theory. This relation relies on  
its invariance under partial duality so that it is natural to inquire whether our newly introduced polynomial could also be related to knot invariants.

Moreover, graph theoretical techniques have proven instrumental in the
evaluation of some of the Feynman amplitude as multiple z\^eta
functions \cite{Brown2009aa,Brown2009ab}. This may also be the case for Grosse-Wulkenhaar model with special properties expected to occur in the critical case $\Omega=1$. A first step towards a study of the Grosse-Wulkenhaar amplitudes from the point of view of algebraic geometry has already been taken in \cite{Aluffi2008aa}.

Finally, attempts at a quantum theory of gravity based on generalized matrix models yield new graph polynomials, as pioneered in \cite{Gurau2009aa}.

{\footnotesize
\bibliographystyle{fababbrvnat}
\bibliography{biblio-articles,biblio-books}
}

\end{document}